\documentclass[11pt]{article}

\usepackage[T1]{fontenc}
\usepackage{lmodern}
\usepackage[letterpaper,margin=1in]{geometry}
\usepackage{amsmath,amssymb,amsthm}
\usepackage{mathtools}
\usepackage{array,tabularx,booktabs}
\usepackage{enumitem}
\usepackage{algorithm}
\usepackage{float}
\usepackage{algorithmic}
\usepackage{tikz}
\usetikzlibrary{arrows.meta}
\usepackage[section]{placeins}
\usepackage[numbers,sort&compress]{natbib}
\usepackage{aliascnt}
\usepackage{microtype}
\usepackage[colorlinks=true,linkcolor=blue,citecolor=blue,urlcolor=blue]{hyperref}
\usepackage[capitalise,nameinlink,noabbrev]{cleveref}

\hypersetup{
    pdftitle={GSP Autobidding Equilibria with Few Advertisers or Slots},
    pdfauthor={Yiyang Huang, Zhengyang Liu, Yonglei Yan, Zihe Wang}
}
\theoremstyle{plain}
\newtheorem{theorem}{Theorem}[section]
\newaliascnt{lemma}{theorem}
\newtheorem{lemma}[lemma]{Lemma}
\aliascntresetthe{lemma}
\newaliascnt{proposition}{theorem}
\newtheorem{proposition}[proposition]{Proposition}
\aliascntresetthe{proposition}
\newaliascnt{corollary}{theorem}
\newtheorem{corollary}[corollary]{Corollary}
\aliascntresetthe{corollary}

\theoremstyle{definition}
\newaliascnt{definition}{theorem}
\newtheorem{definition}[definition]{Definition}
\aliascntresetthe{definition}
\newaliascnt{remark}{theorem}
\newtheorem{remark}[remark]{Remark}
\aliascntresetthe{remark}
\newaliascnt{example}{theorem}
\newtheorem{example}[example]{Example}
\aliascntresetthe{example}

\newcommand{\R}{\mathbb R}
\newcommand{\conv}{\operatorname{conv}}
\newcommand{\own}{o}
\newcommand{\noindentparagraph}[1]{\par\addvspace{0.8\baselineskip}\noindent#1\enspace\ignorespaces}

\title{GSP Autobidding Equilibria with Few Advertisers or Slots}
\author{%
  Yiyang Huang\thanks{Beijing Institute of Technology, \texttt{huangyiyang@bit.edu.cn}}
  \and Zhengyang Liu\thanks{Beijing Institute of Technology, \texttt{zhengyang@bit.edu.cn}}
  \and Yonglei Yan\thanks{Beijing Institute of Technology, \texttt{yanyonglei@bit.edu.cn}}
  \and Zihe Wang\thanks{Renmin University of China, \texttt{wang.zihe@ruc.edu.cn}}
}
\date{}

\begin{document}
\maketitle

\begin{abstract}
In sponsored-shopping auctions, one advertiser may win several positions, and a product that is never displayed may determine the prices of several winners. Under generalized second-price (GSP) pricing without self-pricing, these interactions couple allocation and payment through a single global ranking. We prove that exact autobidding equilibria under budget and return-on-investment (ROI) constraints can be computed in polynomial time whenever either the number $n$ of advertisers or the number $K$ of display slots is fixed. Our algorithms rely on two structural reductions. With few advertisers, the payment polytopes of tied bid layers combine by a Minkowski sum, giving a fixed-dimensional representation of exponentially many rankings. With few slots, budget equilibria admit threshold recovery and an exact submodular description of attainable payments at the display boundary. For ROI, we prove that every group-consecutive equilibrium has a representative with at most $K+2$ nonunit multipliers, preserving the item and price in every displayed slot. In each case, the reduction leads to polynomially many systems of polynomial equations and inequalities in bounded dimension. The algorithms compute exact multipliers and explicit distributions over valid group-consecutive rankings, with support at most $n+1$; for ROI with few slots, this improves to $\min\{n+1,K+3\}$.
\end{abstract}

\noindent\textbf{Keywords:} Autobidding, generalized second-price auctions, budget constraints, return on investment, equilibrium computation.

\section{Introduction}
\label{sec:introduction}

A product need not be displayed to determine the price of a sponsored position. In a shopping auction, the same advertiser may submit many products and win several positions, while its own products are excluded from setting one another's prices. Under generalized second-price (GSP) pricing without self-pricing, a displayed product pays the highest lower-ranked bid from a different advertiser. That bid may lie far beyond the last display slot. A single undisplayed product can therefore set the prices of several displayed products, as \Cref{ex:tail-pricing} illustrates. This dependence on the entire ranking is the central obstacle to computing an equilibrium.

Sponsored shopping extends the classical position-auction model of sponsored search~\cite{EOS07,Varian07} to advertisers with entire product catalogs~\cite{DLPST26}. It connects the ranking problems of web marketplaces~\cite{ZhangWMMHW18,WuHHL18,GeZLI21} with the economic interaction among automated bidders. In the standard multiplier model, an advertiser delegates bidding to an agent that scales all product values by one multiplier, subject to a budget or a return-on-investment (ROI) constraint~\cite{ABM19,BalseiroDMMZ21,BalseiroBF24}. Computing the resulting equilibrium requires both a multiplier profile and a distribution over bid-consistent rankings.

D\"utting et al.~\cite{DLPST26} established ROI equilibrium existence and welfare guarantees for sponsored-shopping auctions. Hansen, Li, and Nie~\cite{HLN26} subsequently proved existence for budget equilibria (ABE-B) and ROI equilibria (ABE-R), placed exact computation in FIXP, and showed that approximate computation is PPAD-hard in general. These results make the search for tractable structure particularly natural: \emph{can a small number of advertisers or display slots make exact equilibrium computation possible in polynomial time?}

Exact algorithms for second-price pacing equilibria with few buyers~\cite{YWL26} or few goods~\cite{HYWL26} provide a useful starting point. They partition a low-dimensional parameter space into comparison cells and solve linear feasibility problems within each cell. A sponsored-shopping cell, however, may contain exponentially many tied group orders with different payments. Fixing the number of slots creates another difficulty: arbitrarily many advertisers may tie at the display boundary, so neither the number of ranking choices nor the number of payment coordinates becomes bounded. We overcome these obstacles by identifying the outcome geometry that survives each structural restriction.

\noindentparagraph{\textbf{\textup{Principle I: Layerwise Convexification.}}}
With $n$ advertisers, each tied bid layer contains at most $n$ value groups. Although the orders of different layers give exponentially many complete rankings, each layer contributes a payment matrix determined by its own order. The attainable total matrices form a Minkowski sum of local polytopes. In fixed dimension, we can enumerate its vertices and associate a valid ranking with each one. For budgets, we then compress the expected payment vector; for ROI, we include values in the local representation and compress expected surplus. This converts the large family of rankings into a polynomial family of representative outcomes.

\noindentparagraph{\textbf{\textup{Principle II: The Display Boundary.}}}
With $K$ display slots, the displayed--undisplayed boundary determines which parts of a large market must remain in the search. For budgets, the $K$th highest bid recovers every multiplier except those of at most $K-1$ above-threshold advertisers. We characterize the attainable payments of an arbitrarily large threshold tie as a submodular base polytope, then eliminate its payment coordinates by intersecting this polytope with the remaining budget intervals. For ROI, we prove a different structural result: every group-consecutive equilibrium can be transformed into one with at most $K+2$ nonunit multipliers, preserving the item and price in every displayed slot. Two below-threshold price witnesses, or one universal threshold witness, suffice to preserve the relevant tail information. This sparsity reduces the ROI multiplier search to bounded dimension.

The two reductions yield the following theorem. \Cref{fig:reductions} summarizes how the budget and ROI constructions use them.

\begin{theorem}[Exact Autobidding Equilibria in Structurally Small GSP Markets]
\label{thm:main}
For each fixed number $n$ of advertisers or fixed number $K$ of display slots, an exact ABE-B and an exact ABE-R can be computed in polynomial time in the explicit input size. Each algorithm returns exact algebraic multipliers and an explicit probability distribution over valid group-consecutive rankings. The same conclusion holds when the total number $M$ of items is fixed, because $n\leq M$.
\end{theorem}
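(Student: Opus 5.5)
The plan is to treat \Cref{thm:main} as the assembly of four algorithms: budget and ROI, each with $n$ fixed or with $K$ fixed. Each algorithm reduces to polynomially many semi-algebraic feasibility problems in bounded dimension. We solve these exactly with a standard decision procedure for the existential theory of the reals in fixed dimension, such as Renegar's or Basu--Pollack--Roy's algorithm. That procedure runs in time polynomial in the number of polynomials and their degrees and bit sizes, and it returns algebraic witnesses. Existence of an equilibrium is guaranteed by Hansen--Li--Nie, so at least one system in each enumeration is feasible. The final sentence of the theorem is immediate from the fixed-$n$ case, since every advertiser owns at least one item and hence $n\le M$.

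\emph{Fixed $n$.} The multiplier vector lies in $\R^n$. First, I would partition multiplier space by the signs of all pairwise comparisons $\alpha_i v_{ij}$ versus $\alpha_{i'} v_{i'j'}$. This is a polynomial-size hyperplane arrangement in fixed dimension, so it has polynomially many cells, and each cell fixes the tied bid layers. Within a cell, each layer holds at most $n$ groups, so it has $O(n!)$ local orders. Each local order yields a local payment matrix, or a payment-plus-value matrix for ROI, and these entries are linear in the multipliers. The attainable totals form a Minkowski sum of local polytopes in dimension $O(n^2)$. Its vertex count is polynomial in the number of summands, and each vertex comes from a choice of one vertex per summand, that is, from one valid ranking. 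Expected outcomes are convex combinations of these vertices, and by Carath\'eodory only $n+1$ of them are needed after compressing to the $n$ payment (or surplus) coordinates. For each cell and each choice of at most $n+1$ vertex representatives, I would then write the complementarity conditions as a polynomial system in the multipliers and mixing weights. These conditions are: $\alpha_i\le 1$, and $\alpha_i<1$ only if the budget or ROI constraint binds. One subtlety is that the vertex structure must be uniform across the cell. If it is not, I would refine the cell arrangement by the finitely many degeneracy polynomials.

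\emph{Fixed $K$, budgets.} I would guess which bid value is the $K$th highest, together with the at most $K-1$ advertisers strictly above it and the displayed items. This is polynomially many guesses. Threshold recovery then expresses every other multiplier as a function of the threshold and of the guessed advertisers' multipliers. The paper's submodular base-polytope description of the threshold tie replaces its unbounded payment coordinates. Feasibility of the budget intervals intersected with the base polytope is a polymatroid intersection condition. I would express it as a polynomial number of inequalities via the submodular characterization, with one inequality per prefix of a sorted order, after guessing the sort order of the relevant quantities. What remains is a bounded-dimensional system.

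\emph{Fixed $K$, ROI.} By the sparsity theorem, it suffices to search for equilibria in which at most $K+2$ multipliers differ from $1$. I would enumerate the identities of these advertisers, the displayed items, and the witness pattern: either two below-threshold price witnesses or one universal witness. This is $n^{O(K)}$ choices. In each case the system has dimension $O(K)$, and Carath\'eodory in the reduced space bounds the support by $\min\{n+1,K+3\}$.

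The main obstacle I expect is certifying that the reduced representatives yield \emph{valid} group-consecutive rankings with correct GSP prices. Ties and tail pricing without self-pricing can make a compressed outcome unrealizable. I would address this by proving, cell by cell, that each enumerated vertex or base-polytope point is realized by an explicit ranking. That ranking is constructed by greedy layer-by-layer ordering, and it is then output as part of the distribution.
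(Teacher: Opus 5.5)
Your top-level plan is exactly the paper's proof: \Cref{thm:main} is the conjunction of \Cref{thm:adv-computation,thm:roi-computation,thm:pos-computation,thm:roi-pos-computation} together with $n\le M$. Your fixed-$n$ and fixed-$K$ ROI sketches follow the paper's constructions.

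The step that fails as written is in the fixed-$K$ budget case. There you test whether the budget box meets $\mathcal P_F(t)$ using ``one inequality per prefix of a sorted order.'' That shortcut works when the capacity function depends only on $|U|$. Here $\Phi(U)$ depends on the item count $q(U)=\sum_{i\in U}q_i$, and threshold groups can have different sizes.

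A concrete counterexample: take $K=r=3$, $h=0$, $c_1=c_2=c_3=1$, $t=1$, and $T=\{1,2,3,4\}$ with $(q_1,q_2,q_3,q_4)=(3,1,1,1)$. Let advertisers $1,2$ be paced with residual budgets $3/2$ and $6/5$, and let $3,4$ be unpaced with large budgets. Sorting by $\underline y$ puts advertiser~$1$ first, so every prefix has $q(U)\ge3$ and $\Phi(U)=3$. Every lower-side prefix inequality then holds, since $\underline y(T)=27/10\le3$, and all upper-side inequalities hold because of the large budgets. Yet $U=\{2\}$ violates $\underline y(U)\le t\Phi(U)=1$: a one-item advertiser receives CTR at most $c_1=1$ and can never spend $6/5$. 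Your relaxed system would therefore accept parameters for which no threshold payment vector fits the budget box.

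You give no argument that some other single order repairs this, and none is needed. The missing observation is saturation: $\Phi(U)=\Phi(T)$ once $q(U)\ge r$. For such $U$, the lower-side inequality follows from $\underline y(U)\le\underline y(T)\le t\Phi(T)$, and the upper-side one has right-hand side $0$. Only sets with $q(U)<r$ remain. They have at most $K-1$ owners, so the $O(n^{K-1})$ of them can be listed explicitly, with no guessing.

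A second gap is in your closing paragraph. A greedy group order realizes a vertex of $\mathcal P_F(t)$, but a general point is not the payment vector of any single ranking; $(1/2,1/2,1/2)$ in \Cref{ex:threshold-capacity} is an example. To output an explicit distribution you need four things:
\begin{enumerate}
\item the equality $\mathcal Y_F(t)=\mathcal P_F(t)$;
\item recovery of a concrete $y$ from the algebraic solution by a linear feasibility problem, since the candidate system carries no threshold-payment coordinates;
\item a polynomial-time decomposition of $y$ into at most $|T|$ greedy orders that never lists all $e!$ orders, which is the face-descent of \Cref{prop:threshold-decomposition};
\item Carath\'eodory compression to $n+1$ rankings.
\end{enumerate}

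Smaller slips:
\begin{itemize}
\item For ROI, pacing refers to the cap $A$, not to $1$.
\item The bound of at most $n$ groups per layer in the budget case needs $\alpha_i>0$ (\Cref{lem:positive}).
\item The price-$t$ capacity description requires $e>r$. For $e\le r$ the last displayed threshold group is priced by the tail, and orders are enumerated directly.
\item In the fixed-$K$ ROI case, Carath\'eodory on the coordinates in $S$ is justified only after noting that every supported ranking gives zero surplus to the multiplier-one advertisers outside $S$.
\item Your worry about nonuniform vertex structure within a cell disappears if you work, as the paper does, with the $\alpha$-independent matrices $E_{h,\sigma}$ and their images $D^{(\ell)}\alpha$.
\end{itemize}
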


\Cref{tab:overview} summarizes the four results. Each reduction yields polynomially many candidate systems with boundedly many variables and degree at most two. Standard exact polynomial feasibility then completes the computation.

\begin{table}[t]
\centering
\small
\caption{Polynomial-Time Exact Equilibrium Computation. Support bounds count complete group-consecutive rankings in the equilibrium distribution.}
\label{tab:overview}
\vspace{0.5em}
\begin{tabularx}{\linewidth}{@{}>{\raggedright\arraybackslash}p{0.12\linewidth}>{\raggedright\arraybackslash}p{0.11\linewidth}>{\raggedright\arraybackslash}X>{\centering\arraybackslash}p{0.20\linewidth}>{\centering\arraybackslash}p{0.16\linewidth}@{}}
\toprule
\textbf{Fixed} & \textbf{Model} & \textbf{Structural Reduction} & \textbf{Time} & \textbf{Support} \\
\midrule
Advertisers $n$ & ABE-B & Layerwise payment-matrix convexification & $O(M^{n^3+O(n^2)})$ & $n+1$ \\
\addlinespace[2pt]
Advertisers $n$ & ABE-R & Augmented payment--value convexification & $O(M^{n^3+O(n^2)})$ & $n+1$ \\
\addlinespace[2pt]
Slots $K$ & ABE-B & Threshold recovery and a submodular capacity polytope & $O(M^{O(K^3)})$ & $n+1$ \\
\addlinespace[2pt]
Slots $K$ & ABE-R & At most $K+2$ nonunit multipliers; surplus compression & $O(M^{O(K^2)})$ & $\min\{n+1,\allowbreak K+3\}$ \\
\bottomrule
\end{tabularx}
\end{table}

The support bounds follow from convex-hull compression after the structural reductions have produced a polynomial representation of the attainable outcomes. In the fixed-slot ROI case, multiplier sparsity also reduces the surplus dimension, yielding at most $K+3$ rankings even when the number of advertisers grows.

\subsection{Context and Related Work}
\label{sec:related-work}

\noindentparagraph{\textbf{\textup{Sponsored-Shopping Equilibria.}}}
D\"utting et al.~\cite{DLPST26} introduced the multi-product sponsored-shopping model and proved equilibrium existence and a tight price of anarchy of two for ROI-constrained bidders under GSP and VCG. For the GSP rule studied here, Hansen, Li, and Nie~\cite{HLN26} established existence for both budgets and ROI, polynomial-size equilibrium support, FIXP membership for exact computation, and PPAD-completeness for approximation. We use their group-consecutive existence theorem~\cite[Corollary~4.1]{HLN26} to construct polynomial-time exact algorithms when the advertiser or slot dimension is fixed.

\noindentparagraph{\textbf{\textup{Exact Pacing Algorithms with Few Buyers or Goods.}}}
Multiplicative pacing equilibria have been studied in first- and second-price auction markets~\cite{CKSS22,CKPSSSW22}, with general computational hardness for second-price pacing~\cite{CKK24}. Yan, Wang, and Liu~\cite{YWL26} give an exact algorithm with a fixed number of buyers by partitioning multiplier space into comparison cells and solving linear feasibility problems. Huang et al.~\cite{HYWL26} handle a fixed number of goods by recovering multipliers from highest-bid coordinates, again obtaining linear feasibility within each cell. These works supply two useful starting ideas: comparison cells and recovery from a small set of bid levels. Our fixed-advertiser result additionally represents the coupled outcomes of all tied layers by a payment-matrix Minkowski sum. Our fixed-slot budget result replaces independent per-good allocations by the capacity polytope of a shared threshold layer. The ROI result uses a separate sparsity theorem to obtain a bounded set of multiplier coordinates. These are the structural steps that make the comparison-cell approach work for a global GSP ranking.

\noindentparagraph{\textbf{\textup{Web Advertising and Marketplace Ranking.}}}
The broader literature studies budget allocation and bid optimization in sponsored search~\cite{MehtaSVV07,FeldmanMPS07,BorgsCEIJM07,EvenDarMMMN09}, budget management and pacing dynamics~\cite{BalseiroKMM21,LucierPSZ24}, and auction efficiency with autobidders~\cite{DengMMZ21,DengMMMZZ24,DengGJLM24}. Balseiro et al.~\cite{BalseiroBF24} analyze the interaction of budget and return-on-spend pacing, while Li and Tang~\cite{LT24} study the equilibrium properties and computational difficulties of ROI-constrained second-price markets. Marketplace work addresses relevance-constrained revenue optimization, whole-page allocation, product search, and sponsored-listing re-ranking~\cite{ZhuWYWYC09,ZhangWMMHW18,WuHHL18,GeZLI21,LuZZ26}; audits also document how sponsored results affect search quality and seller visibility~\cite{DashGMCKG24}. Our results address the additional dependence on product ownership and global order in sponsored-shopping equilibrium computation.

\section{Preliminaries}
\label{sec:model}

The budget and ROI models share the ranking and payment rules below.

\subsection{Auction Model}

\noindentparagraph{\textbf{\textup{Auction Model and Bids.}}}
There are $n$ advertisers and $K$ display slots. Write $[r]=\{1,\ldots,r\}$. Advertiser $i\in[n]$ owns $m_i\geq1$ items. Item $(i,j)$ has positive rational value per click $v_{ij}$, with $v_{i1}\geq\cdots\geq v_{i,m_i}$. Let $\mathcal I=\{(i,j):i\in[n],\ j\in[m_i]\}$ be the item set and $M=\sum_{i\in[n]}m_i$ its size. Write $\own(i,j)=i$ for ownership, $u_i=v_{i1}$ for advertiser $i$'s highest value, and $u_{\max}=\max_{i\in[n]}u_i$.

Slot $k\in[K]$ has a rational click-through rate (CTR) $c_k$, where $1\geq c_1\geq\cdots\geq c_K>0$ and $K\leq M$. Set $c_k=0$ for $K<k\leq M$.

Advertiser $i$ uses one multiplier $\alpha_i$ for all its items. The profile $\alpha=(\alpha_1,\ldots,\alpha_n)$ gives bids $b_{ij}(\alpha)=\alpha_i v_{ij}$. In the budget model, it has a positive rational budget $B_i$ and chooses $\alpha_i\in[0,1]$. In the ROI model, $\alpha_i\in[1,A]$, with cap $A>1$ defined below.

\noindentparagraph{\textbf{\textup{Rankings and Payments.}}}
A complete ranking $a=(a(1),\ldots,a(M))$ orders all items exactly once; $a(k)$ is the item ranked $k$th. It is \emph{valid at $\alpha$} if
\[
b_{a(1)}(\alpha)\geq\cdots\geq b_{a(M)}(\alpha).
\]

The first $K$ items occupy the corresponding display slots.

Fix a profile $\alpha$ and a valid ranking $a$. For $a(k)=(i,j)$, the price per click is
\[
\tau_{ij}(a,\alpha)=\max_{\substack{k'>k\\\own(a(k'))\neq i}}b_{a(k')}(\alpha),
\qquad \max\varnothing=0.
\]

This is the highest lower-ranked bid from another advertiser. Undisplayed items remain eligible to set the price.

For $a(k)=(i,j)$, write $v_{a(k)}=v_{ij}$ and $\tau_{a(k)}(a,\alpha)=\tau_{ij}(a,\alpha)$. Advertiser $i$'s value and payment are
\[
V_i(a)=\sum_{\substack{k\in[K]\\\own(a(k))=i}}c_kv_{a(k)},
\quad
P_i(a,\alpha)=\sum_{\substack{k\in[K]\\\own(a(k))=i}}c_k\tau_{a(k)}(a,\alpha),
\]
since $c_k=0$ for $k>K$, undisplayed items yield no value or payment.

\begin{example}[An Undisplayed Product Sets Two Prices]
\label{ex:tail-pricing}
Let $K=2$, $(c_1,c_2)=(1,1/2)$, and all multipliers be one. Advertiser~1 owns products of values $10$ and $8$; advertisers~2 and~3 own one product each, of values $7$ and $6$. Writing each product as item:bid, the ranking is
\[
\underbrace{(1,1):10,\ (1,2):8}_{\text{displayed}}
\quad\big|\quad
\underbrace{(2,1):7,\ (3,1):6}_{\text{undisplayed}}.
\]
Both displayed products pay $7$ per click, giving $P_1=21/2$: advertiser~1's second product is skipped and advertiser~2's undisplayed product sets both prices. If advertiser~2 lowers its bid to $5$, advertiser~3 sets both prices at $6$ and $P_1=9$, with the displayed products unchanged.
\end{example}

\noindentparagraph{\textbf{\textup{Randomized Allocations.}}}
We resolve ties by randomizing over complete rankings. Let $\pi$ be a probability distribution, with $\pi(a)\geq0$ the probability of ranking $a$ and $\sum_a\pi(a)=1$. It is \emph{valid at $\alpha$} if every positive-probability ranking is valid at $\alpha$. Randomization thus preserves bid order and reorders only tied items.

Advertiser $i$'s expected value and payment are
\[
V_i(\pi)=\mathbb E_{a\sim\pi}[V_i(a)],
\qquad
P_i(\pi,\alpha)=\mathbb E_{a\sim\pi}[P_i(a,\alpha)].
\]

Both equilibrium notions use these expected outcomes.

\subsection{Autobidding Equilibria}

We define budget and ROI autobidding equilibria, denoted ABE-B and ABE-R. Both consist of a multiplier profile and a distribution over complete rankings.

The budget model combines feasibility with maximal pacing: an advertiser scales down its bids only when its budget binds in expectation.

\begin{definition}[Autobidding Equilibrium Under Budget Constraints]
For a GSP auction $G=(n,(m_i),K,(v_{ij}),(c_k),(B_i))$, a pair $(\alpha,\pi)$ with $\alpha\in[0,1]^n$ and a probability distribution $\pi$ over complete rankings is an exact ABE-B if:
\begin{enumerate}[label=(\alph*)]
    \item \textbf{Bid-Consistent Allocation.} Every ranking with positive probability under $\pi$ is valid at $\alpha$.
    \item \textbf{Budget Feasibility.} Every advertiser $i\in[n]$ satisfies $P_i(\pi,\alpha)\leq B_i$.
    \item \textbf{Maximal Pacing.} For every advertiser $i\in[n]$, $P_i(\pi,\alpha)<B_i$ implies $\alpha_i=1$.
\end{enumerate}
\end{definition}

This equilibrium notion follows~\cite{HLN26}. Maximal pacing requires unscaled bids when expected payment is below budget. Budget feasibility applies to expected payments under $\pi$.

The ROI model has no fixed budgets. Following~\cite{HLN26}, set the ROI target to one and fix the rational cap
\[
A=1+\frac{\sum_{(i,j)\in\mathcal I}v_{ij}}
{\min_{(i,j)\in\mathcal I}v_{ij}}>1.
\]

Maximal pacing now requires the upper multiplier bound whenever expected value strictly exceeds expected payment.

\begin{definition}[Autobidding Equilibrium Under ROI Constraints]
\label{def:roi}
For a GSP auction $G=(n,(m_i),K,(v_{ij}),(c_k))$, a pair $(\alpha,\pi)$ with $\alpha\in[1,A]^n$ and a probability distribution $\pi$ over complete rankings is an exact ABE-R if:
\begin{enumerate}[label=(\alph*)]
    \item \textbf{Bid-Consistent Allocation.} Every ranking with positive probability under $\pi$ is valid at $\alpha$.
    \item \textbf{ROI Feasibility.} Every advertiser $i\in[n]$ satisfies $V_i(\pi)\geq P_i(\pi,\alpha)$.
    \item \textbf{Maximal Pacing.} For every advertiser $i\in[n]$, $V_i(\pi)>P_i(\pi,\alpha)$ implies $\alpha_i=A$.
\end{enumerate}
\end{definition}

Let $R_i(\pi,\alpha)=V_i(\pi)-P_i(\pi,\alpha)$ be advertiser $i$'s expected surplus. ROI feasibility requires $R_i(\pi,\alpha)\geq0$; maximal pacing requires $R_i(\pi,\alpha)=0$ when $\alpha_i<A$.

\noindentparagraph{\textbf{\textup{Computational Conventions.}}}
All numerical inputs are rational. We use standard exact algorithms for hyperplane arrangements and fixed-dimensional polynomial feasibility~\cite{Edelsbrunner87,BPR06}. Running times are displayed as functions of $M$, suppressing factors polynomial in the size of the numerical input. Multiplicative constants may depend on the fixed parameter; constants hidden in the displayed exponents are universal.

\section{Common Structural Foundations}
\label{sec:structure}

Both structural reductions use positivity of multipliers and existence of group-consecutive equilibria.

\subsection{Multiplier Properties}

We begin with the budget model. Positivity excludes a degenerate boundary of multiplier space and, more importantly, prevents distinct values of one advertiser from collapsing into a common zero-bid layer.

\begin{lemma}[Positive Multipliers in ABE-B]
\label{lem:positive}
Every exact ABE-B $(\alpha,\pi)$ satisfies $\alpha_i>0$ for every advertiser $i\in[n]$.
\end{lemma}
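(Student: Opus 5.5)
We argue by contradiction: suppose $(\alpha,\pi)$ is an exact ABE-B and some advertiser $i$ has $\alpha_i=0$. Since $B_i>0$ and $\alpha_i=0\neq 1$, the contrapositive of maximal pacing forces $P_i(\pi,\alpha)=B_i>0$. The whole proof then consists of showing that an advertiser with zero multiplier pays nothing in expectation, so that $P_i(\pi,\alpha)=0<B_i$, which is a contradiction.

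To show $P_i(a,\alpha)=0$ for every ranking $a$ in the support of $\pi$, we fix such an $a$, which is valid at $\alpha$ by bid-consistency. Every item $(i,j)$ then bids $b_{ij}(\alpha)=0$. Take any displayed position $k\le K$ with $a(k)=(i,j)$. Its price is $\tau_{ij}(a,\alpha)=\max_{k'>k,\,\own(a(k'))\neq i} b_{a(k')}(\alpha)$. Validity of $a$ gives $b_{a(k')}(\alpha)\le b_{a(k)}(\alpha)=0$ for all $k'>k$. Bids are nonnegative because $\alpha\in[0,1]^n$ and $v>0$, so every lower-ranked bid is exactly $0$. Hence $\tau_{ij}(a,\alpha)=0$, with the convention $\max\varnothing=0$ covering the case of no eligible competitor. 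Summing over displayed slots gives $P_i(a,\alpha)=0$, and averaging over $\pi$ gives $P_i(\pi,\alpha)=0$.

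The key step, though an easy one, is noticing that GSP prices are bounded above by the item's own bid: an item can never pay more than it bid, because everything priced below it in a valid ranking bids no more. With a zero bid, the price is therefore zero no matter how ties are broken or which items happen to be displayed. No other earlier result is needed. The main thing to check is that the bounds hold for \emph{every} support ranking, not just in expectation, and the pointwise argument above handles that directly.
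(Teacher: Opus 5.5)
Your proof is correct and follows the same argument as the paper. In a valid ranking every lower bid is at most the priced item's own bid, so $\alpha_i=0$ forces zero payment in every supported ranking. Since $B_i>0$, maximal pacing would then require $\alpha_i=1$, a contradiction. (The contrapositive of maximal pacing alone gives only $P_i(\pi,\alpha)\geq B_i$, with equality coming from budget feasibility, but the inequality is all the contradiction needs.)
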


\begin{proof}
In a valid ranking, every lower bid is at most the priced item's bid, so $0\leq\tau_{ij}(a,\alpha)\leq\alpha_i v_{ij}$. If $\alpha_i=0$, advertiser $i$ pays zero in every supported ranking. Since $B_i>0$, maximal pacing would require $\alpha_i=1$, a contradiction.
\end{proof}

Hence, ABE-B multipliers lie in $(0,1]^n$; ABE-R multipliers are positive by definition.

\subsection{Group-Consecutive Rankings}

We restrict the search to group orders using an existence theorem for the two equilibrium models.

Group items if and only if they have the same owner and value. Let $\mathcal G$ be the group set. Group $g\in\mathcal G$ has owner $i(g)$, common value $v_g$, size $q_g$, and common bid $b_g(\alpha)=\alpha_{i(g)}v_g$.

Fix an arbitrary internal item order in each group. A ranking is \emph{group-consecutive} if it orders the groups and expands each group in its fixed internal order. We search only over group orders, without enumerating permutations within a group. The following result ensures that this restriction retains an exact equilibrium.

\begin{corollary}[Hansen--Li--Nie~\cite{HLN26}]
\label{cor:consecutive-existence}
Each model admits an exact equilibrium supported only on rankings in which same-advertiser, same-value items form contiguous blocks.
\end{corollary}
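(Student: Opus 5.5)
The plan is to rerun the fixed-point existence argument of Hansen--Li--Nie, but with the feasible outcome set restricted from the start to group-consecutive rankings. I will not try to convert an arbitrary equilibrium into a group-consecutive one by local exchanges. That route looks hopeless: inside a tied layer, reordering items changes which other-advertiser bid lies below each item, and hence changes payments. For example, the orders $i,j,i$ and $i,i,j$ give different payment vectors. So the existence proof must treat the group-consecutive family directly.

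\textbf{Step 1 (outcome correspondence).} For each profile $\alpha$, let $\mathcal A_{GC}(\alpha)$ be the set of group-consecutive rankings valid at $\alpha$. It is nonempty: sort the groups by $b_g(\alpha)$, break ties arbitrarily, and expand each group in its fixed internal order. Define
\[
\Phi(\alpha)=\conv\bigl\{\bigl(V_i(a),P_i(a,\alpha)\bigr)_{i\in[n]}: a\in\mathcal A_{GC}(\alpha)\bigr\}\subseteq\R^{2n}.
\]
This set is nonempty, compact and convex. Its graph is closed because there are finitely many group-consecutive rankings, validity of a fixed ranking is a closed condition in $\alpha$ (weak inequalities between bids), and each $P_i(a,\alpha)$ is continuous in $\alpha$ for fixed $a$ (a maximum of finitely many linear bids). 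Hence $\Phi$ is upper hemicontinuous. A point of $\Phi(\alpha)$ is realized by a distribution $\pi$ over group-consecutive valid rankings, so condition (a) of both definitions holds automatically.

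\textbf{Step 2 (Kakutani).} On the compact convex set $[\varepsilon,1]^n\times\Phi$-outcomes (budget model), or $[1,A]^n\times\Phi$-outcomes (ROI model), define the correction map used by HLN26. For budgets, take $\alpha_i'=\min\{1,\alpha_i B_i/\max(P_i,\cdot)\}$, suitably truncated. For ROI, take $\alpha_i'=A$ if $R_i>0$ and lower $\alpha_i$ proportionally otherwise. In both cases the outcome coordinate is updated to $\Phi(\alpha)$. The resulting correspondence has closed graph and convex values, so Kakutani's theorem gives a fixed point $(\alpha,x)$.

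\textbf{Step 3 (fixed point is an equilibrium).} Next I verify budget or ROI feasibility and maximal pacing exactly as in HLN26. That argument uses only the properties above (upper hemicontinuity, convexity, and the bounds $0\le\tau\le b$), never the full family of rankings. For budgets, \Cref{lem:positive}-type reasoning keeps $\alpha$ away from $0$: at a tiny $\alpha_i$, payments are tiny, so the map pushes $\alpha_i$ up, which justifies the truncation at $\varepsilon$. For ROI, the choice of the cap $A$ guarantees that an advertiser at $\alpha_i=A$ with negative surplus cannot occur, because the payment bound via $\sum v$ keeps surplus nonnegative at the boundary.

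\textbf{Main obstacle.} The delicate step is Step 3, specifically showing that at the fixed point the budget or ROI constraint is not violated strictly. A fixed point of the convexified correspondence might sit at a tie where every extreme group-consecutive ranking overshoots. I would handle this as follows. The closed graph lets $x$ be a limit of outcomes from nearby profiles, and those profiles have fewer ties, so that consecutive-group orders are forced. A violated constraint would then make the update map strictly decrease $\alpha_i$, contradicting the fixed point. Checking that this limit argument survives when only group-consecutive tie-breaks are allowed is the point where the proof needs care.
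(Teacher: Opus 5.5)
Your strategy is sound, and Step~1 is correct. Restricting to group-consecutive rankings keeps $\Phi(\alpha)$ nonempty, compact and convex with closed graph, and a Kakutani existence proof never needs the full ranking family. The paper does not prove the corollary. It imports it from \cite{HLN26} and, in \Cref{prop:exist}, only reorders items inside blocks that are already contiguous. A repaired version of your argument would be self-contained and would give the fixed-internal-order statement of \Cref{prop:exist} directly. As written, however, Steps~2 and~3 do not go through.

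\textbf{The gaps.} Your ROI update ($\alpha_i'=A$ if $R_i>0$, otherwise lower $\alpha_i$) jumps at $R_i=0$. So $(\alpha,x)\mapsto\{\alpha'\}\times\Phi(\alpha)$ lacks a closed graph, and Kakutani's theorem does not apply. The budget map is only sketched. The Step~3 claim that the cap keeps surplus nonnegative at $\alpha_i=A$ is false. Take $K=1$, $c_1=1$ and two single-item advertisers of value $1$. Then $A=3$, and if both bid $3$ the winner pays $3$ for value $1$. Finally, the ``main obstacle'' is left open.

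\textbf{The repair.} All three issues disappear with a continuous projected update, reading $(V,P)$ from the outcome coordinate $x$:
\begin{itemize}
\item For budgets, take $\alpha_i'=\min\{1,\max\{\varepsilon,\alpha_i+B_i-P_i\}\}$ with $0<\varepsilon<\min\{1,\min_iB_i/(Ku_i)\}$.
\item For ROI, take $\alpha_i'=\min\{A,\max\{1,\alpha_i+V_i-P_i\}\}$.
\end{itemize}

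\textbf{Checking the fixed point.} At a fixed point $(\alpha,x)$, closed graph already gives $x\in\Phi(\alpha)$ at the fixed profile itself. So no limit from tie-free profiles is needed, and an averaged outcome that overshoots would move $\alpha_i$. The cases are then direct:
\begin{itemize}
\item Interior coordinates force $P_i=B_i$ or $R_i=0$.
\item The coordinate $\alpha_i=1$ (budget) or $\alpha_i=A$ (ROI) yields the feasibility inequality directly.
\item The coordinate $\alpha_i=\varepsilon$ is impossible, since $P_i\le K\varepsilon u_i<B_i$ there.
\item In the ROI model, the boundary that really needs an argument is $\alpha_i=1$, where the fixed point gives only $R_i\le0$. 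There $\tau\le b=v$ gives $R_i\ge0$, hence $R_i=0$. This is maximal pacing because $1<A$.
\end{itemize}
So any cap $A>1$ suffices for existence.
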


Our definition additionally fixes the internal item order of each group. The next lemma shows that any choice of these internal orders preserves the existence guarantee.

\begin{lemma}[Existence of group-consecutive equilibria]
\label{prop:exist}
Every budget instance admits an exact ABE-B, and every ROI instance with the cap defined above admits an exact ABE-R. In either case, the ranking distribution can be supported on group-consecutive rankings with any prescribed internal item orders.
\end{lemma}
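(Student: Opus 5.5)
We start from \Cref{cor:consecutive-existence}. It gives an exact equilibrium $(\alpha,\pi)$ in each model, ABE-B or ABE-R with the stated cap $A$. Every ranking in the support of $\pi$ keeps same-advertiser, same-value items in contiguous blocks. The only gap between that statement and \Cref{prop:exist} is the internal order within each block, since the corollary allows arbitrary orders and ours are prescribed. We therefore transform $\pi$ without changing $\alpha$. We show that relabeling items inside a group leaves every advertiser's value and payment unchanged in every ranking, so all equilibrium conditions carry over.

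Fix a supported ranking $a$ and a group $g$ occupying consecutive positions $k_0+1,\dots,k_0+q_g$. Let $a'$ be the ranking obtained by placing the items of $g$ in these same positions in the prescribed internal order. All other positions are left untouched, and we apply this to every group. Two checks are needed.

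\begin{enumerate}[label=(\roman*)]
\item \textbf{Validity.} Validity is preserved, because the items of $g$ share the bid $b_g(\alpha)$. The bid sequence $b_{a'(1)}(\alpha),\dots,b_{a'(M)}(\alpha)$ therefore coincides with that of $a$, and it is nonincreasing.

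\item \textbf{Outcomes.} For each position $k$, the owner of $a'(k)$, its value, and its bid equal those of $a(k)$. The price $\tau$ at position $k$ depends only on the owner at $k$ and on the owners and bids at positions $k'>k$. Hence the price at each position is unchanged, and so are $V_i(a')=V_i(a)$ and $P_i(a',\alpha)=P_i(a,\alpha)$ for every $i$.
\end{enumerate}

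Define $\pi'$ as the pushforward of $\pi$ under the map $a\mapsto a'$. Several rankings may map to the same $a'$, in which case their probabilities add. Expectations of $V_i$ and $P_i$ are then preserved. Budget or ROI feasibility and maximal pacing therefore hold for $(\alpha,\pi')$, and bid-consistency holds by (i). Every ranking in the support of $\pi'$ is group-consecutive with the prescribed internal orders.

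The only point needing care is (ii): payments must be expressed position by position through owners and bids, not through item identities. This holds because the price definition refers to items only via $\own(\cdot)$ and $b(\cdot)$, and because the exclusion of self-pricing uses only the owner. Existence itself is imported entirely from \cite{HLN26}. No fixed-point argument is needed here, provided their ROI result uses the same cap $A$, which the model definition states it follows.
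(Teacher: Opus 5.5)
Your proposal is correct and follows the paper's proof essentially step for step. Like the paper, you take the equilibrium from \Cref{cor:consecutive-existence}, reorder each contiguous group into its prescribed internal order, and observe that the owner, value, and bid at every position are unchanged, so validity, all prices, and every $V_i$ and $P_i$ are preserved; you then merge the probabilities of identical transformed rankings.
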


\begin{proof}
Take an equilibrium from \Cref{cor:consecutive-existence} and reorder each consecutive group according to its prescribed internal order. Since items within a group have the same owner and value, the owner, value, and bid at every rank remain unchanged. Thus, all prices and each advertiser's value and payment are preserved.

Combine the probabilities of identical transformed rankings. Under the same multiplier profile, the resulting distribution preserves all expected values and payments, so every equilibrium condition remains satisfied.
\end{proof}

The lemma justifies searching only over group orders. Within each consecutive group, any internal permutation leaves all prices and advertiser outcomes unchanged. We can therefore fix internal orders arbitrarily and ignore them in the search while preserving the equilibrium conditions.

\section{A Common Geometric Framework}
\label{sec:framework}

\Cref{fig:reductions} shows the common structure of the four algorithms. A bounded parameter vector $\theta$ describes the unknown multipliers and, for fixed-slot budgets, a bid threshold. We partition its domain by affine bid comparisons and multiplier bounds. Each relatively open \emph{comparison cell} fixes all strict orders, ties, and endpoint equalities. Within a cell, the structural reduction gives representative rankings or a polyhedral description of their attainable outcomes. Budget conditions depend on expected payments; ROI conditions depend on expected surplus.

Once this outcome set is represented, Carath\'eodory's theorem supplies a common support bound.

\begin{lemma}[Outcome Compression~\cite{Ziegler95}]
\label{lem:outcome-compression}
Fix the parameters and let $\mathcal A$ be a finite set of valid rankings. If a required statistic lies in the convex hull of vectors $s(a)\in\R^d$, $a\in\mathcal A$, then a distribution on at most $d+1$ rankings in $\mathcal A$ realizes the same statistic.
\end{lemma}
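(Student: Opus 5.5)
The plan is to apply Carathéodory's theorem in $\R^d$ and then carry the resulting convex combination back to a probability distribution over rankings. By hypothesis, the required statistic $s^\ast\in\R^d$ is a convex combination $s^\ast=\sum_{a\in\mathcal A}\lambda_a s(a)$ with $\lambda_a\geq0$ and $\sum_a\lambda_a=1$. Since $\mathcal A$ is finite, $\conv\{s(a):a\in\mathcal A\}$ is a polytope in $\R^d$.

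Carathéodory's theorem states that every point of the convex hull of a finite set in $\R^d$ is a convex combination of at most $d+1$ affinely independent points of that set. I would apply it to the finite point set $\{s(a):a\in\mathcal A\}$. If several rankings share the same vector $s(a)$, I pick one representative for each distinct vector used. This yields rankings $a_1,\ldots,a_r\in\mathcal A$ with $r\leq d+1$ and weights $\mu_1,\ldots,\mu_r\geq0$ summing to one, such that $\sum_t\mu_t s(a_t)=s^\ast$.

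Define $\pi'(a_t)=\mu_t$ and $\pi'(a)=0$ for all other rankings. Then $\pi'$ is a probability distribution supported on at most $d+1$ rankings of $\mathcal A$. Each of these rankings is valid at the fixed parameters, so $\pi'$ is valid in the sense of the randomized-allocation definition. By linearity of expectation, $\mathbb E_{a\sim\pi'}[s(a)]=s^\ast$, so $\pi'$ realizes the same statistic.

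One point needs care: the statistic must be an expectation of $s$ that is linear in $\pi$. This holds for the intended uses, namely expected payments $P_i(\pi,\alpha)$ and expected surplus $R_i(\pi,\alpha)$ at fixed $\alpha$, because both are expectations of per-ranking quantities. With that checked, there is no real obstacle; the lemma is a direct restatement of Carathéodory's theorem, which can be cited from~\cite{Ziegler95}.
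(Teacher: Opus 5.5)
Your proposal is correct and takes the same route as the paper, which justifies this lemma only by citing Carath\'eodory's theorem on the finite point set $\{s(a):a\in\mathcal A\}$ and reading the resulting convex weights as ranking probabilities. Your remark that the statistic must be an expectation linear in $\pi$, as expected payments and surplus are at fixed $\alpha$, states the hypothesis the paper leaves implicit when it applies the lemma.
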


The substantive task is therefore to describe the attainable outcomes in polynomial size and bound $d$. With few advertisers, we obtain representative rankings from vertices of a Minkowski sum. With few slots, threshold capacity or multiplier sparsity provides the dimension bound.

\begin{figure}[htbp]
\centering
\begin{tikzpicture}[
    >=Stealth,
    box/.style={draw=black!65, rounded corners=2pt, align=center,
        inner sep=7pt, font=\small, line width=0.55pt},
    restriction/.style={box, fill=black!3, text width=2.8cm},
    reduction/.style={box, fill=blue!3, text width=4.2cm},
    conclusion/.style={box, fill=blue!6, text width=4.0cm},
    every path/.style={line width=0.65pt}
]
\node[restriction] (advertisers) at (0,2.25)
    {\textbf{Few}\\\textbf{Advertisers}\\$n$ Fixed};
\node[restriction] (slots) at (0,-0.35)
    {\textbf{Few Slots}\\$K$ Fixed};
\node[reduction] (layers) at (4.8,2.25)
    {\textbf{Budgets and ROI}\\Layerwise Convexification\\Minkowski Sum};
\node[reduction] (budgets) at (4.8,0.45)
    {\textbf{Budgets}\\Threshold Recovery\\Capacity Elimination};
\node[reduction] (roi) at (4.8,-1.35)
    {\textbf{ROI}\\At Most $K+2$\\Nonunit Multipliers};
\node[conclusion] (equilibrium) at (10.3,0.45)
    {Compress Payments\\or Surplus\\[3pt]
     Fixed-Dimensional\\Polynomial Feasibility\\[3pt]
     $\Downarrow$\\\textbf{Exact Equilibrium}};
\draw[->] (advertisers) -- (layers);
\draw[->] (slots.east) -- (budgets.west);
\draw[->] (slots.east) -- (roi.west);
\draw[->] (layers.east) -- (equilibrium.north west);
\draw[->] (budgets.east) -- (equilibrium.west);
\draw[->] (roi.east) -- (equilibrium.south west);
\end{tikzpicture}
\caption{Structural Reductions for Exact Equilibrium Computation. Few advertisers admit a common layerwise representation for budgets and ROI. Few slots lead to distinct threshold and sparsity arguments. Each route yields polynomially many equilibrium systems in bounded dimension.}
\label{fig:reductions}
\end{figure}

\noindentparagraph{\textbf{\textup{Solving Candidate Systems.}}}
We apply the exact sampling theorem of Basu, Pollack, and Roy~\cite{BPR06}. A rational system of $m$ polynomial equations and inequalities, including strict inequalities, in $q$ variables and of degree at most two admits a feasibility test and, when feasible, an exact algebraic solution in $(m+1)^{O(q^2)}$ operations. This is polynomial for fixed $q$. In our systems, payments and surplus are expressions in the parameters and ranking probabilities, so the only nonlinear terms are products of a probability and an affine bid. The later analyses establish the required bounds on variables, constraints, and possible supports.

\noindentparagraph{\textbf{\textup{Recovering Weighted Rankings.}}}
An algebraic solution determines the multiplier profile. When the candidate system includes weights for representative group orders, the solution also determines these weights. For fixed-slot budgets with a large threshold tie, we recover a feasible payment vector from the algebraic solution and use \Cref{prop:threshold-decomposition} to express it as a convex combination of group orders; the combination coefficients serve as their probabilities. In either case, expanding the group orders gives complete rankings with precisely the expected payments or surplus specified by the equilibrium system. Combining identical rankings and omitting zero weights preserves the distribution. The constraints of the system then guarantee that the recovered multiplier profile and ranking distribution form an exact equilibrium of the corresponding model.

\section{Layerwise Convexification with Few Advertisers}
\label{sec:advertisers}

With fixed $n$, each tied bid layer contains constantly many groups, but the number of valid complete rankings can still be exponential. For example, with two advertisers and $H$ layers containing one group from each, the independent tie orders give $2^H$ rankings. We represent all of them through local convex hulls and an additive payment formula.

\subsection{Budget Constraints: Compressing Expected Payments}

We begin with budgets, for which preserving the expected payment vector preserves all equilibrium conditions at a fixed multiplier profile.

\begin{theorem}[Exact ABE-B Computation with Few Advertisers]
\label{thm:adv-computation}
For every fixed number of advertisers $n$, there exists a polynomial-time algorithm that computes an exact ABE-B $(\alpha,\pi)$ with time complexity $O(M^{n^3+O(n^2)})$.
\end{theorem}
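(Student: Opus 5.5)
We prove \Cref{thm:adv-computation} by combining \Cref{prop:exist}, a comparison-cell decomposition of multiplier space, and a layerwise Minkowski-sum representation of attainable payments, and then invoking fixed-dimensional polynomial feasibility.

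\textbf{Step 1: comparison cells.} By \Cref{lem:positive}, we may search over $\alpha\in(0,1]^n$. By \Cref{prop:exist}, we may restrict to distributions over group-consecutive rankings.

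Every bid comparison $\alpha_i v_g\gtreqless\alpha_{i'}v_{g'}$ is a linear homogeneous hyperplane in $\alpha$. Together with the hyperplanes $\alpha_i=1$, there are $O(M^2)$ hyperplanes in $\R^n$. Their arrangement therefore has $O(M^{2n})$ relatively open cells, which can be enumerated exactly~\cite{Edelsbrunner87}.

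Within a cell the following are fixed:
\begin{itemize}
\item the partition of groups into tied bid layers $L_1>L_2>\cdots>L_H$;
\item which multipliers equal one.
\end{itemize}
Since each advertiser has distinct values in distinct groups and $\alpha_i>0$, each layer contains at most $n$ groups.

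\textbf{Step 2: additive payment formula.} A valid group-consecutive ranking is exactly a choice of an order $\sigma_h$ within each layer. For fixed layer orders, the slot positions of every layer are determined, since layer sizes are fixed.

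The price of an item in layer $h$ depends on only two things:
\begin{itemize}
\item the groups of other owners placed after it inside $L_h$, whose bid is the common layer bid;
\item failing that, the highest bid in a lower layer from a different owner. This is the bid of the first lower layer containing an item of another owner, and it is independent of all orders.
\end{itemize}
Hence the payment vector decomposes as
\[
P(a,\alpha)=\sum_h \beta_h(\alpha)^\top C_h(\sigma_h),
\]
where $C_h(\sigma_h)$ is a small matrix of CTR-weighted coefficients determined by $\sigma_h$ alone, and $\beta_h(\alpha)$ collects the $O(n)$ candidate price bids. Concretely, I would record for each owner $i$ and each candidate price source (the layer bid itself, or the next-lower-layer bid of each owner) the sum of CTRs of owner-$i$ slots priced by that source. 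This gives a coefficient matrix $C_h(\sigma_h)\in\R^{n\times(n+1)}$, where the price source "the next layer owned by someone else" is resolved per owner $i$ by fixed cell data.

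Expected payments are then linear in the expected matrices. The set of attainable aggregate matrices is the Minkowski sum $\sum_h\conv\{C_h(\sigma):\sigma\}$, and each summand has at most $n!$ vertices.

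\textbf{Step 3: enumerating vertices of the Minkowski sum.} This sum lives in dimension $d=n(n+1)$. A Minkowski sum of $H\le M$ polytopes, each with $O(1)$ vertices, in fixed dimension $d$ has $O(M^{d-1})$ vertices, which can be enumerated in polynomial time.

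Each vertex is a sum of vertices of the summands chosen by a common generic linear functional, i.e.\ a choice of $\sigma_h$ per layer. It is therefore realized by an explicit ranking.

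\textbf{Step 4: the candidate system.} The expected payment vector is linear in the aggregate matrix, so any attainable payment vector lies in the convex hull of the images of these vertex rankings in $\R^n$. By \Cref{lem:outcome-compression}, it is attained by at most $n+1$ of them.

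For each cell and each choice of at most $n+1$ vertex rankings, we solve a system in variables $\alpha\in\R^n$ and weights $\lambda\in\R^{n+1}$ consisting of:
\begin{itemize}
\item the cell constraints;
\item $\lambda\ge0$ and $\sum\lambda=1$;
\item budget feasibility $\sum_t\lambda_t P_i(a_t,\alpha)\le B_i$;
\item maximal pacing, encoded by guessing for each $i$ whether $\alpha_i=1$ or the budget is tight. This choice is finite, $2^n$ options.
\end{itemize}
The system has degree two, with products of a weight and an affine bid, and $q=2n+1$ variables. By \Cref{prop:exist}, some candidate system is feasible, and the solver of~\cite{BPR06} runs in $M^{O(n^2)}$ time per system.

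The number of systems is $O(M^{2n})\cdot O(M^{(n^2+n)(n+1)})$. This matches the $M^{n^3+O(n^2)}$ exponent.

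\textbf{Main obstacle.} The main obstacle is Step 2: proving that the cross-layer price dependence really depends only on cell data, not on other layers' orders. The subtle cases are tail prices set by an undisplayed lower layer, and the possibility that the vertex for a cell depends on $\alpha$. The Minkowski sum is taken over coefficient matrices, not payments, precisely so that the vertex set is $\alpha$-independent within a cell. Verifying that linear functionals of the form $\beta(\alpha)$ are covered by the convex hull of these vertices is then immediate.
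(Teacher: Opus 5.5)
\textbf{There is a genuine gap in Step~2.} It sits exactly where you say the check is ``immediate''. Your price sources are defined relative to the layer (``the layer bid itself'', ``the next-lower-layer bid''), so the price vector $\beta_h(\alpha)$ depends on $h$. For example, the ``layer bid'' coordinate means $b^{(1)}$ in layer~1 and $b^{(2)}$ in layer~2.

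Hence the payment $\sum_h C_h(\sigma_h)\beta_h(\alpha)$ is linear in the tuple $(C_1(\sigma_1),\ldots,C_H(\sigma_H))$ but \emph{not} in the aggregate $\sum_h C_h(\sigma_h)$. Take two advertisers with values $4,2,1$ each, $\alpha=(1,1)$, $K=4$ and unit CTRs. The layer orders $((1,2),(2,1),\cdot)$ and $((2,1),(1,2),\cdot)$ give the same aggregate coefficient matrix but payments $(5,4)$ and $(4,5)$.

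This breaks Step~4, not just the notation. A vertex of the true payment set maximizing $\omega\in\R^n$ maximizes, in layer $h$, the functional $C\mapsto\omega^\top C\beta_h(\alpha)$. That functional changes with $h$ through the ratios of layer bids to next-lower bids, whereas your candidates use one functional for all layers.

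Concretely, take three advertisers, all multipliers one, and $K=8$ unit-CTR slots. Let the layers have owner sets $\{1,2,3\},\{1\},\{2\},\{1,2,3\},\{2\},\{1\}$ at bids $b_1>\cdots>b_6>0$.
\begin{itemize}
\item Owners~1 and~2 have identical ``last-position'' coefficient matrices in the two triple layers. So every candidate from a generic common functional puts the same one of them last in both layers.
\item The objective $\omega=(1,1,L)$ with $L$ large is uniquely maximized over attainable payments by putting owner~2 last in the first triple layer (gap $b_1-b_2<b_1-b_3$) and owner~1 last in the second (gap $b_4-b_5<b_4-b_6$).
\item That payment vector differs from every candidate's, so it lies outside the convex hull of your candidates' payments. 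Your completeness argument therefore fails.
\end{itemize}

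\textbf{The missing idea is that all layers must share a single price vector.} This is what the paper does. Every price in layer $h$ is $v_{g'}\alpha_j$ for the owner $j$ of a fixed pricing representative $g'$. Index the source by $j$ and fold the constant $v_{g'}$ into the coefficient: $(E_{h,\sigma})_{ij}$ is the sum of $\gamma_g v_{g'}$ over owner-$i$ groups priced by an owner-$j$ representative. Equivalently, write $\beta_h(\alpha)=V_h\alpha$ with $V_h$ constant on the cell, and use $C_h(\sigma)V_h\in\R^{n\times n}$ instead of $C_h(\sigma)$.

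Then every ranking pays $\bigl(\sum_h E_{h,\sigma_h}\bigr)\alpha$, and the Minkowski sum lives in the zero-diagonal space of dimension $n(n-1)$. Your Steps~3 and~4 go through verbatim, with exponent $2n+n(n-1)(n+1)+O(n^2)=n^3+O(n^2)$.

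The rest of your plan matches the paper: comparison cells, generic-functional vertex enumeration with realizing rankings, support of size $n+1$, and $2n+1$ variables of degree two. Your $2^n$ pacing guess is harmless but redundant, since the cell already fixes which multipliers equal one.
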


Algorithm~\ref{alg:advertisers} summarizes the construction in three phases: comparison-cell decomposition, layerwise convexification, and equilibrium construction.

\begin{algorithm}[H]
\small
\caption{Exact ABE-B with a Fixed Number of Advertisers}
\label{alg:advertisers}
\begin{algorithmic}[1]
\REQUIRE An explicit rational budget instance with fixed $n$.
\ENSURE An exact ABE-B with support size at most $n+1$.
\IF{$n=1$}
    \STATE Set $\alpha_1=1$; let $\pi$ select a value-sorted ranking.
    \STATE \textbf{return} $(\alpha,\pi)$.
\ENDIF
\STATE \textbf{Phase I: Comparison-Cell Decomposition.}
\STATE Group items by owner and value; fix internal item orders.
\STATE Enumerate comparison cells, including ties and multiplier endpoints.
\FORALL{cells $F$}
    \STATE \textbf{Phase II: Layerwise Convexification.}
    \STATE Identify bid layers and the paced advertiser set $\mathsf P(F)$.
    \STATE Build local payment matrices and their Minkowski sum $Q_F$.
    \STATE Enumerate the vertices of $Q_F$ and associate a realizing ranking with each vertex.
    \STATE \textbf{Phase III: Equilibrium Construction.}
    \FORALL{nonempty vertex subsets $J$ with $|J|\leq n+1$}
        \STATE Form \Cref{eq:adv-system} for $F$ and $J$.
        \STATE Apply exact semialgebraic sampling.
        \IF{an algebraic solution $(\alpha^*,w^*)$ is found}
            \STATE Expand the associated group orders and assign the sampled weights.
            \STATE Merge identical rankings and remove zero weights.
            \STATE \textbf{return} the resulting exact equilibrium $(\alpha^*,\pi)$.
        \ENDIF
    \ENDFOR
\ENDFOR
\end{algorithmic}
\end{algorithm}

For $n=1$, payments are zero, so multiplier one and a value-sorted ranking give an exact ABE-B. Assume $n\geq2$. \Cref{lem:positive} restricts the search to $(0,1]^n$.

\noindentparagraph{\textbf{\textup{Cell Decomposition.}}}
The order of two cross-owner group bids changes only when $\alpha_{i(g)}/\alpha_{i(g')}=v_{g'}/v_g$. We therefore partition multiplier space by the hyperplanes $v_g\alpha_{i(g)}-v_{g'}\alpha_{i(g')}=0$ for cross-owner groups $g,g'$, together with the endpoint hyperplanes $\alpha_i=1$.

These hyperplanes partition $(0,1]^n$ into nonempty relatively open cells of possibly different dimensions~\cite{Edelsbrunner87}. Let $\mathcal F$ be the cell collection. Include equality faces to represent bid ties and multiplier endpoints. Each cell fixes all bid comparisons and endpoint states; multiplier values may vary.

Each cell $F\in\mathcal F$ also determines the paced advertiser set $\mathsf P(F)=\{i\in[n]:\alpha_i<1\}$. We consider the equilibrium conditions within each cell.

\noindentparagraph{\textbf{\textup{Constructing Bid Layers.}}}
Fix an enumerated cell $F$. Groups $g,g'$ share a bid layer if and only if $b_g(\alpha)=b_{g'}(\alpha)$ for every $\alpha\in F$. Write the layers as $\mathcal G_1,\ldots,\mathcal G_H$ in decreasing bid order.

Value groups depend only on ownership and value, whereas bid layers depend on the cell. Positive multipliers prevent distinct value groups of one advertiser from tying. Thus, each layer contains at most one group per advertiser and admits at most $n!$ group orders.

Fix a group order $\sigma$ in layer $h$. Write $g'\prec_\sigma g$ if $g'$ precedes $g$, and define
\[
s_h=\sum_{h'<h}\sum_{g'\in\mathcal G_{h'}}q_{g'},
\quad
r_g=s_h+\sum_{\substack{g'\in\mathcal G_h\\g'\prec_\sigma g}}q_{g'},
\quad
\gamma_g=\sum_{k=r_g+1}^{r_g+q_g}c_k.
\]

Here, $s_h$ counts items in earlier layers, $r_g$ counts items before $g$, and $\gamma_g$ is $g$'s total click-through rate. We suppress the dependence of $r_g$ and $\gamma_g$ on $\sigma$.

\noindentparagraph{\textbf{\textup{Layerwise Payment Representation.}}}
Let $\Sigma_h$ be layer $h$'s group orders and fix $\sigma\in\Sigma_h$. For a nonlast group $g$, its pricing representative is the next group, which has another owner and the same bid. For the last group, use a fixed rule to choose an other-owner group in the highest lower layer containing one. If none exists, its price is zero. Eligible groups in that layer have equal bids, so their order does not affect the price.

Let $E_{h,\sigma}$ be the $n\times n$ matrix whose $(i,j)$ entry is the sum of $\gamma_gv_{g'}$ over groups $g$ owned by $i$ whose pricing representative $g'$ is owned by $j$. Each contribution gives payment $\gamma_gv_{g'}\alpha_j$, so $E_{h,\sigma}\alpha$ is the layer's payment vector. Its diagonal is zero because pricing excludes the displayed product's owner.

A layer's payments depend only on its group order and $\alpha$. A complete ranking $a$ with layer orders $\sigma_1,\ldots,\sigma_H$ therefore satisfies
\[
\bigl(P_i(a,\alpha)\bigr)_{i\in[n]}
=
\left(\sum_{h=1}^{H}E_{h,\sigma_h}\right)\alpha
\qquad(\alpha\in F).
\]

\noindentparagraph{\textbf{\textup{Combining Layerwise Payments.}}}
A ranking chooses one order per layer. Enumerating all combinations can be exponential in $H$, even for fixed $n$. Instead, represent each layer's randomization by
\[
Q_h=\conv\{E_{h,\sigma}:\sigma\in\Sigma_h\},
\]
and combine layers by the Minkowski sum
\[
Q_F=\sum_{h=1}^{H}Q_h
=
\left\{\sum_{h=1}^{H}D_h:D_h\in Q_h\right\}.
\]

With this construction, only $Hn!$ local matrices are needed. By the additive payment formula above, the expected payment matrix of a distribution over complete rankings is determined by its marginal randomization in each layer. The following lemma shows that $Q_F$ represents every attainable expected payment vector.

\begin{lemma}[Expected Payment-Matrix Representation]
\label{lem:minkowski}
The polytope $Q_F$ is the convex hull of total payment matrices of all group-consecutive rankings valid on $F$:
\[
Q_F=
\conv\left\{
\sum_{h=1}^{H}E_{h,\sigma_h}:
\sigma_h\in\Sigma_h
\right\}.
\]

For fixed $\alpha\in F$, the attainable expected payment vectors are exactly $\{D\alpha:D\in Q_F\}$.
\end{lemma}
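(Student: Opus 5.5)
The proof has two parts: a convex-hull identity for Minkowski sums, and then a translation from distributions over rankings to points of $Q_F$.

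\emph{Step 1: the convex-hull identity.} We use the standard fact that a Minkowski sum of convex hulls of finite sets is the convex hull of the sumset:
\[
\sum_h \conv S_h=\conv\Bigl(\sum_h S_h\Bigr).
\]
For the inclusion $\supseteq$, each element of $\sum_h S_h$ lies in $\sum_h\conv S_h$, and a Minkowski sum of convex sets is convex.

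For the inclusion $\subseteq$, take $D=\sum_h D_h$ with $D_h=\sum_{\sigma}\lambda_{h,\sigma}E_{h,\sigma}$. Form the product distribution that chooses each $\sigma_h$ independently with probability $\lambda_{h,\sigma_h}$. Its expectation of $\sum_h E_{h,\sigma_h}$ equals $\sum_h D_h$ by linearity of expectation, so $D$ lies in the convex hull of the sums. This product-measure construction is also exactly what we reuse for attainability below.

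It remains to identify $\{\sum_h E_{h,\sigma_h}\}$ with the total payment matrices of group-consecutive rankings valid on $F$. A group-consecutive ranking valid at $\alpha\in F$ must list the layers in decreasing bid order, since bids in different layers differ strictly throughout the relatively open cell $F$. Within each layer it chooses an arbitrary group order. So these rankings are in bijection with tuples $(\sigma_1,\dots,\sigma_H)$, and the additive payment formula stated above gives their total matrix as $\sum_h E_{h,\sigma_h}$.

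The point needing care is that this formula is correct. The price of a nonlast group in a layer is set by the next group, which has another owner, since each layer has at most one group per advertiser, and it has the same bid. By validity, no lower-ranked item can bid higher, so this bid is the maximum. Within its own group, same-owner items are excluded from pricing. For the last group of a layer, the highest lower bid from another owner lies in the highest lower layer containing such an owner. That bid is independent of the orders in lower layers. Hence $E_{h,\sigma}$ depends only on $\sigma$, and the matrix is constant on $F$ because the layer structure and $\gamma_g$ do not depend on $\alpha$ within $F$.

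\emph{Step 2: attainable payments.} Fix $\alpha\in F$. Any valid distribution over rankings can be replaced by a group-consecutive one with the same expected payments, by the reordering argument of \Cref{prop:exist}, since $\alpha$ is fixed. Such a distribution $\pi$ gives expected payments $\mathbb E_\pi[\sum_h E_{h,\sigma_h}]\alpha=D\alpha$, where $D$ is a convex combination of the total matrices and therefore lies in $Q_F$ by Step 1.

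Conversely, each $D\in Q_F$ is a convex combination of finitely many total matrices, and each total matrix is realized by a valid ranking. The corresponding distribution is valid at $\alpha$ and attains $D\alpha$.

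\emph{Main obstacle.} The one subtle point is a ranking valid at $\alpha$ that is not group-consecutive. Its items of a group may be interleaved with tied items of other owners, so its payment need not match any $E_{h,\sigma}$. I read the statement as concerning group-consecutive rankings, consistent with the search restriction of \Cref{prop:exist}. I would state that the attainable set is taken over group-consecutive distributions, which suffices for the algorithm.
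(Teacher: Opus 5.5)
Your proof is essentially the paper's. One inclusion uses the layer marginals; your convexity-of-Minkowski-sums step is the same observation. The other inclusion chooses each layer's order independently with the given coefficients. Your check of the additive payment formula spells out what the paper establishes just before the lemma.

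The one thing to remove is the first sentence of Step~2, which is false. The reordering argument of \Cref{prop:exist} only permutes items inside blocks that are already contiguous. Interleaving tied items of different owners can produce payments outside $\{D\alpha:D\in Q_F\}$. For instance, take $K=2$, owner~1 holding two items and owner~2 one item, all bidding $b$, with nothing below them. The interleaved ranking pays $(c_1b,c_2b)$, with total $(c_1+c_2)b$. The two group orders give $((c_1+c_2)b,0)$ and $(0,c_1b)$, and their segment reaches total $(c_1+c_2)b$ only at the first endpoint.

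So, as your last paragraph and the paper's proof both do, read ``attainable'' as attainable by distributions over valid group-consecutive rankings. This is all the algorithm needs, since \Cref{prop:exist} supplies a group-consecutive equilibrium.
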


\begin{proof}
For a distribution over valid group-consecutive rankings, let $\lambda_{h,\sigma}$ be order $\sigma$'s marginal probability in layer $h$. The expected total matrix is
\[
\sum_{h=1}^{H}
\left(
\sum_{\sigma\in\Sigma_h}
\lambda_{h,\sigma}E_{h,\sigma}
\right).
\]

Each term in parentheses lies in $Q_h$, so the matrix lies in $Q_F$.

Conversely, write $D\in Q_F$ as $\sum_hD_h$, with each $D_h$ a convex combination of local matrices. Independently choose each layer's order with these coefficients as probabilities. Concatenating orders and expanding groups gives valid complete rankings with expected matrix $D$. This proves equality. Multiplying by fixed $\alpha$ gives the payment-vector claim.
\end{proof}

\noindentparagraph{\textbf{\textup{Vertex Enumeration and Representative Rankings.}}}
We enumerate the vertices of $Q_F$ and associate a realizing ranking with each one. Let $\mathcal E_h$ be the distinct local matrices in layer $h$, with one corresponding group order for each matrix.

Identify zero-diagonal matrices with vectors in $\R^d$, where $d=n(n-1)$. An objective $\xi\in\R^d$ assigns score $\langle\xi,E\rangle=\sum_{i\neq j}\xi_{ij}E_{ij}$.

For each layer and distinct $E,E'\in\mathcal E_h$, use the hyperplane $\langle \xi,E-E'\rangle=0$.

Enumerate full-dimensional arrangement regions and choose a rational objective in each. Local score comparisons are strict and fixed on a region, so each layer has a unique maximizer $E_h^*(\xi)$. Set $D^*(\xi)=\sum_{h=1}^{H}E_h^*(\xi)$.

Concatenate associated orders to realize this matrix and remove duplicate total matrices.

Maximizing a linear objective over $Q_F=\sum_hQ_h$ amounts to maximizing it over each summand. Unique local maximizers sum to the unique maximizer over $Q_F$, hence a vertex. Their associated orders give a realizing ranking. Conversely, every vertex uniquely maximizes all objectives in some nonempty open set. This set contains an objective outside all comparison hyperplanes, so its enumerated region produces that vertex.

The objective arrangement has dimension $d=n(n-1)$ and at most $H\binom{n!}{2}$ hyperplanes. Standard fixed-dimensional arrangement enumeration therefore gives all vertices and realizing rankings in polynomial time for fixed $n$.

The distinct vertices $D^{(1)},\ldots,D^{(N_F)}$ and associated rankings $a^{(1)},\ldots,a^{(N_F)}$ form the candidate list.

\noindentparagraph{\textbf{\textup{Support Compression.}}}
At fixed $\alpha\in F$, candidate $\ell$ has payment vector $D^{(\ell)}\alpha\in\R^n$. We can now apply \Cref{lem:outcome-compression} to the representative rankings. This reduces the distribution to at most $n+1$ representative rankings while preserving every advertiser's expected payment and hence all budget equilibrium conditions at the same multipliers.

\begin{lemma}[Small Candidate Support]
\label{lem:vertex-support}
At fixed $\alpha\in F$, every distribution over valid group-consecutive rankings can be replaced by at most $n+1$ representative rankings with the same expected advertiser payments. This preserves an ABE-B at the same profile.
\end{lemma}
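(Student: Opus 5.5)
The plan is to combine \Cref{lem:minkowski} with \Cref{lem:outcome-compression}. Fix $\alpha\in F$ and a distribution $\pi$ over valid group-consecutive rankings. By \Cref{lem:minkowski}, the expected total payment matrix $D_\pi$ of $\pi$ lies in $Q_F$, and the expected payment vector equals $D_\pi\alpha$.

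The first step uses the fact that $Q_F$ is a polytope whose vertices are exactly the enumerated candidates $D^{(1)},\ldots,D^{(N_F)}$. Hence $D_\pi=\sum_\ell\mu_\ell D^{(\ell)}$ for some convex weights $\mu$. Since $\alpha$ is fixed, the map $D\mapsto D\alpha$ is linear. It follows that
\[
\bigl(P_i(\pi,\alpha)\bigr)_{i\in[n]}=\sum_\ell\mu_\ell D^{(\ell)}\alpha,
\]
so the required statistic lies in $\conv\{D^{(\ell)}\alpha\}\subseteq\R^n$. Each candidate $\ell$ is realized by its associated ranking $a^{(\ell)}$. This ranking is group-consecutive and valid on all of $F$: its layer orders only permute groups of equal bid within a layer, and it respects the strict order between layers that the cell fixes. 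Its payment vector is therefore exactly $D^{(\ell)}\alpha$, by the additive formula $P(a,\alpha)=\bigl(\sum_hE_{h,\sigma_h}\bigr)\alpha$.

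The second step applies \Cref{lem:outcome-compression} with $s(a^{(\ell)})=D^{(\ell)}\alpha$ and $d=n$. This yields a distribution $\pi'$ on at most $n+1$ representative rankings with $P_i(\pi',\alpha)=P_i(\pi,\alpha)$ for every $i$. If several chosen candidates coincide as rankings, we merge them, which can only shrink the support.

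The final step transfers the equilibrium property. Suppose $(\alpha,\pi)$ is an ABE-B supported on group-consecutive rankings. Then bid-consistency of $\pi'$ holds because every $a^{(\ell)}$ is valid at $\alpha$. The budget-feasibility and maximal-pacing conditions refer to $\pi$ only through the expected payments $P_i(\pi,\alpha)$ and through $\alpha$. Both are unchanged, so $(\alpha,\pi')$ is again an ABE-B.

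The only delicate point is the claim that the enumerated candidate rankings are valid at the given $\alpha$ and realize exactly $D^{(\ell)}\alpha$. This requires the pricing-representative rule to give the true GSP price in every case, including the last group of a layer, where the price is set by an other-owner group in a lower layer. That the rule does so follows from the construction of $E_{h,\sigma}$, since all eligible groups in the relevant lower layer have equal bids. Everything else is linear algebra.
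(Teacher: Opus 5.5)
Your proposal is correct and follows the same route as the paper's proof. You use \Cref{lem:minkowski} together with the vertex enumeration to place the expected payment vector in $\conv\{D^{(\ell)}\alpha\}$, apply \Cref{lem:outcome-compression} with $d=n$, and observe that budget feasibility and maximal pacing depend only on $\alpha$ and the expected payments. Your additional check that each $a^{(\ell)}$ is valid throughout $F$ and realizes exactly $D^{(\ell)}\alpha$ spells out what the paper takes directly from its construction of the candidate list.
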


\begin{proof}
By \Cref{lem:minkowski} and vertex enumeration, the expected payment vector lies in $\conv\{D^{(\ell)}\alpha:\ell\in[N_F]\}$. Apply \Cref{lem:outcome-compression} with $d=n$. Preserving payments at the same multipliers preserves budget feasibility and maximal pacing.
\end{proof}

Thus, we only enumerate nonempty $J\subseteq[N_F]$ with $|J|\leq n+1$, since every attainable expected payment vector can be preserved on such a support. For $\ell\in J$, assign $a^{(\ell)}$ probability $w_\ell$, with $w_\ell\geq0$ and $\sum_{\ell\in J}w_\ell=1$. We solve these weights and multipliers jointly.

For fixed $n$, there are $O((N_F+1)^{n+1})$ possible supports, so polynomially bounded $N_F$ makes this enumeration polynomial. Together with \Cref{prop:exist}, the support lemma ensures that one of them realizes an exact ABE-B.

\noindentparagraph{\textbf{\textup{Equilibrium Constraints.}}}
Fix a cell $F$ and support $J$. Advertiser $i$'s expected payment is
\[
p_i(\alpha,w)
=
\sum_{\ell\in J}w_\ell
\sum_{j\in[n]}(D^{(\ell)})_{ij}\alpha_j.
\]

Impose the candidate system
\begin{equation}
\label{eq:adv-system}
\left\{
\begin{aligned}
&\alpha\in F,\qquad w_\ell\geq0\quad(\ell\in J),\\
&\sum_{\ell\in J}w_\ell=1,\\
&p_i(\alpha,w)\leq B_i\quad(i\in[n]),\\
&p_i(\alpha,w)=B_i\quad(i\in\mathsf P(F)).
\end{aligned}
\right.
\end{equation}

Every solution defines an exact ABE-B. The remaining algorithmic point is that the number of unknowns is independent of the catalog size.

\begin{lemma}[Polynomial-Time Candidate Feasibility]
\label{lem:adv-system-feasibility}
For fixed $n$, we can decide feasibility of \Cref{eq:adv-system} and compute an exact solution whenever one exists, in polynomial time.
\end{lemma}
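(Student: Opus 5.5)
The plan is to check that \Cref{eq:adv-system} meets the hypotheses of the Basu--Pollack--Roy sampling theorem quoted in \Cref{sec:framework}. That means bounding three things: the number of variables $q$, the degree, and the number of constraints $m$ with their coefficient sizes. Once these are bounded, the running time $(m+1)^{O(q^2)}$ is polynomial for fixed $n$.

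\textbf{Variables.} The unknowns are the multipliers $\alpha\in\R^n$ and the weights $w_\ell$ for $\ell\in J$. Since $|J|\leq n+1$, we have $q\leq 2n+1$. This is independent of $M$, the number of layers $H$, and $N_F$.

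\textbf{Degree.} Each matrix $D^{(\ell)}$ is a fixed rational matrix computed during vertex enumeration. Hence $p_i(\alpha,w)=\sum_{\ell\in J}w_\ell\sum_j(D^{(\ell)})_{ij}\alpha_j$ is a bilinear polynomial of degree two. Its only nonlinear terms are products of a weight and a multiplier, which matches the form described in \Cref{sec:framework}.

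\textbf{Encoding the cell.} Next I would write $\alpha\in F$ explicitly. The cell $F$ is a relatively open face of the arrangement generated by the hyperplanes $v_g\alpha_{i(g)}-v_{g'}\alpha_{i(g')}=0$ and $\alpha_i=1$, so membership is a conjunction of linear equalities and strict inequalities. It would be too many constraints to list every group pair, since there are $O(M^2)$ of them. That count is still polynomial, but a shorter encoding is available. Order the cell's bid layers and impose:
\begin{itemize}[label={}]
\item equalities inside each layer;
\item a strict inequality between consecutive layer representatives;
\item $0<\alpha_i<1$ or $\alpha_i=1$ according to $\mathsf P(F)$.
\end{itemize}
This gives $O(M+n)$ linear constraints, which are enough to imply every pairwise comparison by transitivity. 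The remaining constraints are $|J|$ nonnegativity constraints, one normalization, and at most $2n$ budget constraints. So $m=O(M+n)$ overall. I would also note that strict inequalities are allowed by the sampling theorem.

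\textbf{Coefficient sizes.} This is the step I expect to be the main obstacle. The entries of $D^{(\ell)}$ are sums over $H\leq M$ layers of terms $\gamma_g v_{g'}$, where each $\gamma_g$ is a sum of CTRs. Each entry is therefore a sum of polynomially many products of input rationals, and its bit length is polynomial in the input size. Vertex enumeration only selects the local matrices $E_{h,\sigma}$ and adds them; it never inverts anything. Hence every coefficient of \Cref{eq:adv-system} has polynomial bit size.

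\textbf{Running time.} Feeding the system to exact sampling costs $(M+n)^{O(n^2)}$ arithmetic operations on polynomial-size numbers. When the system is feasible, the theorem returns an exact algebraic solution $(\alpha^*,w^*)$ encoded by Thom codes or isolating data. This is polynomial for fixed $n$.
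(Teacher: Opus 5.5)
Your proposal is correct and follows the paper's proof. Both arguments bound the variables by $2n+1$, note that the only nonlinear terms are the products $w_\ell\alpha_j$, observe that the entries of $D^{(\ell)}$ are sums of rational value--CTR products, and then apply the exact sampling bound from \Cref{sec:framework}. Your layer-by-layer encoding of $\alpha\in F$ via transitivity is a valid refinement that reduces the paper's $O(M^2+n)$ constraint count to $O(M+n)$, but it changes neither polynomiality nor the $O(n^2)$ exponent.
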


\begin{proof}[Proof of \Cref{lem:adv-system-feasibility}]
The $n$ multipliers and $|J|$ weights give at most $2n+1$ variables. There are $O(M^2+n)$ conditions, all affine except payment terms proportional to $w_\ell\alpha_j$, so degree is at most two. Matrix entries are sums of rational value--CTR products. The exact feasibility bound of \Cref{sec:framework} applies.
\end{proof}

\noindentparagraph{\textbf{\textup{Correctness and Running Time.}}}
Recall that $d=n(n-1)$; let $N_{\mathrm{cell}}$ count comparison cells and $N_F$ vertices in cell $F$.

\begin{proof}[Proof of \Cref{thm:adv-computation}]
The representative rankings are valid throughout their cell, and \Cref{eq:adv-system} uses their actual expected payments. Every returned algebraic solution thus satisfies bid consistency, budget feasibility, and maximal pacing.

For completeness, take a group-consecutive ABE-B from \Cref{prop:exist}. Its profile lies in an enumerated cell. \Cref{lem:vertex-support} replaces its distribution by at most $n+1$ representative rankings with the same expected payments. This support is enumerated, so a candidate system is feasible and the algorithm returns an exact ABE-B.

We next bound the running time. The case $n=1$ requires only sorting. Assume $n\geq2$.

\emph{Comparison Cells.} The $O(M^2)$ hyperplanes in dimension $n$ define $N_{\mathrm{cell}}=O(M^{2n})$ cells, including equality faces. Standard arrangement algorithms enumerate them in $M^{O(n)}$ time~\cite{Edelsbrunner87}.

\emph{Local Matrices and Vertices.} There are at most $M$ layers and $n!$ orders per layer. Their matrices take $O(M^2)$ time to construct for fixed $n$. The objective arrangement has $O(M)$ hyperplanes in dimension $d=n(n-1)$, so $N_F=O(M^d)$. The vertices and their realizing rankings can be computed in $M^{O(n^2)}$ time per cell.

\emph{Candidate Supports.} Their number per cell is
\[
\sum_{r=1}^{\min\{n+1,N_F\}}\binom{N_F}{r}
=
O\bigl((N_F+1)^{n+1}\bigr)
=
O\bigl(M^{d(n+1)}\bigr).
\]

\emph{Candidate Systems.} Each has at most $2n+1$ variables, $O(M^2+n)$ conditions, and degree at most two. By the exact feasibility bound in \Cref{sec:framework}, each system can be constructed and solved in $M^{O(n^2)}$ time.

Combining these bounds gives
\[
\mathcal T_{\mathrm{total}}
=
O\bigl(M^{2n+d(n+1)+O(n^2)}\bigr)
=
O\bigl(M^{n^3+O(n^2)}\bigr),
\]
since $2n+d(n+1)=n^3+n$. Constructing at most $n+1$ complete rankings takes $O(M)$ additional time for fixed $n$, within this bound.
\end{proof}

\subsection{ROI Constraints: Compressing Expected Surplus}
\label{sec:roi-advertisers}

The comparison-cell decomposition, bid-layer construction, and vertex-enumeration method follow the same framework as ABE-B. For ROI, we augment each local payment matrix with its value vector and compress the resulting surplus vectors.

\begin{theorem}[Exact ABE-R Computation with Few Advertisers]
\label{thm:roi-computation}
For every fixed number of advertisers $n$, there exists a polynomial-time algorithm that computes an exact ABE-R $(\alpha,\pi)$ with time complexity $O(M^{n^3+O(n^2)})$.
\end{theorem}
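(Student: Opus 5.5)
The plan mirrors the budget algorithm of \Cref{thm:adv-computation}, replacing payment vectors by surplus vectors. Multipliers lie in $[1,A]^n$, so positivity holds automatically and distinct value groups of one advertiser never tie. I partition $[1,A]^n$ by the cross-owner hyperplanes $v_g\alpha_{i(g)}-v_{g'}\alpha_{i(g')}=0$ together with the endpoint hyperplanes $\alpha_i=1$ and $\alpha_i=A$. Each relatively open cell $F$ then fixes the bid layers $\mathcal G_1,\ldots,\mathcal G_H$ and the unsaturated set $\mathsf U(F)=\{i:\alpha_i<A\}$. Within a layer $h$ and order $\sigma\in\Sigma_h$, I keep the payment matrix $E_{h,\sigma}$ and add the value vector $e_{h,\sigma}\in\R^n$, whose $i$th entry is the sum of $\gamma_gv_g$ over groups $g$ of layer $h$ owned by $i$. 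Because layer positions $s_h$ are fixed in the cell, values are additive across layers exactly as payments are. A complete ranking with layer orders $\sigma_h$ therefore has value vector $\sum_h e_{h,\sigma_h}$ and surplus vector $\sum_h e_{h,\sigma_h}-\bigl(\sum_hE_{h,\sigma_h}\bigr)\alpha$.

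Next I would form augmented local points $(E_{h,\sigma},e_{h,\sigma})\in\R^{n(n-1)+n}=\R^{n^2}$, set $Q_h$ to be their convex hull, and let $Q_F=\sum_hQ_h$. The proof of \Cref{lem:minkowski} transfers verbatim: marginal layer randomizations give points of $Q_F$, and independent layer choices realize any point. So the attainable expected (value, payment-matrix) pairs are exactly $Q_F$. Vertices are enumerated as before, using an objective arrangement in dimension $n^2$ with $O(M)$ hyperplanes, so $N_F=O(M^{n^2})$, and each vertex is realized by concatenating the locally maximizing orders. For fixed $\alpha$, the attainable surplus vectors are affine images of $Q_F$, hence lie in the convex hull of vertex images $e^{(\ell)}-D^{(\ell)}\alpha\in\R^n$. \Cref{lem:outcome-compression} with $d=n$ then gives support at most $n+1$ while preserving every $R_i$, and hence all ROI conditions at the same $\alpha$.

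For each cell and each support $J$ with $|J|\le n+1$, I solve the following system in the unknowns $\alpha$ and $w$:
\[
\alpha\in F,\quad w\ge0,\quad \sum_{\ell\in J} w_\ell=1,\quad \sum_{\ell\in J} w_\ell\bigl(e^{(\ell)}_i-(D^{(\ell)}\alpha)_i\bigr)\ge0\ \ (i\in[n]),
\]
with equality for $i\in\mathsf U(F)$. The system has at most $2n+1$ variables, is of degree two, and has polynomially many constraints. Soundness follows because the representative rankings are valid throughout $F$. Completeness follows by taking the group-consecutive ABE-R of \Cref{prop:exist} and compressing it as above.

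The running time is $O(M^{2n})$ cells, times $O(M^{n^2(n+1)})$ supports, times $M^{O(n^2)}$ per system. That gives $M^{n^3+O(n^2)}$, since $n^2(n+1)=n^3+n^2$. The main point needing care is the augmented additivity: one must check that a layer's value vector, and in particular which of its groups fall into the first $K$ slots, depends only on that layer's own order given the cell. This holds because $s_h$ is fixed by the cell.
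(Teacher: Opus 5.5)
Your proposal is correct and follows the paper's proof essentially step for step: the same cell decomposition of $[1,A]^n$ with both endpoint hyperplanes, the same augmented payment--value Minkowski sum in dimension $n^2$ with vertex enumeration, surplus compression to at most $n+1$ rankings, and the same $(2n+1)$-variable degree-two systems and running-time count. The only cosmetic difference is that you do not treat $n=1$ separately, whereas the paper handles it directly by setting $\alpha_1=A$. Your general procedure already covers this case, since the cell $\{A\}$ yields a feasible system.
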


The output distribution uses at most $n+1$ group-consecutive rankings.

Algorithm~\ref{alg:roi-advertisers} follows the same three-phase structure as Algorithm~\ref{alg:advertisers}, using pairs of payment matrices and value vectors and surplus constraints in place of payment matrices and budget constraints.

For $n=1$, set $\alpha_1=A$ and let $\pi$ select a value-sorted ranking with probability one. Payments are zero, and maximal pacing holds at the cap. Assume $n\geq2$.

\noindentparagraph{\textbf{\textup{Cell Decomposition and Augmented Representation.}}}
As in ABE-B, compare all cross-owner group bids, now on $[1,A]^n$, and include both endpoint forms $\alpha_i-1$ and $\alpha_i-A$. Include equality faces. A cell $F$ fixes all bid comparisons and the set $\mathsf C(F)=\{i:\alpha_i<A\text{ throughout }F\}$. Advertisers in this set must have zero expected surplus; those at the cap need only have nonnegative expected surplus. Build bid layers and local payment matrices as in ABE-B.

For each $\sigma\in\Sigma_h$, also record value vector $W_{h,\sigma}$ with coordinates
\[
W_{h,\sigma,i}
=
\sum_{\substack{g\in\mathcal G_h\\i(g)=i}}
\gamma_gv_g.
\]

Values and payments add over layers, so define
\[
Q_F^R
=
\sum_{h=1}^{H}
\conv\left\{
(E_{h,\sigma},W_{h,\sigma}):
\sigma\in\Sigma_h
\right\}.
\]

The argument of \Cref{lem:minkowski} identifies this polytope with the convex hull of pairs of payment matrices and value vectors for valid group-consecutive rankings on $F$. Its dimension is at most $n(n-1)+n=n^2$, so the same objective-arrangement construction enumerates its $N_F$ vertices. For each vertex $(D^{(\ell)},W^{(\ell)})$, let $a^{(\ell)}$ be the complete ranking defined by the corresponding layer orders. Then
\[
V_i(a^{(\ell)})=W_i^{(\ell)},
\qquad
P_i(a^{(\ell)},\alpha)=(D^{(\ell)}\alpha)_i
\quad(\alpha\in F).
\]

\begin{algorithm}[H]
\small
\caption{Exact ABE-R with a Fixed Number of Advertisers}
\label{alg:roi-advertisers}
\begin{algorithmic}[1]
\REQUIRE An explicit rational ROI instance with fixed $n$.
\ENSURE An exact ABE-R with support size at most $n+1$.
\STATE Compute the cap $A$.
\IF{$n=1$}
    \STATE Set $\alpha_1=A$; let $\pi$ select a value-sorted ranking.
    \STATE \textbf{return} $(\alpha,\pi)$.
\ENDIF
\STATE \textbf{Phase I: Comparison-Cell Decomposition.}
\STATE Group items by owner and value; fix internal item orders.
\STATE Enumerate comparison cells in $[1,A]^n$, including ties and multiplier endpoints.
\FORALL{cells $F$}
    \STATE \textbf{Phase II: Layerwise Convexification.}
    \STATE Identify bid layers and the below-cap advertiser set $\mathsf C(F)$.
    \STATE Build local pairs of payment matrices and value vectors and their Minkowski sum $Q_F^R$.
    \STATE Enumerate the vertices of $Q_F^R$ and associate a realizing ranking with each vertex.
    \STATE \textbf{Phase III: Equilibrium Construction.}
    \FORALL{nonempty vertex subsets $J$ with $|J|\leq n+1$}
        \STATE Form \Cref{eq:roi-advertiser-system} for $F$ and $J$.
        \STATE Apply exact semialgebraic sampling.
        \IF{an algebraic solution $(\alpha^*,w^*)$ is found}
            \STATE Expand the associated group orders and assign the sampled weights.
            \STATE Merge identical rankings and remove zero weights.
            \STATE \textbf{return} the resulting exact equilibrium $(\alpha^*,\pi)$.
        \ENDIF
    \ENDFOR
\ENDFOR
\end{algorithmic}
\end{algorithm}

\noindentparagraph{\textbf{\textup{Surplus Compression and Equilibrium Constraints.}}}
Fix $\alpha\in F$. The linear map $(D,W)\mapsto W-D\alpha$ sends $Q_F^R$ to the attainable expected surplus vectors. Hence, these vectors form the convex hull of $W^{(\ell)}-D^{(\ell)}\alpha$, $\ell\in[N_F]$. \Cref{lem:outcome-compression} with $d=n$ gives at most $n+1$ representative rankings. Preserving surplus preserves all ROI conditions at the same multipliers, even when values and payments change separately.

Enumerate every nonempty $J\subseteq[N_F]$ with $|J|\leq n+1$. Assign probability $w_\ell$ to ranking $a^{(\ell)}$ and define expected surplus
\begin{equation}
\label{eq:roi-adv-surplus}
r_i(\alpha,w)
=
\sum_{\ell\in J}
w_\ell\left(
W_i^{(\ell)}-(D^{(\ell)}\alpha)_i
\right).
\end{equation}

ROI feasibility and maximal pacing are precisely the constraints
\begin{equation}
\label{eq:roi-advertiser-system}
\left\{
\begin{aligned}
&\alpha\in F,\\
&w_\ell\geq0\quad(\ell\in J),\qquad
\sum_{\ell\in J}w_\ell=1,\\
&r_i(\alpha,w)\geq0\quad(i\in[n]),\\
&r_i(\alpha,w)=0\quad(i\in\mathsf C(F)).
\end{aligned}
\right.
\end{equation}

Every solution defines an exact ABE-R. The feasibility argument of \Cref{lem:adv-system-feasibility} applies here as well: the $n$ multipliers and at most $n+1$ weights give at most $2n+1$ variables, there are $O(M^2+n)$ conditions, and the only nonlinear terms are products $w_\ell\alpha_j$. Thus, for fixed $n$, the exact feasibility bound in \Cref{sec:framework} decides feasibility of \Cref{eq:roi-advertiser-system} and returns an exact algebraic solution whenever one exists, in polynomial time.

\noindentparagraph{\textbf{\textup{Recovering the Equilibrium.}}}
Given an algebraic solution $(\alpha^*,w^*)$, concatenate each representative ranking's layer orders and apply the common recovery procedure in \Cref{sec:framework}. The defining value-payment identities give $R_i(\pi,\alpha^*)=r_i(\alpha^*,w^*)$, so \Cref{eq:roi-advertiser-system} enforces exactly the ROI conditions. At most $n+1$ rankings are expanded, taking $O((n+1)M)$ operations.

\noindentparagraph{\textbf{\textup{Correctness and Running Time.}}}
\begin{proof}[Proof of \Cref{thm:roi-computation}]
Each representative ranking is valid throughout its cell, and \Cref{eq:roi-adv-surplus} gives its contribution to expected surplus. Every algebraic solution thus satisfies all ABE-R conditions.

For completeness, take a group-consecutive ABE-R from \Cref{prop:exist}. Its profile lies in an enumerated cell, and the pair formed by its expected payment matrix and value vector lies in $Q_F^R$. At this profile, its surplus vector lies in
\[
\conv\left\{
W^{(\ell)}-D^{(\ell)}\alpha:\ell\in[N_F]
\right\}.
\]

\Cref{lem:outcome-compression} uses at most $n+1$ associated rankings. This support is enumerated, so a candidate system is feasible.

For fixed $n$, there are $O(M^{2n})$ cells. The objective arrangement has $O(M)$ hyperplanes in dimension $n^2$, giving $N_F=O(M^{n^2})$ vertices and $O(M^{n^2(n+1)})$ supports per cell. Each system has at most $2n+1$ variables, $O(M^2+n)$ conditions, and degree at most two; its only nonlinear terms are $w_\ell\alpha_j$. The exact-sampling bound used for ABE-B gives $O(M^{O(n^2)})$ operations per system. Thus, the total operation count is
\[
O\bigl(M^{2n+n^2(n+1)+O(n^2)}\bigr)
=
O\bigl(M^{n^3+O(n^2)}\bigr).
\]

Vertex construction and output fit the same bounds.
\end{proof}

\section{Boundary Structure with Few Slots}
\label{sec:positions}

Unlike the fixed-advertiser setting in \Cref{sec:advertisers}, fixing the number $K$ of display slots does not bound the number $n$ of advertisers. Directly reusing the full $n$-dimensional multiplier-space decomposition would therefore give an exponent depending on $n$, rather than a polynomial-time bound for unrestricted $n$. A tied bid layer may still contain arbitrarily many advertisers, and undisplayed items may still determine displayed prices, so restricting attention to the displayed prefix is insufficient. We therefore work from the display boundary to identify the multiplier coordinates and expected outcomes needed by each constraint model, with the aim of bounding the number of unknowns in terms of $K$.

\subsection{Budget Constraints: Threshold Recovery and Capacity}

For budgets, maximal pacing pins down every advertiser that lies completely below the display threshold. Combining this recovery identity with a polyhedral description of a potentially large threshold tie yields the fixed-slot result.

\begin{theorem}[Exact ABE-B Computation with Few Slots]
\label{thm:pos-computation}
For every fixed number of display slots $K$, there exists a polynomial-time algorithm that computes an exact ABE-B $(\alpha,\pi)$ with time complexity $O(M^{O(K^3)})$.
\end{theorem}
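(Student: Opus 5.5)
We plan to parametrize the equilibrium by the bid threshold $\theta$, meaning the $K$th highest bid, together with the multipliers of the few advertisers that can lie strictly above it. Fix a group-consecutive ABE-B from \Cref{prop:exist} and let $\theta$ be the bid in slot $K$ of any supported ranking. All supported rankings are valid at the same $\alpha$, so they share the bid multiset and $\theta$ is well defined.

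\emph{Threshold recovery.} An advertiser whose highest bid $\alpha_iu_i$ is strictly below $\theta$ is never displayed. It pays zero, so maximal pacing forces $\alpha_i=1$. Every advertiser with $\alpha_i u_i>\theta$ has an item in the first $K-1$ positions, so there are at most $K-1$ such \emph{above-threshold} advertisers. Every remaining advertiser has $\alpha_iu_i=\theta$: this is the threshold tie, and its multiplier is recovered as $\alpha_i=\theta/u_i$. There is one boundary issue. A non-displayed advertiser with $\alpha_i=1$ may still tie at $\theta$, and such advertisers are exactly those with $u_i=\theta$.

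The algorithm therefore guesses three things: the set $S$ of at most $K-1$ above-threshold advertisers ($O(n^{K-1})$ choices); the number $t$ of top slots they occupy; and the candidate range of $\theta$ relative to the values $u_i$. This range determines which remaining advertisers are at $1$ ($u_i<\theta$) and which are in the tie ($u_i\geq\theta$, with $\alpha_i=\theta/u_i\le 1$). The parameter vector is $(\theta,(\alpha_i)_{i\in S})$, of dimension at most $K$. We decompose it into comparison cells using the hyperplanes comparing bids of groups of $S$-advertisers with each other, with $\theta$, and with the fixed bids $v_g$ of unit-multiplier groups and $\theta v_g/u_i$ of tie groups. There are $O(M^2)$ hyperplanes in dimension $K$, so there are polynomially many cells.

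\emph{Outcomes within a cell.} Inside a cell, the ranking above the tie layer involves only $S$-groups and is handled as in \Cref{sec:advertisers}: there are boundedly many layers near the top with at most $K-1$ owners each. Prices, however, can come from the tail. For displayed $S$-items, the price is the highest lower bid of another owner. This is either the next layer, or $\theta$ if some other owner is in the tie, or a fixed tail bid determined by the cell. So each such price is an affine expression in the parameters, chosen from a polynomial list.

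The tie layer is the hard part. It fills the remaining $K-t$ slots with arbitrarily many advertisers. Every displayed tie item pays $\theta$ whenever some other-owner item lies below it in the same layer, and otherwise pays the next-lower cross-owner bid. We will show that the set of attainable expected-click vectors $x_i$ (expected CTR mass assigned to tie advertiser $i$) is a base polytope of the submodular function $f(T)=\sum_{k\le \min(\text{cap}(T),K-t)}c_{t+k}$. Here $\text{cap}(T)$ counts items in $T$, with a correction for the last displayed item's price. Tie payments are then $\theta x_i$ up to this boundary correction, which we also guess: it is the owner of the last tie item whose price may drop below $\theta$, giving $O(n)$ choices.

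\emph{Capacity elimination.} Every tie advertiser is paced unless $u_i=\theta$, so it must satisfy $\theta x_i = B_i$, or $\theta x_i\le B_i$ in the unpaced case. Hence $x_i$ is determined or merely bounded, and feasibility reduces to checking that the vector $(B_i/\theta)_i$, or a box, meets the base polytope. By polymatroid intersection with a box, this is $\min$ over subsets of a submodular inequality. For a CTR-prefix function of this form, the inequalities need only be checked on prefixes of the advertisers sorted by $B_i$ (or by item count), giving polynomially many constraints on $\theta$ of degree at most two after clearing denominators.

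Distributions over top-layer orders contribute at most $K+1$ weights by \Cref{lem:outcome-compression}, applied to the at most $K-1$ payment coordinates of $S$ plus aggregate tie data. The system therefore has $O(K)$ variables and $\mathrm{poly}(M)$ constraints, and \Cref{sec:framework} solves it in $M^{O(K^2)}$ time per system. Combined with $M^{O(K)}$ guesses and cells, and a joint enumeration of weighted supports over $M^{O(K^2)}$ vertices, this gives the $O(M^{O(K^3)})$ bound. Recovering explicit rankings from a feasible $x$ uses the greedy vertex decomposition of the base polytope (\Cref{prop:threshold-decomposition}), with support at most $n+1$ by Carath\'eodory in payment space.

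The main obstacle I expect is the exact submodular description of the tie's attainable payments. Self-pricing exclusion makes the last tie item's price depend on ownership within the layer, and a single-owner tail below the tie changes prices. The plan is to guess the owner of the lowest displayed tie item and the owner of the first undisplayed cross-owner item, and to verify submodularity of the resulting capacity function for each guess.
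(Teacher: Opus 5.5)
Your skeleton matches the paper's: threshold recovery, the parameter $(\theta,\alpha_S)$, comparison cells, compression of the upper orders, a submodular capacity polytope for the tie, box intersection, and greedy decomposition. Your handling of the tie boundary, however, has a genuine gap. Guessing one owner per candidate system for ``the last tie item whose price may drop below $\theta$'' forces every supported ranking to share that boundary owner. Equilibria may have to randomize it.

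Take $K=2$, $(c_1,c_2)=(1,1/2)$, and two advertisers, each with one item of value $1$ and budget $1/4$. The unique ABE-B is $\alpha=(1/2,1/2)$ with both orders played with probability $1/2$. A strict bid gap would leave the lower advertiser paying $0$, forcing its multiplier to $1$ above the other's, and $\alpha=(1,1)$ violates a budget. In each order, the slot-$2$ item is the last tie item and pays $0$, and its owner differs between the two rankings. Every guess, including ``none,'' therefore excludes this equilibrium.

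The missing idea is the paper's dichotomy between $e=|T|$ and $r=K-h$, where $h$ counts items bidding strictly above $\theta$. Only the last tie group can pay below $\theta$, since every earlier tie group is followed by another owner's group bidding $\theta$.
\begin{itemize}
\item If $e>r$, any $e-1\geq r$ tie groups already fill the remaining slots. The last group is then wholly undisplayed, and every displayed tie item pays exactly $\theta$. The attainable tie payments are exactly the scaled base polytope $\mathcal P_F(t)$, with no correction, no tail information, and no guess.
\item If $e\leq r$, then $|T|\leq K$. The at most $K!$ orders of the upper and tie layers can be listed explicitly and mixed by Carath\'eodory over the at most $K$ paying advertisers in $S\cup T$.
\end{itemize}

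Your reduction of the box-intersection inequalities to prefixes sorted by $B_i$ or by item count is also false once group sizes differ. For a given $q(U)$, the binding subset maximizes $\underline y(U)$ under a size constraint, which is a knapsack-type choice rather than a prefix. For tie owners in $S$, the bounds are $B_i-p_i^+$, which depend on the unknowns, so no fixed sort order exists.

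A concrete failure: take $\theta=1$, $h=0$, $r=5$, and unit CTRs. Let paced owners $c,d$ have $(q_c,q_d)=(3,2)$ and required payments $(2.6,2.4)$, and add four unpaced unit-size owners with tiny budgets. Every prefix in either order satisfies its inequality, and the total lower bound is $5=\Phi(T)$. Yet $\underline y_d\leq\Phi(\{d\})=2$ fails.

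The paper instead observes that any $U$ with $q(U)\geq r$ has $\Phi(U)=\Phi(T)$, so its inequalities follow from nonnegativity and the total equality. Only subsets with $q(U)<r$, hence $|U|\leq K-1$, need explicit constraints, and there are $O(n^{K-1})$ of them. Feeding this count into the $(m+1)^{O(q^2)}$ bound with $q=O(K)$ variables is what actually produces $M^{O(K^3)}$; your per-system $M^{O(K^2)}$ accounting rests on the false prefix claim.

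Finally, your arrangement should include all cross-owner comparisons. In particular, unit bids $v_g$ must be compared against recovered bids $\theta v_{g'}/u_{i'}$, since these decide which below-threshold group sets the tail prices.
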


The key parameter is the $K$th highest bid, which determines all but at most $K-1$ multipliers. Payments can then be described by a bounded family of rankings or, for a large threshold tie, by a capacity polytope. Algorithm~\ref{alg:positions} summarizes the construction in three phases: threshold recovery, attainable payments, and equilibrium construction.

\begin{algorithm}[H]
\small
\caption{Exact ABE-B with a Fixed Number of Display Slots}
\label{alg:positions}
\begin{algorithmic}[1]
\REQUIRE An explicit rational budget instance with fixed $K$.
\ENSURE An exact ABE-B with support size at most $n+1$.
\STATE \textbf{Phase I: Threshold Recovery.}
\STATE Compute value groups and highest values $u_i$.
\FORALL{$S\subseteq[n]$ with $|S|\leq K-1$ and threshold pieces $I$}
    \STATE Parameterize multipliers by $(t,\alpha_S)$ and enumerate comparison cells $F$.
    \FORALL{admissible cells $F$}
        \STATE \textbf{Phase II: Attainable Payments.}
        \STATE Enumerate valid orders strictly above $t$ and compute the number $e$ of threshold groups, the number $r$ of remaining slots, and the threshold-owner set $T$.
        \IF{$e\leq r$}
            \STATE Enumerate the bounded family of threshold-layer orders and form \Cref{eq:pos-few-system}.
        \ELSE
            \STATE Form the submodular capacity conditions in \Cref{eq:pos-many-system}.
        \ENDIF
        \STATE \textbf{Phase III: Equilibrium Construction.}
        \FORALL{candidate supports produced in Phase II}
            \STATE Apply exact semialgebraic sampling.
            \IF{an algebraic solution is found}
                \STATE Recover the full multiplier profile, decompose threshold payments, and compress the ranking distribution.
                \STATE \textbf{return} the resulting exact equilibrium.
            \ENDIF
        \ENDFOR
    \ENDFOR
\ENDFOR
\end{algorithmic}
\end{algorithm}

\noindentparagraph{\textbf{\textup{Threshold Recovery.}}}
Recall that $u_i$ is advertiser $i$'s highest value. Let $b_{(1)}(\alpha)\geq\cdots\geq b_{(M)}(\alpha)$ be the ordered item bids, counting every item, and define $t=b_{(K)}(\alpha)$. Thus, $t$ is the last displayed bid, independent of tie order. Define $S=\{i\in[n]:\alpha_i u_i>t\}$ and $s=|S|$. These are precisely the advertisers with a product bidding strictly above $t$. By \Cref{lem:positive} in \Cref{sec:structure}, all equilibrium multipliers are positive. Since item values are positive, $t>0$.

The threshold becomes algorithmically useful because it determines every multiplier except those of a bounded set of above-threshold advertisers.

\begin{lemma}[Threshold Recovery]
\label{lem:pos-recovery}
Every exact ABE-B satisfies
\begin{equation}
\label{eq:pos-recovery}
|S|\leq K-1,
\qquad
\alpha_i=\min\{1,t/u_i\}
\quad(i\notin S).
\end{equation}
\end{lemma}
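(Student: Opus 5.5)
Both claims follow from the definition of $t$ as the $K$th highest item bid together with maximal pacing, so I would treat them in turn.

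\emph{Bounding $|S|$.} Every $i\in S$ owns an item $(i,1)$ with bid $\alpha_iu_i>t$. Distinct advertisers in $S$ therefore contribute distinct items with bids strictly above $t$. Since $t=b_{(K)}(\alpha)$, at most $K-1$ items bid strictly above $t$. This gives $|S|\leq K-1$.

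\emph{Recovery for $i\notin S$.} Now $\alpha_iu_i\leq t$, so $\alpha_i\leq t/u_i$; combined with $\alpha_i\leq1$, we get $\alpha_i\leq\min\{1,t/u_i\}$. It remains to show that $\alpha_i<\min\{1,t/u_i\}$ is impossible. Suppose it holds. Then $\alpha_i<1$, and maximal pacing forces $P_i(\pi,\alpha)=B_i>0$. Also $\alpha_iu_i<t$, so every item of $i$ bids strictly below $t$.

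In any valid ranking, the first $K$ positions are filled by items with bid at least $b_{(K)}=t$. An item bidding strictly below $t$ cannot sit at rank $\le K$: there are at least $K$ items with bid $\geq t$, and a valid (nonincreasing) ranking must place all of them before any smaller bid. Hence no item of $i$ is displayed in any supported ranking, so $P_i(a,\alpha)=0$ for every supported $a$. Taking expectations gives $P_i(\pi,\alpha)=0<B_i$, a contradiction. Therefore $\alpha_i=\min\{1,t/u_i\}$.

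\emph{Main subtlety.} The only delicate step is the display argument. It must use strictness: ties at $t$ can be displayed or not, but items strictly below $t$ never are. It also needs $B_i>0$, which holds because budgets are positive rationals. The positivity of $t$ noted before the lemma, via \Cref{lem:positive}, makes $t/u_i$ well defined and positive; this is consistent with, though not strictly needed for, the argument.
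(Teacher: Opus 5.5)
Your proposal is correct and follows the paper's proof essentially step for step. You bound $|S|$ by the number of items bidding strictly above $t=b_{(K)}(\alpha)$, and you rule out $\alpha_i<\min\{1,t/u_i\}$ by noting that all of $i$'s items would then bid strictly below $t$, so $i$ is never displayed and pays $0<B_i$, which contradicts maximal pacing with $\alpha_i<1$.
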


\begin{proof}
At most $K-1$ items bid strictly above $t$. Each advertiser in $S$ owns one, so $|S|\leq K-1$.

For $i\notin S$, we have $\alpha_i u_i\leq t$ and $\alpha_i\leq1$, so $\alpha_i\leq\min\{1,t/u_i\}$. A strict inequality would give $\alpha_i<1$ and $\alpha_i u_i<t$. All its items would bid below $t$ and be undisplayed in every valid ranking, giving $P_i(\pi,\alpha)=0$. Since $B_i>0$, maximal pacing would require $\alpha_i=1$, a contradiction. Thus, the recovery formula holds.
\end{proof}

\noindentparagraph{\textbf{\textup{From Recovery to a Search Space.}}}
Enumerate $S\subseteq[n]$ with $s=|S|\leq K-1$ as candidate sets of advertisers whose highest bids exceed $t$. For each $S$, keep $\theta=(t,\alpha_S)$ unknown and recover other multipliers by \Cref{eq:pos-recovery}. Require $0<\alpha_i\leq1$ and $\alpha_i u_i>t$ for $i\in S$. Recovered multipliers satisfy $\alpha_i u_i\leq t$ outside $S$.

Partition $(0,u_{\max}]$ at distinct $u_i$, keeping open intervals and singleton endpoints. For each piece $I$ and pair $(S,I)$, impose $t\in I$. Each recovered multiplier then has a fixed affine form, $t/u_i$ or $1$. Both forms agree at $t=u_i$. Thus, every group bid $b_g(\theta)=v_g\alpha_{i(g)}(\theta)$ is affine in the at most $K$ parameters $\theta$.

We partition this parameter space as in \Cref{sec:advertisers}, using cross-owner bid comparisons, comparisons with the threshold, and multiplier endpoints:
\[
b_g(\theta)-b_{g'}(\theta)=0,
\quad
b_g(\theta)=t,
\quad
\alpha_i=1.
\]
Here $g,g'$ range over cross-owner groups, $g$ ranges over all groups in the threshold comparisons, and $i\in S$. Constant expressions have fixed signs, whereas nonconstant expressions define the comparison hyperplanes.

Let $\mathcal F(S,I)$ be the nonempty sign cells, including equality faces. Fix $F\in\mathcal F(S,I)$. For $\diamond\in\{>,=,<\}$ and any $\theta\in F$, define
\[
\mathcal G_{\diamond}(F)
=
\{g\in\mathcal G:b_g(\theta)\diamond t\}.
\]

Membership is constant on $F$ because comparisons with $t$ have fixed signs. These sets partition $\mathcal G$. Every valid ranking places $\mathcal G_{>}(F)$ before $\mathcal G_{=}(F)$ and $\mathcal G_{=}(F)$ before $\mathcal G_{<}(F)$.

A fixed $(S,I,F)$ determines all group-bid comparisons and hence the valid group orders, with ties free to reorder. Endpoint states follow from $F$ inside $S$ and $I$ outside $S$, fixing the full paced set $\mathsf P(F)$. Parameter values remain unknown and are solved jointly with ranking probabilities.

\noindentparagraph{\textbf{\textup{Admissible Threshold Cells.}}}
For fixed $F\in\mathcal F(S,I)$, define
\[
h=\sum_{g\in\mathcal G_{>}(F)}q_g,
\qquad
T=\{i(g):g\in\mathcal G_{=}(F)\},
\]
and set $e=|T|$ and $r=K-h$. Here, $h$ counts items in $\mathcal G_{>}(F)$ and $T$ lists owners in $\mathcal G_{=}(F)$. Positive multipliers allow at most one group per owner in $\mathcal G_{=}(F)$, so $e=|\mathcal G_{=}(F)|$. For $i\in T$, let $q_i$ be its unique group's size. A cell is admissible if
\[
h\leq K-1,
\qquad
\sum_{g\in\mathcal G_{=}(F)}q_g\geq r.
\]

These conditions are equivalent to $t$ being the $K$th highest bid throughout the cell: fewer than $K$ items bid strictly above $t$, and at least $K$ bid at least $t$. We consider only admissible cells.

\noindentparagraph{\textbf{\textup{Candidate Rankings and Payment Representation.}}}
Fix an admissible $(S,I,F)$ with $F\in\mathcal F(S,I)$. All bid comparisons are fixed, so only tied groups may reorder. Enumerate these orders in $\mathcal G_{>}(F)$, preserving strict comparisons. Fix one valid order of $\mathcal G_{<}(F)$ for all candidates. Advertisers outside $S\cup T$ have multiplier one and payment zero.

\begin{remark}[Ordering Groups Below $t$]
Fix $\theta\in F$ and the orders of $\mathcal G_{>}(F)$ and $\mathcal G_{=}(F)$. As with fixed internal item orders, choosing any valid order of $\mathcal G_{<}(F)$ preserves displayed outcomes. The only concern is that these items may still set displayed prices. The highest lower-tail bid from another owner may price displayed items in the last group of $\mathcal G_{=}(F)$. Different valid tail orders may change which group supplies this bid, but not the bid itself, and if no eligible group exists, its contribution remains zero.
\end{remark}

Compare $e=|\mathcal G_{=}(F)|$ with $r$ to decide whether to enumerate group orders directly. If $e\leq r$, at most $h+e\leq h+r=K$ groups lie in $\mathcal G_{>}(F)\cup\mathcal G_{=}(F)$, so we can list valid orders and compute their payments. If $e>r$, listing all orders of $\mathcal G_{=}(F)$ may be too costly. At most $r$ of these $e$ groups can receive a display slot, so at least one whole group remains undisplayed, with all its items bidding $t$. This group has a different owner from every displayed group in $\mathcal G_{=}(F)$. Payments contributed by this set are therefore $t$ times each advertiser's assigned total click-through rate. This lets us describe expected payments without listing all orders of $\mathcal G_{=}(F)$. We first treat $e\leq r$.

Suppose $e\leq r$. Concatenate each valid order of $\mathcal G_{>}(F)$ with each order of $\mathcal G_{=}(F)$, then append the fixed $\mathcal G_{<}(F)$ order. This gives at most $h!e!\leq K!$ complete candidates. Only $S\cup T$ can pay, so payment vectors have at most $a_F:=|S\cup T|\leq h+e\leq K$ coordinates. Write candidate $\ell$'s affine payments as $p_i^{(\ell)}(\theta)$.

At fixed $\theta$, \Cref{lem:outcome-compression} with $d=a_F$ uses at most $a_F+1$ candidates. Enumerate nonempty candidate index sets $J$ with $|J|\leq a_F+1$, introduce probabilities $w_\ell$, and define
\[
\overline p_i(\theta,w)
=
\sum_{\ell\in J}w_\ell p_i^{(\ell)}(\theta).
\]

For each support $J$, impose
\begin{equation}
\label{eq:pos-few-system}
\left\{
\begin{aligned}
&\theta\in F,\quad w_\ell\geq0\quad(\ell\in J),\\
&\sum_{\ell\in J}w_\ell=1,\\
&\overline p_i(\theta,w)\leq B_i
&& (i\in S\cup T),\\
&\overline p_i(\theta,w)=B_i
&& (i\in(S\cup T)\cap\mathsf P(F)).
\end{aligned}
\right.
\end{equation}

There are at most $(s+1)+(a_F+1)\leq2K+1$ variables, and the degree is at most two. Exact feasibility and the ranking construction in \Cref{sec:framework} give an ABE-B. The representative rankings include all below-threshold price setters.

\noindentparagraph{\textbf{\textup{Payments Above $t$ and Support Compression.}}}
Fix an admissible $(S,I,F)$ with $e>r$. A direct threshold-order representation may contain $e!$ rankings, and its payment dimension $|S\cup T|$ is unbounded in $K$. We use a short distribution over above-threshold orders and eliminate threshold payments through capacity constraints.

Recall $\theta=(t,\alpha_S)$ and $s=|S|$. Distinct bid layers have fixed order on $F$, so only equal-bid groups may swap. Let $\Sigma_{>}(F)$ contain the valid group orders of $\mathcal G_{>}(F)$. For $\rho\in\Sigma_{>}(F)$, let $p_i^{+,\rho}(\theta)$ be advertiser $i$'s payment from items in $\mathcal G_{>}(F)$ under $\rho$.

These functions are affine in $\theta$. Each displayed item's price is an eligible later bid in $\mathcal G_{>}(F)$ or $t$, since $\mathcal G_{=}(F)$ has at least two owners. Thus, orders within $\mathcal G_{=}(F)$ and $\mathcal G_{<}(F)$ do not affect these payments. They vanish outside $S$, as only $S$ owns items in $\mathcal G_{>}(F)$.

Only $s$ upper-payment coordinates can be nonzero, so \Cref{lem:outcome-compression} gives at most $s+1$ orders. Enumerate nonempty supports $J\subseteq\Sigma_{>}(F)$ with $|J|\leq s+1$, introduce probabilities $\beta_\rho$, and define
\[
p_i^+(\theta,\beta)
=
\sum_{\rho\in J}
\beta_\rho p_i^{+,\rho}(\theta).
\]

Thus, $p_i^+$ is advertiser $i$'s expected payment from $\mathcal G_{>}(F)$ alone, and is zero for $i\notin S$. It depends on $\theta$ and $\beta$ and adds no variables.

\noindentparagraph{\textbf{\textup{Attainable Payments at $t$.}}}
Since $e>r$, at most $r$ groups in $\mathcal G_{=}(F)$ are displayed, leaving at least one wholly undisplayed. Each displayed item in this set thus has an other-owner item bidding $t$ below it and pays $t$ per click. Each advertiser's payment from $\mathcal G_{=}(F)$ is therefore $t$ times its assigned total click-through rate.

Let $\Sigma_{=}(F)$ contain all group orders of $\mathcal G_{=}(F)$. For $\sigma\in\Sigma_{=}(F)$, place these groups after $\mathcal G_{>}(F)$ to fill slots $h+1,\ldots,K$. Let $x_i^\sigma$ be advertiser $i$'s total click-through rate under this order. With price per click $t$, define the payment vector by
\[
y_i^\sigma(t)=t x_i^\sigma,
\qquad
y^\sigma(t)=\bigl(y_i^\sigma(t)\bigr)_{i\in T}.
\]

Assigning probability $\lambda_\sigma$ to order $\sigma$ gives expected payment
\[
y=\sum_{\sigma\in\Sigma_{=}(F)}\lambda_\sigma y^\sigma(t),
\qquad
\lambda_\sigma\geq0,
\qquad
\sum_{\sigma\in\Sigma_{=}(F)}\lambda_\sigma=1.
\]

The attainable expected payment vectors form
\[
\mathcal Y_F(t)
=
\conv
\{y^\sigma(t):\sigma\in\Sigma_{=}(F)\}.
\]

For $U\subseteq T$, define
\[
q(U)=\sum_{i\in U}q_i,
\qquad
\Phi(U)=\sum_{k=h+1}^{h+\min\{r,q(U)\}}c_k.
\]

Here, $q(U)$ counts items in $\mathcal G_{=}(F)$ owned by $U$. They occupy at most $\min\{r,q(U)\}$ remaining slots, and ordering their groups first gives the best slots. Thus, $\Phi(U)$ is their maximum total click-through rate. For a payment vector $y=(y_i)_{i\in T}$, write $y(U)=\sum_{i\in U}y_i$.

The formula $y_i^\sigma(t)=t x_i^\sigma$ implies that every attainable expected payment vector has nonnegative coordinates. In addition, owners in $U$ receive at most $\Phi(U)$ total click-through rate at price $t$, so $y(U)\leq t\Phi(U)$. Items in $\mathcal G_{=}(F)$ fill all remaining $r$ slots, giving total payment $t\Phi(T)$. This defines the capacity region
\[
\mathcal P_F(t)
=
\left\{
y\in\R_{\geq0}^{T}:
\begin{array}{l}
y(U)\leq t\Phi(U)\quad(U\subseteq T),\\
y(T)=t\Phi(T)
\end{array}
\right\}.
\]

The subset inequalities express the best exposure that any set of advertisers can receive, so every attainable expected payment vector lies in $\mathcal P_F(t)$. To replace distributions over threshold-group orders by these capacity conditions in the equilibrium system, we also need the reverse inclusion. Otherwise, the capacity system could admit a payment vector that no ranking distribution realizes. The next lemma proves $\mathcal Y_F(t)=\mathcal P_F(t)$, ensuring both that every equilibrium remains represented and that any feasible payment vector later recovered from the capacity conditions can be realized by a distribution over threshold-group orders.

\begin{lemma}[Capacity Description of Attainable Payments]
\label{lem:pos-payment-polytope}
For every fixed $\theta\in F$, $\mathcal Y_F(t)=\mathcal P_F(t)$.
\end{lemma}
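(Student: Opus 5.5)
The plan is to prove both inclusions, dividing out the common factor $t>0$. Write $x\in\R^T$ for click-through-rate vectors, so $\mathcal Y_F(t)=t\cdot\conv\{x^\sigma\}$ and $\mathcal P_F(t)=t\cdot B(\Phi)$. Here
\[
B(\Phi)=\{x\geq0:\ x(U)\leq\Phi(U)\ (U\subseteq T),\ x(T)=\Phi(T)\}.
\]
The claim then becomes $\conv\{x^\sigma:\sigma\in\Sigma_=(F)\}=B(\Phi)$. This is the classical statement that the base polytope of a polymatroid rank function is the convex hull of the greedy vectors, so I will verify that $\Phi$ is a polymatroid rank function and that the vectors $x^\sigma$ are exactly the greedy vertices.

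\textbf{Inclusion $\mathcal Y_F(t)\subseteq\mathcal P_F(t)$.} The text before the statement already sketches this. For each order $\sigma$, the items of owners in $U$ occupy at most $\min\{r,q(U)\}$ of the slots $h+1,\ldots,K$. Since $c$ is nonincreasing, their total click-through rate is at most the sum of the best $\min\{r,q(U)\}$ such slots, which is $\Phi(U)$. Because $\sum_{i\in T}q_i\geq r$ by admissibility, all $r$ slots are filled, so $x^\sigma(T)=\Phi(T)$. Nonnegativity is clear. The set $B(\Phi)$ is convex, so convex combinations of the $x^\sigma$ stay inside it.

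\textbf{Inclusion $\mathcal P_F(t)\subseteq\mathcal Y_F(t)$.} First I show that $\Phi$ is normalized, monotone and submodular. We have $\Phi(\emptyset)=0$. Write $\Phi(U)=G(q(U))$ with $G(m)=\sum_{k=h+1}^{h+\min\{r,m\}}c_k$. The function $G$ is nondecreasing and concave in $m$, since its increments $c_{h+m}$ (and $0$ beyond $r$) are nonincreasing. The map $q$ is modular with nonnegative weights. A concave nondecreasing function composed with a nonnegative modular function is submodular: for $U\subseteq U'$ and $i\notin U'$, the marginal $G(q(U)+q_i)-G(q(U))$ is at least $G(q(U')+q_i)-G(q(U'))$ by concavity.

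Next I identify the greedy vertices. For an ordering $i_1,\ldots,i_e$ of $T$, the greedy vector is
\[
z_{i_j}=\Phi(\{i_1,\ldots,i_j\})-\Phi(\{i_1,\ldots,i_{j-1}\}).
\]
Placing the groups in exactly the order $\sigma=(i_1,\ldots,i_e)$ gives owner $i_j$ precisely the slots from position $h+q(\{i_1,\ldots,i_{j-1}\})+1$ up to $h+\min\{r,q(\{i_1,\ldots,i_j\})\}$, truncated at $K$. Its click-through rate is therefore that same difference, so $x^\sigma=z$.

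Then I invoke Edmonds' theorem: for a submodular $\Phi$ with $\Phi(\emptyset)=0$, every vertex of the base polytope $\{x:x(U)\leq\Phi(U),\ x(T)=\Phi(T)\}$ is a greedy vector for some ordering. Monotonicity makes the nonnegativity constraints redundant, so $B(\Phi)$ is exactly this base polytope. Being bounded, it is the convex hull of its vertices, all of which are $x^\sigma$'s. Multiplying by $t$ gives the claim.

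The main obstacle is the reverse inclusion, namely justifying the vertex characterization. If a self-contained argument is preferred over citing Edmonds, I would prove it by LP duality. For any objective $w$, sort $T$ by decreasing $w$. A telescoping sum combined with submodularity shows that the greedy vector maximizes $\langle w,x\rangle$ over $B(\Phi)$. If some point of $B(\Phi)$ lay outside $\conv\{x^\sigma\}$, a separating objective $w$ would give a strictly larger value at that point than at every $x^\sigma$, contradicting this optimality. This argument uses only the submodularity and monotonicity established above.
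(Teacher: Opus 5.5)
Your proposal is correct, and its self-contained version is the paper's argument. The forward inclusion comes from the direct capacity count. The reverse inclusion comes from showing that, for each objective, the group order sorted by decreasing weight attains the maximum over $\mathcal P_F(t)$ via Abel summation over the prefix sets, followed by separation.

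The submodularity check and the appeal to Edmonds' theorem are more than you need. The upper bound uses only the prefix constraints and $x(T)=\Phi(T)$, and each $x^\sigma$ already lies in the polytope by your first inclusion. This is why the paper defers submodularity to the later box-intersection step.
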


\begin{proof}
\emph{$(\Rightarrow)$ Necessity: $\mathcal Y_F(t)\subseteq\mathcal P_F(t)$.}
In any group order, $U$ receives at most $\Phi(U)$ total click-through rate and pays at most $t\Phi(U)$. Payments are nonnegative and total $t\Phi(T)$, so every $y^\sigma(t)$ lies in $\mathcal P_F(t)$. Probability-weighted averages preserve these conditions, proving inclusion.

\emph{$(\Leftarrow)$ Sufficiency: $\mathcal P_F(t)\subseteq\mathcal Y_F(t)$.}
For any linear objective $\sum_{i\in T}\omega_i y_i$, order advertisers so that $\omega_{i_1}\geq\cdots\geq\omega_{i_e}$, and let $U_k=\{i_1,\ldots,i_k\}$. This group order gives $\widehat y\in\mathcal Y_F(t)$ with
\[
\widehat y(U_k)=t\Phi(U_k)
\qquad(k=1,\ldots,e),
\]
because every group prefix takes the earliest available slots.

For any $y\in\mathcal P_F(t)$,
\[
\sum_{i\in T}\omega_i y_i
=
\omega_{i_e}y(T)
+
\sum_{k=1}^{e-1}
(\omega_{i_k}-\omega_{i_{k+1}})y(U_k).
\]

The coefficient differences are nonnegative. Using $y(T)=t\Phi(T)$ and $y(U_k)\leq t\Phi(U_k)$ gives
\[
\sum_{i\in T}\omega_i y_i
\leq
\sum_{i\in T}\omega_i\widehat y_i.
\]

Thus, each linear objective over $\mathcal P_F(t)$ is maximized by an actual group order's payment vector. A point outside the compact convex set $\mathcal Y_F(t)$ would admit a separating objective, contradicting this bound. Hence, $\mathcal P_F(t)\subseteq\mathcal Y_F(t)$, proving equality.
\end{proof}

\Cref{ex:threshold-capacity} illustrates how this description combines with maximal pacing to determine an equilibrium distribution.

\begin{example}[From Threshold Capacity to an Exact Equilibrium]
\label{ex:threshold-capacity}
Let $K=2$, $(c_1,c_2)=(1,1/2)$, and $T=\{1,2,3\}$, with group sizes $(q_1,q_2,q_3)=(2,1,1)$. Every product has value $2$, every advertiser has budget $1/2$, and set all multipliers to $1/2$. Thus all bids equal $t=1$, $h=0$, and $e=3>r=2$. In every group order, a whole other-owner group remains undisplayed, so each displayed product pays $1$ per click.

The capacity polytope is
\[
\mathcal P_F(1)
=
\{y\in\R_{\geq0}^3:
y_1+y_2+y_3=3/2,\ y_2\leq1,\ y_3\leq1\}.
\]
Advertiser~1 can fill both slots, whereas each other advertiser can receive at most the first slot's CTR. All remaining subset capacities follow from nonnegativity and the total-payment equality. Since every multiplier is below one, maximal pacing requires $y=(1/2,1/2,1/2)$, which lies in this polytope. It is realized by choosing the group orders $(2,1,3)$ and $(3,1,2)$ with probability $1/2$ each:
\[
\tfrac12(1/2,1,0)+\tfrac12(1/2,0,1)
=
(1/2,1/2,1/2).
\]
After expanding group~1 into its two products, these are complete valid rankings and all budgets bind. Hence the multipliers and this two-ranking distribution form an exact ABE-B. The capacity description provides the feasible expectation, and group orders realize it.
\end{example}

For a general threshold layer, we next intersect the capacity polytope with the remaining budget intervals, reduce its subset constraints, and recover a short distribution.

\noindentparagraph{\textbf{\textup{Remaining Budget Requirements.}}}
We now impose budgets. A payment variable $y_i$ for each advertiser in $T$ would add $e$ variables, unbounded in $K$. Instead, express budget requirements as intervals and test whether they contain a point of $\mathcal P_F(t)$, without keeping payment coordinates as unknowns.

For $i\in T$, let $y_i$ be its expected payment from $\mathcal G_{=}(F)$. Total payment is $p_i^++y_i$. Budget feasibility gives $y_i\leq B_i-p_i^+$, with equality required by maximal pacing for $i\in\mathsf P(F)$. Define
\[
\overline y_i=B_i-p_i^+,
\qquad
\underline y_i=
\begin{cases}
\overline y_i,&i\in\mathsf P(F),\\
0,&i\notin\mathsf P(F).
\end{cases}
\]

Advertisers with multiplier below one must pay exactly their remaining budget through $\mathcal G_{=}(F)$. Those with multiplier one may pay any nonnegative amount up to that budget. We must find
\[
y\in\mathcal P_F(t)
\quad\text{such that}\quad
\underline y_i\leq y_i\leq\overline y_i
\quad(i\in T).
\]

For $U\subseteq T$, write $\underline y(U)=\sum_{i\in U}\underline y_i$ and $\overline y(U)=\sum_{i\in U}\overline y_i$.

\noindentparagraph{\textbf{\textup{Eliminating Payment Variables.}}}
The cumulative CTR of the first $\min\{r,q(U)\}$ remaining slots is a nondecreasing concave function of the additive quantity $q(U)$. Hence $\Phi$ is submodular. The standard box-intersection criterion for submodular base polytopes~\cite{Sch03} gives the following necessary and sufficient conditions for a point in $\mathcal P_F(t)$ to satisfy the budget intervals:
\[
\begin{aligned}
&0\leq\underline y_i\leq\overline y_i
&& (i\in T),\\
&\underline y(U)\leq t\Phi(U)
&& (U\subseteq T),\\
&\overline y(T\setminus U)
\geq t\bigl(\Phi(T)-\Phi(U)\bigr)
&& (U\subseteq T).
\end{aligned}
\]

The second condition bounds each subset's minimum payment by its capacity. Since $U$ pays at most $t\Phi(U)$, the third requires enough budget outside $U$ to cover $t(\Phi(T)-\Phi(U))$. Together with valid coordinate bounds, these conditions are necessary and sufficient.

Substitute $\underline y$ and $\overline y$ into these inequalities. All conditions then depend only on $\theta$ and $\beta$. The bounds $\underline y_i$ and $\overline y_i$ abbreviate these expressions, so payment coordinates $y_i$ are no longer unknowns.

\noindentparagraph{\textbf{\textup{Reducing Subset Constraints.}}}
For $U\subseteq T$ with $q(U)<r$, its items cannot fill all remaining slots, so we include its subset conditions. If $q(U)\geq r$, then $\Phi(U)=\Phi(T)$. The corresponding lower bound follows from $\underline y(U)\leq\underline y(T)\leq t\Phi(T)$, and the complementary upper bound follows from nonnegativity because its right-hand side is zero.

Accordingly, define $\mathcal U=\{U\subseteq T:q(U)<r\}$.

Keep the total lower bound and test other subset inequalities only for $U\in\mathcal U$. Since $q_i\geq1$, each such subset has at most $r-1\leq K-1$ advertisers. Enumerate subsets of size at most $r-1$ and keep those with $q(U)<r$, giving
\[
|\mathcal U|
\leq
\sum_{k=0}^{r-1}\binom{e}{k}
=
O(n^{K-1})
\]
for fixed $K$. Include the empty set, whose complementary bound gives $\overline y(T)\geq t\Phi(T)$.

The reduced capacity description is also constructive: any feasible threshold-payment vector can be converted into a short, explicit distribution over threshold-group orders without enumerating all $e!$ possibilities.

The following proposition formalizes this decomposition and shows that at most $|T|$ threshold-group orders suffice, with polynomially many exact arithmetic operations.

\begin{proposition}[Constructive Decomposition of Threshold Payments]
\label{prop:threshold-decomposition}
Fix $t>0$. Given any $y\in\mathcal P_F(t)$, one can compute group orders $\sigma_1,\ldots,\sigma_m\in\Sigma_{=}(F)$ and nonnegative weights $\lambda_1,\ldots,\lambda_m$ such that
\[
y=\sum_{j=1}^{m}\lambda_j y^{\sigma_j}(t),
\qquad
\sum_{j=1}^{m}\lambda_j=1,
\qquad
m\leq |T|.
\]
The computation uses a polynomial number of exact arithmetic operations in the explicit description of $\mathcal P_F(t)$. In particular, algebraic $t$ and $y$ yield algebraic weights in the standard exact representation.
\end{proposition}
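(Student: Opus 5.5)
We will decompose $y$ with the standard greedy/Carathéodory procedure for base polytopes of submodular functions, using the fact (from the proof of \Cref{lem:pos-payment-polytope}) that the vertices of $\mathcal P_F(t)$ are exactly the greedy vectors $\widehat y$ produced by group orders. After scaling by $1/t$ we may work with $x=y/t$ in the base polytope $B(\Phi)=\{x\geq0: x(U)\leq\Phi(U),\ x(T)=\Phi(T)\}$. Here $\Phi$ is submodular, monotone, and normalized, so nonnegativity is implied by the base conditions. Every order $\sigma$ of $T$ gives $x^\sigma$ with $x^\sigma(\text{prefix})=\Phi(\text{prefix})$. Each such vector is exactly the click-through vector of the corresponding group order, since a prefix of groups occupies the earliest remaining slots.

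The construction proceeds iteratively, in the style of Cunningham's decomposition, maintaining a current point $z\in B(\Phi)$ (initially $z=x$), a residual weight $\mu$ (initially $1$), and a chain $\emptyset=U_0\subset U_1\subset\cdots$ of tight sets of $z$, meaning $z(U)=\Phi(U)$. At each step we pick an order $\sigma$ that is consistent with the current tight chain, listing $U_1$ first, then $U_2\setminus U_1$, and so on. Then $x^\sigma$ agrees with $z$ on every chain set, so the direction $z-x^\sigma$ keeps those sets tight. We move along the ray $z+\varepsilon(z-x^\sigma)$ to the largest $\varepsilon^*$ keeping the point in $B(\Phi)$; equivalently, we write $z=\frac{1}{1+\varepsilon^*}z'+\frac{\varepsilon^*}{1+\varepsilon^*}x^\sigma$. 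This records $x^\sigma$ with weight $\mu\varepsilon^*/(1+\varepsilon^*)$ and replaces $z$ by $z'$. At $z'$ some new set becomes tight. By submodularity, the tight sets of a point in $B(\Phi)$ are closed under union and intersection, so we can refine the chain by intersecting and uniting the new tight set with the chain members, and the chain's length strictly increases. Once the chain is maximal (length $|T|$), $z$ equals the unique $x^\sigma$ for the induced order, and we stop. This yields at most $|T|$ orders, matching $m\leq|T|$; with chain length measured from $U_0$, the count is $|T|$ once we note that the last step needs no ray.

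The main obstacle is computing $\varepsilon^*$, that is, a most violated or first-tight subset constraint $\max_U (z(U)-\Phi(U))$ along the ray, with polynomially many operations even though $T$ can have arbitrarily many owners. Here we would use the reduction established above: only sets $U$ with $q(U)<r$ matter, since for $q(U)\geq r$ the constraint is implied. These sets number $|\mathcal U|=O(n^{K-1})$. For each $U\in\mathcal U$ the constraint $z(U)+\varepsilon(z-x^\sigma)(U)\leq\Phi(U)$, together with the nonnegativity constraints, is linear in $\varepsilon$, so $\varepsilon^*$ is a minimum of polynomially many ratios and the new tight set is read off directly. If instead one wants to avoid relying on the restriction to $\mathcal U$, a general alternative is submodular minimization of $\Phi-z$, which is polynomial but heavier.

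Every quantity is obtained by field operations (sums, differences, ratios) on $t$, $y$ and rational CTRs, so algebraic inputs produce algebraic weights in the same number field. Finally, we would multiply back by $t$ and note $\sum_j\lambda_j=1$ by construction, which gives the stated identity $y=\sum_j\lambda_j y^{\sigma_j}(t)$.
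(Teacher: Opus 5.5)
Your proof is correct, but it runs the Carath\'eodory-style descent differently from the paper. The paper uses no chains and no uncrossing of tight sets. At the residual $z$ it sums the normals of all active inequalities of the explicit (reduced) description into an objective $\omega$ and sorts $T$ by $\omega$. The greedy optimality from \Cref{lem:pos-payment-polytope} then shows that the resulting order vector $v$ maximizes $\omega$ over $\mathcal P_F(t)$, so $v$ lies in the minimal face of $z$. Hence every step length $\eta$ is strictly positive, and progress is measured by the dimension of the minimal face, which is at most $|T|-1$.

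Your version instead reads the order off a chain of tight sets and measures progress by chain length. This requires the lattice property of tight sets, which uses submodularity of $\Phi$ on all of $2^T$. It also requires translating a binding nonnegativity constraint $z'_i=0$ into the tight set $T\setminus\{i\}$.

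Because your chain need not be maximal in the lattice of tight sets of $z$, $x^\sigma$ need not lie in the minimal face of $z$, so $\varepsilon^*=0$ can occur. This is harmless. The blocking set $W$ satisfies $(z-x^\sigma)(W)\neq 0$, whereas every union of chain differences has zero change along the ray. So $W$ is not such a union, and some refinement $(W\cap U_{j+1})\cup U_j$ strictly lengthens the chain. You should state this explicitly, since it is also the actual reason the chain grows in the nondegenerate case. You should also discard zero-weight records, and stop as soon as $z=x^\sigma$ before the chain is maximal, since the ray is then undefined.

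What your route buys is an order determined by set operations alone, with no sorting of algebraic numbers. The paper's route buys a shorter argument that needs only the greedy lemma and the explicit inequality list, and never takes a degenerate step.
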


\begin{proof}
Write $e=|T|$ and represent $\mathcal P_F(t)$ by its total-payment equality together with the nonnegativity inequalities and the reduced capacity inequalities indexed by $\mathcal U$
\[
a_j^\top z\leq b_j
\qquad(j\in[m_{\mathrm{cap}}]).
\]
As shown above, the omitted capacity inequalities are implied by nonnegativity and the total-payment equality, so this is the same bounded polytope. Its affine dimension is at most $e-1$.

We maintain a residual point $z\in\mathcal P_F(t)$ and a coefficient $\mu>0$ such that the original vector $y$ equals a convex combination of already recorded order vectors plus $\mu z$. Initially, $z=y$ and $\mu=1$. Let
\[
\mathcal A(z)=\{j:a_j^\top z=b_j\}
\]
be the active inequalities and set $\omega=\sum_{j\in\mathcal A(z)}a_j$. Sort the advertisers in nonincreasing order of $\omega_i$ and let $v=y^\sigma(t)$ be the payment vector of the resulting group order. By the greedy argument in \Cref{lem:pos-payment-polytope}, $v$ maximizes $\omega^\top x$ over $\mathcal P_F(t)$. The point $z$ is also a maximizer, since every feasible $x$ satisfies
\[
\omega^\top x
=
\sum_{j\in\mathcal A(z)}a_j^\top x
\leq
\sum_{j\in\mathcal A(z)}b_j
=
\omega^\top z.
\]
Since $v$ is feasible and maximizes the same objective, equality holds throughout. Each active inequality has nonnegative slack at $v$, so equality of their sum implies $a_j^\top v=b_j$ for every $j\in\mathcal A(z)$. Hence $v$ lies in the minimal face of $\mathcal P_F(t)$ containing $z$.

If $z=v$, record $v$ with weight $\mu$ and terminate. Otherwise move from $z$ directly away from $v$. Define
\[
\eta
=
\min_{j:a_j^\top(z-v)>0}
\frac{b_j-a_j^\top z}{a_j^\top(z-v)}.
\]
The minimizing set is nonempty: if it were empty, the ray $z+\lambda(z-v)$ would remain feasible for every $\lambda\geq0$, contradicting boundedness. Every active inequality has denominator zero because it is also tight at $v$, while every inactive inequality has positive slack. Consequently $\eta>0$, and $z'=z+\eta(z-v)$ is feasible, preserves all previously active inequalities, and makes at least one additional inequality tight. Thus $z'$ lies in a proper face of the minimal face containing $z$. Finally,
\[
z=\frac{\eta}{1+\eta}v+\frac{1}{1+\eta}z'.
\]
Record $v$ with weight $\mu\eta/(1+\eta)$, replace $(z,\mu)$ by $(z',\mu/(1+\eta))$, and repeat.

Every nonterminal iteration strictly lowers the dimension of the minimal face containing the residual point. Since the affine dimension is at most $e-1$, the process records at most $e$ order vectors. The maintained convex-combination identity proves the claimed decomposition. Each iteration requires one sort, evaluation of the explicit inequalities, and exact arithmetic. Hence the overall procedure is polynomial-time.
\end{proof}

The proposition supplies at most $e$ threshold orders after a feasible expected payment vector is recovered. The nonlinear candidate system itself still contains no threshold-payment coordinates.

\noindentparagraph{\textbf{\textup{Candidate Equilibrium System.}}}
For each enumerated support $J$ from $\Sigma_{>}(F)$, impose the reduced conditions on $T$. Advertisers in $S\setminus T$ pay nothing from $\mathcal G_{=}(F)$, so their total payment is $p_i^+$. Those outside $S\cup T$ have multiplier one and payment zero. The system is
\begin{equation}
\label{eq:pos-many-system}
\left\{
\begin{aligned}
&\theta\in F,\quad \beta_\rho\geq0\quad(\rho\in J),\\
&\sum_{\rho\in J}\beta_\rho=1,\\
&0\leq\underline y_i\leq\overline y_i
&& (i\in T),\\
&\underline y(T)\leq t\Phi(T),\\
&\underline y(U)\leq t\Phi(U)
&& (U\in\mathcal U),\\
&\overline y(T\setminus U)
\geq t\bigl(\Phi(T)-\Phi(U)\bigr)
&& (U\in\mathcal U),\\
&p_i^+\leq B_i
&& (i\in S\setminus T),\\
&p_i^+=B_i
&& (i\in(S\setminus T)\cap\mathsf P(F)).
\end{aligned}
\right.
\end{equation}

Eliminating the threshold-payment coordinates leaves a system whose dimension depends only on $K$.

\begin{lemma}[Polynomial-Time Candidate Feasibility]
\label{lem:pos-many-feasibility}
For fixed $K$, the system in \Cref{eq:pos-many-system} has at most $2K$ variables and degree at most two. We can decide its feasibility and compute an exact solution whenever one exists, in polynomial time.
\end{lemma}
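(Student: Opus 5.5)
The plan is to count variables, constraints, and degree directly from \Cref{eq:pos-many-system}, then invoke the exact sampling bound from \Cref{sec:framework}. The three counts are handled in turn.

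\emph{Variables.} The unknowns are $\theta=(t,\alpha_S)$, which contributes $1+s$ variables, and the weights $\beta_\rho$ for $\rho\in J$, which contribute at most $s+1$ more. This gives at most $2s+2$ variables. Since $|S|\leq K-1$ by \Cref{lem:pos-recovery}, the total is at most $2K$. The affine equality $\sum_\rho\beta_\rho=1$ could also eliminate one weight, but the bound $2K$ already holds without it. All other quantities are abbreviations, not variables. The recovered multipliers $\alpha_i$ for $i\notin S$ are fixed affine functions of $t$ on the piece $I$. The bounds $\underline y_i,\overline y_i$ are given by $B_i-p_i^+$ or by $0$. The capacities $\Phi(U)$ are rational constants computed from the CTRs and group sizes.

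\emph{Degree.} I would check that each $p_i^{+,\rho}(\theta)$ is affine in $\theta$. This holds because each displayed item above $t$ is priced either by a later bid in $\mathcal G_{>}(F)$, which is affine in $\theta$, or by $t$. It follows that $p_i^+=\sum_\rho\beta_\rho p_i^{+,\rho}(\theta)$ has degree at most two, and its only nonlinear terms are products $\beta_\rho\theta_k$. Consequently $\underline y_i$, $\overline y_i$, and all their subset sums have degree at most two. The right-hand sides $t\Phi(U)$ and $t(\Phi(T)-\Phi(U))$ are linear in $t$. The condition $\theta\in F$ is a conjunction of affine equalities and strict or weak inequalities that define the sign cell.

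\emph{Number of constraints.} The cell $F$ is described by $O(M^2)$ affine sign conditions. The coordinate bounds contribute $O(e)=O(n)$ constraints. The subset constraints number $O(|\mathcal U|)=O(n^{K-1})$, since each $U\in\mathcal U$ has at most $r-1\leq K-1$ elements. The budget conditions for $S\setminus T$ add at most $K$ more. Because $n\leq M$, the system has $m=M^{O(K)}$ constraints.

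With $q\leq 2K$ variables, the bound $(m+1)^{O(q^2)}=M^{O(K^3)}$ from \cite{BPR06} is polynomial for fixed $K$. That theorem also supplies an exact algebraic solution whenever the system is feasible. I expect the only delicate step to be confirming that every constraint can be written explicitly in polynomial time, in particular that $\mathcal U$ can be listed by enumerating subsets of $T$ of size at most $r-1$ and filtering those with $q(U)<r$. This enumeration takes $O(n^{K-1})$ time, so the lemma follows.
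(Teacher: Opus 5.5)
Your proposal is correct and follows essentially the same route as the paper. Both count $(s+1)+|J|\le 2(s+1)\le 2K$ unknowns. Both note that affine $p_i^{+,\rho}(\theta)$ and the fixed paced set make the only nonlinear terms $\beta_\rho\theta_k$, bound the number of conditions polynomially (the paper states $O(M^2+n^{K-1})$; your coarser $M^{O(K)}$ suffices), and apply the exact sampling bound of \cite{BPR06}.
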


\begin{proof}[Proof of \Cref{lem:pos-many-feasibility}]
Only $\theta$ and the probabilities are unknown, giving $(s+1)+|J|\leq2(s+1)\leq2K$ variables. The expressions $p_i^+$, $\underline y_i$, and $\overline y_i$ add none. Each $p_i^{+,\rho}(\theta)$ is affine in $\theta$, and the paced set is fixed, so residual substitution introduces only products of $\beta_\rho$ with parameters in $\theta$. Thus, degree is at most two, while $t\Phi(U)$ and the other conditions are linear.

There are $O(M^2+n^{K-1})$ conditions with rational coefficients from values, CTRs, and recovery ratios. Apply the exact feasibility bound of \Cref{sec:framework}.
\end{proof}

\noindentparagraph{\textbf{\textup{Constructing the Equilibrium.}}}
Fix an admissible $(S,I,F)$ with $e>r$ and algebraic solution $(\theta^*,\beta^*)$, where $\theta^*=(t^*,\alpha_S^*)$. Define the full profile $\alpha^*$ by \Cref{eq:pos-recovery}, evaluate the budget intervals, and choose $y$ satisfying
\[
\left\{
\begin{aligned}
&\underline y_i^*\leq y_i\leq\overline y_i^*
&& (i\in T),\\
&y(U)\leq t^*\Phi(U)
&& (U\in\mathcal U),\\
&y(T)=t^*\Phi(T).
\end{aligned}
\right.
\]

The system in \Cref{eq:pos-many-system} ensures feasibility. Nonnegativity and the total-payment equality imply all omitted capacities. Thus, $y\in\mathcal P_F(t^*)$. With the equilibrium parameters fixed, this is a linear feasibility problem with $e$ variables and $O(e+|\mathcal U|)$ constraints.

\noindentparagraph{\textbf{\textup{Decomposing Payments into Group Orders.}}}
Apply \Cref{prop:threshold-decomposition} to $y\in\mathcal P_F(t^*)$. It returns at most $e$ threshold-group orders and exact weights
\[
y=\sum_{\sigma}\lambda_\sigma y^\sigma(t^*),
\qquad
\lambda_\sigma\geq0,
\qquad
\sum_\sigma\lambda_\sigma=1.
\]
The proposition is applied to the capacity polytope alone: the budget intervals constrain the recovered expectation $y$, not the individual threshold orders in its decomposition.

\noindentparagraph{\textbf{\textup{Constructing and Compressing Complete Rankings.}}}
For each $\rho\in J$ and recorded $\sigma$, concatenate them with the fixed valid $\mathcal G_{<}(F)$ order, expand groups, and assign probability $\beta_\rho^*\lambda_\sigma$. This preserves upper payments and $y$, so the residual bounds and \Cref{eq:pos-many-system} ensure an exact ABE-B at $\alpha^*$.

There are at most $(s+1)e\leq Ke$ rankings. Construct the support of size at most $n+1$ promised by \Cref{lem:outcome-compression} by eliminating linear dependencies among $(P(a,\alpha^*),1)$, where $P(a,\alpha^*)=(P_i(a,\alpha^*))_{i\in[n]}$ is the payment vector of ranking $a$. Merge duplicates and delete zero weights to obtain $\pi$, preserving multipliers, validity, and expected payments.

Since $|\mathcal U|=O(n^{K-1})$, payment recovery, the decomposition in \Cref{prop:threshold-decomposition}, ranking expansion, and final compression all take polynomial time for fixed $K$.

\noindentparagraph{\textbf{\textup{Correctness and Running Time.}}}
Let $N_{\mathrm{triple}}$ count admissible triples $(S,I,F)$.

\begin{proof}[Proof of \Cref{thm:pos-computation}]
For $e\leq r$, each algebraic solution of \Cref{eq:pos-few-system} gives valid rankings and expected payments satisfying budget feasibility and maximal pacing. For $e>r$, capacity and box-intersection conditions give attainable payments within residual bounds. Reconstruction yields an exact ABE-B and preserves upper payments at the solution. Final compression preserves all expected payments.

For completeness, take a group-consecutive ABE-B from \Cref{prop:exist}. \Cref{lem:pos-recovery} places its parameters in an enumerated admissible $(S,I,F)$. If $e\leq r$, a fixed lower-tail order preserves payments, and payment-space compression gives an enumerated support of at most $a_F+1$ rankings. If $e>r$, upper-payment compression gives at most $s+1$ orders, while $\mathcal G_{=}(F)$ payments satisfy the reduced conditions. Thus, an enumerated candidate system is feasible.

We next bound the running time. Each advertiser owns an item, so $n\leq M$.

\emph{Advertiser Sets and Threshold Intervals.} The number of candidate sets $S$ is
\[
\sum_{s=0}^{K-1}\binom ns
=
O(n^{K-1}).
\]

Splitting at distinct $u_i$ gives $O(n)$ intervals and singleton endpoints, hence $O(n^K)$ pairs $(S,I)$.

\emph{Comparison Cells.} For each $(S,I)$, $\theta=(t,\alpha_S)$ has dimension $s+1\leq K$ and $O(M^2+n)=O(M^2)$ comparison hyperplanes. Including equality faces gives $O(M^{2K})$ cells per pair, enumerated in $O(M^{O(K)})$ operations. Thus,
\[
N_{\mathrm{triple}}
=
O(n^K M^{2K})
=
O(M^{3K}).
\]

Testing threshold admissibility takes polynomial time per cell.

\emph{Representative Rankings and Supports.} For $e\leq r$, at most $h!e!\leq K!$ candidates yield supports of size at most $a_F+1\leq K+1$. For $e>r$, at most $h!\leq(K-1)!$ upper orders yield supports of size at most $s+1\leq K$. Hence, each triple has $O(1)$ supports for fixed $K$. Rankings and affine payments take $O(M^2)$ operations to construct.

\emph{Candidate Systems.} For $e\leq r$, systems have at most $2K+1$ variables and $O(M^2)$ conditions. For $e>r$, at most $2K$ variables remain. Only $\mathcal U$ requires explicit capacity tests, with
\[
|\mathcal U|
\leq
\sum_{\ell=0}^{r-1}\binom e\ell
=
O(n^{r-1})
=
O(n^{K-1}).
\]

Including cell and budget constraints, both cases have condition count
\[
m_F
=
O(M^2+n+n^{K-1})
=
O\bigl(M^{\max\{2,K-1\}}\bigr).
\]

All conditions have degree at most two. Applying the feasibility bound $(m_F+1)^{O(q^2)}$ with $q\leq2K+1$ gives $M^{O(K^3)}$ time per system.

\emph{Equilibrium Construction.} The solution weights give the distribution for $e\leq r$. For $e>r$, the reconstruction above recovers it in polynomially many operations in $M+n+|\mathcal U|$, hence $O(M^{O(K)})$. Combining distributions and compressing payments gives at most $n+1$ rankings within this bound.

There are $O(M^{3K})$ admissible triples and constantly many supports per triple for fixed $K$. Together with equilibrium construction, the total running time is $O(M^{O(K^3)})$.
\end{proof}

\subsection{ROI Constraints: Sparse Nonunit Multipliers}
\label{sec:roi-positions}

An ROI advertiser that is undisplayed at a given multiplier profile has zero value and payment at that profile, so the display threshold alone does not determine its multiplier. We obtain a small set of unknown multipliers by preserving the displayed items and their price witnesses while lowering the remaining multipliers to one.

\begin{theorem}[Exact ABE-R Computation with Few Slots]
\label{thm:roi-pos-computation}
For every fixed number of display slots $K$, there exists a polynomial-time algorithm that computes an exact ABE-R $(\alpha,\pi)$ with time complexity $O(M^{O(K^2)})$.
\end{theorem}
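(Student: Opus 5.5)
The plan has three stages: prove a sparsity theorem, enumerate a bounded family of candidate systems, and solve them with exact polynomial feasibility.

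\textbf{Sparsity.} I would first show that every group-consecutive ABE-R $(\alpha,\pi)$ from \Cref{prop:exist} can be turned into an ABE-R $(\alpha',\pi')$ with the following properties:
\begin{enumerate}[label=(\roman*)]
\item at most $K+2$ coordinates of $\alpha'$ differ from one;
\item every supported ranking keeps the item and the per-click price in each displayed slot.
\end{enumerate}
The construction goes as follows.
\begin{enumerate}[label=(\roman*)]
\item Keep the multipliers of the at most $K$ advertisers owning displayed items in some supported ranking. If the support is large, I would first compress it using \Cref{lem:outcome-compression}. Alternatively, I would work with the displayed prefix of each ranking layer by layer.
\item Keep two price witnesses below the display threshold. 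These are the highest and second-highest other-owner bids among undisplayed items, taken with distinct owners. With two distinct owners, every displayed advertiser still finds an eligible witness whose bid equals its former price.
\item In the degenerate case where the witness role is played by a whole tied threshold layer, keep one universal threshold witness instead.
\item Lower every other multiplier to one.
\end{enumerate}
Lowering bids of advertisers who are neither displayed nor witnesses cannot raise any price: the maximum over lower eligible bids is attained by a retained witness. It cannot lower any price either, since the witnesses are untouched. Displayed order is unchanged because their bids only decrease, and I re-sort the tail. The lowered advertisers have zero value and zero payment, hence surplus zero, so maximal pacing holds for them with $\alpha_i=1<A$. Retained advertisers keep their value and payment, so their ROI conditions are unchanged.

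\textbf{Enumeration.} Next I would enumerate:
\begin{enumerate}[label=(\roman*)]
\item the sparse support $S$ of nonunit multipliers, with $|S|\le K+2$, giving $O(n^{K+2})$ choices;
\item comparison cells in $\alpha_S\in[1,A]^{|S|}$ for the hyperplanes between groups of $S$ and the fixed unit bids of all other groups, together with the endpoint hyperplanes $\alpha_i=1$ and $\alpha_i=A$. This gives $M^{O(K)}$ cells.
\end{enumerate}
Within a cell, all advertisers outside $S$ bid their values. The displayed prefix and price setters come only from groups with bounded combinatorial choices. The relevant outcomes are the displayed slots, each with at most $K$ choices, and the witness prices, and these form a candidate family of size $M^{O(K)}$.

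The surplus vector is nonzero only on the at most $K+2$ coordinates of $S$ plus displayed unit-multiplier advertisers. I would track surplus only on the advertisers that can be displayed, bounded via the sparsity statement, which yields dimension at most $K+2$. \Cref{lem:outcome-compression} then gives supports of at most $K+3$ rankings. Combined with the fixed-$n$ bound, this gives $\min\{n+1,K+3\}$.

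\textbf{Solving.} Each candidate system has at most $(K+2)+(K+3)$ variables, polynomially many conditions, and degree two, with products of weights and multipliers. The Basu--Pollack--Roy bound from \Cref{sec:framework} then gives $M^{O(K^2)}$ time per system, and $M^{O(K^2)}$ overall.

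\textbf{Main obstacle.} The hard step is the sparsity theorem when ties at the threshold layer span many advertisers across different rankings in the support of $\pi$. A displayed item may be priced by a tied threshold item in one ranking and by a deep tail item in another. I would handle this by showing that a single common threshold bid serves as a universal witness whenever ties occur. Otherwise I would select two witnesses uniformly over all supported rankings, using that the undisplayed bid sequence below the threshold is identical across valid rankings up to ties.
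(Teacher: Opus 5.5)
\emph{Gap 1: the count in your sparsity step.} Your construction fails at step (i). The set of advertisers displayed in \emph{some} supported ranking is not bounded by $K$. If the $K$th highest bid $t$ is shared by $e>r=K-h$ owners, $\pi$ can randomize over which threshold groups fill the remaining $r$ slots, so the union over the support can contain all $e$ owners. Compressing with \Cref{lem:outcome-compression} only reduces the support to $n+1$ rankings, which still allows up to $K(n+1)$ displayed owners. Your ``main obstacle'' paragraph addresses the witness question but not this count.

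The missing idea is an ROI-feasibility argument. When $e>r$, at least one whole threshold group lies below slot $K$. Since each owner has at most one threshold group, every displayed threshold item pays exactly $t$ per click. A threshold owner $i\notin H_t$ has $\alpha_iu_i=t$. If $\alpha_i>1$ and its expected click-through rate were $x>0$, it would pay $tx=\alpha_iu_ix>u_ix=V_i(\pi)$, violating ROI feasibility. This has three consequences:
\begin{itemize}
\item Every threshold owner outside $H_t$ that is ever displayed already has multiplier one.
\item The displayed nonunit owners therefore lie in $H_t$, and $|H_t|\le h<K$.
\item A threshold owner outside $H_t$ with $\alpha_i>1$ is never displayed. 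That is exactly what makes it an eligible bid-$t$ witness for \emph{every} displayed item. An arbitrary threshold owner might be displayed in some ranking, and then it cannot price its own items. If no such owner exists, the reset leaves the threshold layer untouched.
\end{itemize}
Your case split should be $e>r$ versus $e\le r$, not ``ties'' versus ``no ties.'' With $e\le r$ there may be no wholly undisplayed threshold group, so no universal threshold witness need exist. In that case you should use $|H_t\cup T|\le h+e\le K$ and retain all of $T$. The reason is that an undisplayed threshold group may set the price $t$, and two witnesses chosen among bids strictly below $t$ do not cover that role.

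\emph{Gap 2: the surplus dimension.} Your dimension bound for compression is unjustified. Sparsity bounds the number of nonunit multipliers, not the number of advertisers that can be displayed. Unit-multiplier advertisers outside $S$ can appear in many candidate rankings, so ``tracking surplus on advertisers that can be displayed'' does not give dimension $K+2$. Without a $K$-dependent support bound, enumerating supports is not polynomial when $n$ is unbounded.

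What you need is the following. An advertiser with $\alpha_i=1$ pays at most its value on every displayed item, so its surplus is nonnegative in every valid ranking. Maximal pacing with $1<A$ forces its expected surplus to be zero, so every supported ranking must give it zero surplus. This condition says each displayed item of such an advertiser is priced by an other-owner bid equal to its own, and it is fixed on a comparison cell. You can therefore discard violating candidates in advance. Surplus then lives only on the $s\le K+2$ coordinates of $S$, and Carath\'eodory gives support at most $s+1\le K+3$.

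With these two ingredients, the rest of your plan is the paper's argument and gives the $M^{O(K^2)}$ bound. That rest consists of enumerating $S$, cells in $[1,A]^s$, $O(M^K)$ displayed group prefixes with canonical tails, and exact sampling on systems with at most $2K+5$ variables.
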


\begin{algorithm}[H]
\small
\caption{Exact ABE-R with a Fixed Number of Display Slots}
\label{alg:roi-positions}
\begin{algorithmic}[1]
\REQUIRE An explicit rational ROI instance with fixed $K$.
\ENSURE An exact ABE-R with at most $K+2$ nonunit multipliers and support size at most $\min\{n+1,K+3\}$.
\STATE \textbf{Phase I: Sparse-Multiplier Search.}
\STATE Compute value groups and the cap $A$.
\FORALL{$S\subseteq[n]$ with $s=|S|\leq\min\{n,K+2\}$}
    \STATE Set $\alpha_i=1$ outside $S$ and enumerate comparison cells $F$ in $[1,A]^s$, including ties and multiplier endpoints.
    \FORALL{cells $F$}
        \STATE \textbf{Phase II: Candidate Ranking Construction.}
        \STATE Enumerate admissible displayed group prefixes and complete their canonical tails.
        \STATE Retain candidates with zero surplus outside $S$ throughout $F$.
        \STATE Compute their affine surplus functions on $S$.
        \STATE Enumerate nonempty candidate supports $J$ with $|J|\leq s+1$.
        \STATE \textbf{Phase III: Equilibrium Construction.}
        \FORALL{candidate supports produced in Phase II}
            \STATE Form \Cref{eq:roi-pos-system} for $F$ and $J$.
            \STATE Apply exact semialgebraic sampling.
            \IF{an algebraic solution $(\alpha_S^*,w^*)$ is found}
                \STATE Recover the full multiplier profile and expand the selected group orders into complete rankings.
                \STATE Assign the sampled weights, merge identical rankings, and remove zero weights.
                \STATE \textbf{return} the resulting exact equilibrium $(\alpha^*,\pi)$.
            \ENDIF
        \ENDFOR
    \ENDFOR
\ENDFOR
\end{algorithmic}
\end{algorithm}

The algorithm returns a profile with at most $K+2$ multipliers different from one and a distribution $\pi$ supported on at most $\min\{n+1,K+3\}$ group-consecutive rankings. The structural task is therefore to prove that such a sparse representative always exists. Algorithm~\ref{alg:roi-positions} summarizes the construction in three phases: sparse-multiplier search, candidate ranking construction, and equilibrium construction.

We observe that the number of advertisers with an above-threshold item is bounded by the number of display slots. For a group-consecutive ABE-R, let $t$ be the $K$th highest item bid and $h$ the item count strictly above $t$. Let $T$ be owners with a group bidding $t$, set $e=|T|$ and $r=K-h$, and define $H_t=\{i:\alpha_i u_i>t\}$. Each advertiser in $H_t$ owns an above-threshold item, so $|H_t|\leq h<K$.

For a displayed owner, the relevant tail price is the largest bid after excluding that owner. The two highest bids from distinct owners suffice for all such exclusions. If a nonunit threshold owner is never displayed, its unchanged bid can instead witness the threshold price for every displayed owner. The next two lemmas formalize these observations.

\begin{lemma}[Two Below-Threshold Price Witnesses]
\label{lem:two-tail-witnesses}
Fix a multiplier profile and a threshold $t>0$. For each owner $i$, let $d_i=\max\{\alpha_i v_{ij}:\alpha_i v_{ij}<t\}$, with $\max\varnothing=0$. There is a set $\mathcal W$ of at most two owners such that, for every owner $i$,
\[
\max_{j\neq i}d_j
=
\max_{j\in \mathcal W\setminus\{i\}}d_j,
\]
where the maximum of an empty set is zero. Keeping the bids of owners in $\mathcal W$ and lowering only bids that were already below $t$ preserves this eligible maximum for every possible displayed owner.
\end{lemma}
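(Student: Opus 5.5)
The plan is to choose $\mathcal W$ as the owners holding the top two values of $d$. Let $j_1$ be an owner maximizing $d_j$ over all owners. Let $j_2$ be an owner maximizing $d_j$ over $j\neq j_1$. Ties are broken by a fixed rule, and $j_2$ is omitted if $n=1$. Set $\mathcal W=\{j_1,j_2\}$. Since each $d_j$ is a single number per owner, the ``distinct owners'' requirement is automatic: $j_1\neq j_2$ by construction.

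I would verify the identity by splitting on whether $i=j_1$. If $i\neq j_1$, then $j_1$ is eligible on both sides. The left side equals $d_{j_1}$ because $d_{j_1}$ is the global maximum. The right side is the maximum over a set that contains $j_1$ and is contained in $[n]\setminus\{i\}$, so it is squeezed between $d_{j_1}$ and the left side. If $i=j_1$, the left side is $\max_{j\neq j_1}d_j=d_{j_2}$ by the choice of $j_2$, and the right side is $d_{j_2}$ as well. When $n=1$, both sides are the empty maximum, which is zero. Since $d_j\geq0$, the convention $\max\varnothing=0$ is consistent throughout.

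For the final sentence, I consider a modified profile that keeps every bid of $j_1$ and $j_2$ and only lowers bids that were already below $t$. The new values satisfy $d'_{j}=d_{j}$ for $j\in\mathcal W$. For every other owner $j$, I claim $d'_j\leq d_j$. Bids at or above $t$ are untouched, so the set of below-$t$ bids changes only by lowering some of its members. This step needs a little care: a lowered bid stays below $t$, so it still counts toward $d'_j$ but cannot exceed the old maximum, and no new bid enters the below-$t$ set from above. Then for each $i$,
\[
\max_{j\in\mathcal W\setminus\{i\}}d_j
\leq
\max_{j\neq i}d'_j
\leq
\max_{j\neq i}d_j.
\]
Here the lower bound uses $\mathcal W\setminus\{i\}\subseteq[n]\setminus\{i\}$ with unchanged values, and the upper bound uses $d'\leq d$. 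The two outer expressions agree by the identity already proved, so equality holds.

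The only real obstacle is making sure the monotonicity $d'_j\leq d_j$ cannot fail. The potential failure is an item moving across $t$, and the hypothesis that only bids already below $t$ are lowered is exactly what rules it out.
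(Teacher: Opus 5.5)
Your proof is correct and follows the paper's route: take the owners with the largest and second-largest $d_j$ as $\mathcal W$, verify the identity by splitting on whether $i$ is the top owner, and use that bids below $t$ can only decrease to obtain the final claim. The differences are cosmetic. The paper selects only owners with positive $d_j$, so $\mathcal W$ may be empty, whereas you always take two owners (one if $n=1$); you also spell out the monotonicity $d'_j\le d_j$ and the sandwich inequality, which the paper compresses into a single sentence.
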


\begin{proof}
If $d_i=0$ for every owner $i$, take $\mathcal W=\varnothing$, and both maxima are zero. Otherwise, choose an owner $i_1$ with the largest positive $d_i$ and set $\mathcal W=\{i_1\}$. If another owner has positive $d_i$, choose an owner $i_2\neq i_1$ with the second-largest positive $d_i$ and add it to $\mathcal W$. For owner $i\neq i_1$, the eligible maximum is attained by $i_1$. For $i=i_1$, it is attained by $i_2$ if selected, and is zero otherwise. The final statement follows because a preserved witness attains the original maximum and no modified bid can exceed it.
\end{proof}

When more threshold groups exist than remaining slots, one threshold owner can play an even stronger role: it supplies the same bid-$t$ witness to every displayed item.

\begin{lemma}[Universal Threshold Witness]
\label{lem:universal-threshold-witness}
Consider a group-consecutive exact ABE-R with threshold data $(t,h,r,T,e,H_t)$ as defined above, and suppose $e>r$. Let $\mathcal N=\{i\in T\setminus H_t:\alpha_i>1\}$. Every owner in $\mathcal N$ is undisplayed in every ranking in the equilibrium support. For any $i^\dagger\in\mathcal N$, its threshold group can therefore remain below the displayed prefix and supply an eligible bid-$t$ price witness for every displayed item. If $\mathcal N=\varnothing$, all threshold owners outside $H_t$ already have multiplier one.
\end{lemma}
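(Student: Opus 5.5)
The argument rests on maximal pacing in \Cref{def:roi}, combined with the structure of the threshold layer. Fix a group-consecutive exact ABE-R $(\alpha,\pi)$ with $e>r$, and take $i\in\mathcal N$. I will first show that $i$ has zero expected surplus and then conclude that it is never displayed.

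Since $i\notin H_t$, every item of $i$ bids at most $t$. Since $i\in T$, its top group bids exactly $t$, so that group is the only part of $i$ that can reach the display prefix. All its other items bid strictly below $t$ and are undisplayed in every valid ranking. Whenever the threshold group of $i$ occupies display slots, the argument from the fixed-slot budget analysis applies: because $e>r$, some whole threshold group with an owner different from $i$ stays undisplayed and bids $t$. Every displayed item of $i$ therefore pays exactly $t=\alpha_i v_g$ per click, where $g$ is the threshold group of $i$. Each displayed click then earns surplus $v_g-\alpha_i v_g=(1-\alpha_i)v_g<0$, because $\alpha_i>1$.

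Aggregating over the rankings in the support gives $R_i(\pi,\alpha)=(1-\alpha_i)v_g\,\mathbb E_\pi[x_i]$, where $x_i$ is the total CTR received by $i$. ROI feasibility requires $R_i\ge 0$. Since $c_K>0$, every displayed slot carries positive CTR, so $\mathbb E_\pi[x_i]=0$. Hence no positive-probability ranking displays an item of $i$. This holds whether or not $\alpha_i=A$; maximal pacing is not needed, only ROI feasibility. I expect this to be the main step: the price identity requires that another owner's bid-$t$ group lie below every displayed threshold item. This holds because at most $r$ threshold groups intersect the prefix and $e>r$ groups have pairwise distinct owners by positivity.

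The witness claim follows next. Take any $i^\dagger\in\mathcal N$. In every supported ranking its threshold group lies entirely after slot $K$ and bids $t$. For any displayed item of an owner $j\ne i^\dagger$, this group is ranked later and has a different owner, so the price is at least $t$. Displayed items strictly above $t$ may have higher witnesses. For displayed threshold items, the price is exactly $t$, and $i^\dagger$ alone certifies this, because no displayed item belongs to $i^\dagger$. Its group can therefore stay below the displayed prefix and serve every displayed item at once, independent of the other threshold groups.

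Finally, if $\mathcal N=\varnothing$, then every $i\in T\setminus H_t$ has $\alpha_i\le 1$. Since $\alpha_i\in[1,A]$, this forces $\alpha_i=1$. This case requires only unwinding the definitions.
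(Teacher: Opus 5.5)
Your proposal is correct and follows the paper's proof essentially step by step: a wholly undisplayed threshold group of another owner forces every displayed item of $i\in\mathcal N$ to pay $t=\alpha_i u_i>u_i$ per click, so ROI feasibility rules out positive expected exposure, and the witness claim and the case $\mathcal N=\varnothing$ then follow at once. One small fix: your opening sentence says the argument rests on maximal pacing, but, as you note yourself later, only ROI feasibility is used.
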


\begin{proof}
Fix $i\in \mathcal N$, so $\alpha_i>1$. Membership in $T\setminus H_t$ implies $\alpha_i u_i=t$. Since $e>r$, at most $r$ threshold groups can intersect the first $K$ positions, and at least one whole threshold group of another owner remains below every displayed threshold item. All displayed items of $i$ therefore belong to its top-value threshold group and pay $t$ per click. If their expected total click-through rate were $x>0$, then
\[
V_i(\pi)=u_i x,
\qquad
P_i(\pi,\alpha)=t x=\alpha_i u_i x=\alpha_i V_i(\pi)>V_i(\pi),
\]
contradicting ROI feasibility. Hence $i$ is undisplayed in every supported ranking. An owner $i^\dagger\in\mathcal N$ is different from every displayed owner and its threshold group can be placed immediately below the displayed prefix, so it supplies an eligible bid-$t$ witness. The final statement is immediate from the definition of $\mathcal N$.
\end{proof}

The witness lemmas give an outcome-preserving sparsity result.

\begin{lemma}[Sparse Nonunit Multipliers]
\label{lem:roi-pos-sparse}
Every group-consecutive exact ABE-R can be transformed into a group-consecutive exact ABE-R with at most $K+2$ multipliers different from one by lowering selected multipliers to one and rearranging undisplayed groups. The transformation preserves the item, value, and price in every displayed slot of every supported ranking. Consequently, every ROI instance admits such a sparse equilibrium.
\end{lemma}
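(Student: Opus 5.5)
The plan is to start from a group-consecutive exact ABE-R $(\alpha,\pi)$ supplied by \Cref{prop:exist} and to designate a \emph{kept} set $S$ of at most $K+2$ owners whose multipliers are left unchanged. Every other owner will be lowered to multiplier one. We then rebuild each supported ranking by keeping its displayed prefix verbatim and re-sorting the undisplayed suffix by the new bids. The thing to verify is that every displayed item keeps its owner, value and price per click, and that every lowered owner has zero surplus both before and after.

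\textbf{The kept set.} We use the threshold data $(t,h,r,T,e,H_t)$ and always keep $H_t$; these owners number at most $h\leq K-1$. We split on $e$ versus $r$.
\begin{itemize}[leftmargin=*]
\item \emph{Case $e\leq r$.} Keep $H_t\cup T$, which contains at most $h+e\leq h+r=K$ owners. Add the at most two below-threshold witnesses $\mathcal W$ of \Cref{lem:two-tail-witnesses}. The total is at most $K+2$.
\item \emph{Case $e>r$.} Keep $H_t$, the threshold owners with multiplier already one (which need no lowering), and one owner $i^\dagger\in\mathcal N$ if $\mathcal N\neq\varnothing$. Every other owner in $T\setminus H_t$ has $\alpha_i>1$. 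By \Cref{lem:universal-threshold-witness} such an owner lies in $\mathcal N$ and is undisplayed in every supported ranking. In this case the nonunit multipliers number at most $|H_t|+1\leq K$. No tail witnesses are needed, because the retained group of $i^\dagger$ bidding $t$ is an eligible bid-$t$ witness for every displayed item.
\end{itemize}
In both cases every owner outside the kept set bids at most $t$ on all its items, since it is not in $H_t$. It is also undisplayed in every supported ranking: in the first case it is outside $T$, and in the second it lies in $\mathcal N$.

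\textbf{Lowering and rearranging.} Set $\alpha_i'=1$ for each non-kept owner. Its bids weakly decrease, so each bid that was below $t$ stays below $t$, and each bid at $t$ weakly drops. In a supported ranking $a$, the displayed prefix consists of kept owners' items and owners in $T$ with unchanged bids. Those prefix bids are all at least $t$, and the new tail bids are all at most $t$. Hence the ranking formed by the old prefix followed by the tail sorted by new bids (ties broken group-consecutively, with $i^\dagger$'s threshold group placed first among the bid-$t$ groups) is valid at $\alpha'$.

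\textbf{Prices of displayed items.} The price of a displayed item is the largest other-owner bid later in the ranking. If that bid lies inside the prefix, it is unchanged. Otherwise, in the case $e\leq r$, the relevant tail maximum is $\max_{j\neq i}d_j$, which \Cref{lem:two-tail-witnesses} preserves, since lowered bids cannot exceed the preserved witness. In the case $e>r$, the witness $i^\dagger$ supplies bid $t$, and no tail bid exceeds $t$. So values, payments and surplus of every displayed owner are unchanged in every ranking. Kept owners retain their multipliers and their expected surplus, so ROI feasibility and maximal pacing still hold for them.

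\textbf{Lowered owners.} Each lowered owner is undisplayed before and after the change, so its surplus is $0$, which is exactly what maximal pacing requires at $\alpha_i'=1<A$. Merging identical rankings then gives a group-consecutive exact ABE-R. Existence for every ROI instance follows from \Cref{prop:exist}.

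\textbf{Main obstacle.} I expect the hardest step to be the case $e>r$, where lowering bid-$t$ threshold owners changes the composition of the tied layer $\mathcal G_{=}$. Two points must be checked there. First, the prefix remains valid, which needs care when a lowered group ties exactly at $t$ with displayed items; the new suffix bids are still at most $t$, so this should be fine. Second, a displayed threshold item that was priced at $t$ by a now-lowered group must still find a bid-$t$ witness. This is exactly why $i^\dagger$ must be kept. When $\mathcal N=\varnothing$ nothing is lowered inside $T$, so the original witnesses survive.
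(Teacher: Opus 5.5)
Your proposal is correct and follows the paper's proof essentially step for step. It uses the same retained sets ($H_t\cup T\cup\mathcal W$ when $e\le r$; $H_t\cup\{i^\dagger\}$, plus threshold owners already at multiplier one, when $e>r$), the same prefix-preserving re-sort of the undisplayed groups, and the same zero-surplus argument for lowered owners.

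Two small points need patching. The paper avoids both by keeping the whole at-least-$t$ block when $e\le r$ and by expanding every displayed group completely when $e>r$.
\begin{enumerate}
\item When $e\le r$, the tail can contain a wholly undisplayed bid-$t$ group of another owner in $T$. The tail maximum can then be $t$ rather than $\max_{j\neq i}d_j$. It is still preserved, because that owner is kept and its group stays in the tail.
\item The undisplayed remainder of a group split at slot $K$ must stay adjacent to its displayed part, ahead of $i^\dagger$'s group, so that the ranking remains group-consecutive.
\end{enumerate}
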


\begin{proof}
Start from any group-consecutive exact ABE-R $(\alpha,\pi)$, whose existence is guaranteed by \Cref{prop:exist}. All threshold objects below refer to this profile. We construct $\alpha'$ and transform each supported ranking.

\smallskip
\noindent\emph{Case 1: $e\leq r$.}
Let $\mathcal W$ be the set from \Cref{lem:two-tail-witnesses} and define
\[
\mathcal R=H_t\cup T\cup \mathcal W,
\qquad
\alpha'_i=
\begin{cases}
\alpha_i,&i\in\mathcal R,\\
1,&i\notin\mathcal R.
\end{cases}
\]
Every owner outside $H_t\cup T$ has all bids strictly below $t$, and resetting its multiplier to one can only lower those bids. For a supported ranking $a$, keep the relative order of all groups whose original bid is at least $t$, then append the remaining groups in nonincreasing order of their new bids. The resulting group-consecutive ranking $a'$ is valid at $\alpha'$ and has the same first $K$ items as $a$.

Fix a displayed item owned by $i$. If its original price is set by an other-owner group in the preserved at-least-$t$ block, that group and bid are unchanged and all appended bids are below $t$. Otherwise its price is the highest eligible below-threshold bid, namely $\max_{j\neq i}d_j$, which is preserved by \Cref{lem:two-tail-witnesses}. Hence every displayed value and price is unchanged. The number of possibly nonunit multipliers is at most
\[
|H_t|+|T|+2
\leq h+e+2
\leq h+r+2
=K+2.
\]

\smallskip
\noindent\emph{Case 2: $e>r$.}
Use \Cref{lem:universal-threshold-witness}. If $\mathcal N\neq\varnothing$, choose one owner $i^\dagger\in\mathcal N$. Otherwise, introduce no additional witness. Set
\[
\mathcal R=H_t\cup\{i^\dagger\},
\qquad
\alpha'_i=
\begin{cases}
\alpha_i,&i\in\mathcal R,\\
1,&i\notin\mathcal R,
\end{cases}
\]
omitting $i^\dagger$ when $\mathcal N=\varnothing$. Any changed owner outside $H_t$ either belongs to $\mathcal N$ and is never displayed, or originally has all bids strictly below $t$ and remains below $t$ after the reset.

For a supported ranking $a$, preserve the relative order of all above-threshold groups and of the threshold groups intersecting the first $K$ positions, expanding each displayed group completely. These displayed threshold groups still bid $t$: their owners lie in $H_t$ or have multiplier one. If $\mathcal N\neq\varnothing$, place the group of $i^\dagger$ next and append all remaining groups in nonincreasing new-bid order, including any other groups still bidding $t$. If $\mathcal N=\varnothing$, preserve the entire remaining threshold layer and sort the lower groups by their new bids. The final statement of \Cref{lem:universal-threshold-witness} ensures that this threshold layer is unchanged. The resulting complete ranking $a'$ is valid and group-consecutive, with the same first $K$ items as $a$.

A displayed price strictly above $t$ is determined inside the preserved above-threshold block and is unchanged. Every other displayed price equals $t$: a threshold group remains below the displayed prefix and has an owner different from the displayed owner. When $\mathcal N\neq\varnothing$, the owner $i^\dagger$ supplies this witness universally. When $\mathcal N=\varnothing$, the preserved threshold layer supplies the same witnesses as before. Thus values and prices are again preserved slot by slot, with at most $|H_t|+1\leq h+1\leq K$ nonunit multipliers.

In either case, let $\pi'$ aggregate the probabilities of rankings mapped to the same $a'$. Advertisers whose multipliers are unchanged have the same expected values and payments as before. Every advertiser whose multiplier is lowered to one is undisplayed in all transformed rankings and therefore has value and payment zero. Since $1<A$, its required surplus equality holds. Hence $(\alpha',\pi')$ is a group-consecutive exact ABE-R with at most $K+2$ nonunit multipliers and the claimed preservation of displayed outcomes.
\end{proof}

\noindentparagraph{\textbf{\textup{From Sparsity to a Search Space.}}}
Enumerate $S\subseteq[n]$ with $s=|S|\leq\min\{n,K+2\}$. The set $S$ need only contain all nonunit advertisers, and an advertiser in $S$ may still receive multiplier one. Leave $\alpha_S\in[1,A]^s$ unknown and set $\alpha_i=1$ outside $S$. Thus, a group bid is $v_g\alpha_{i(g)}$ when its owner lies in $S$, and the constant $v_g$ otherwise. Compare cross-owner group bids and include the endpoint forms $\alpha_i-1$ and $\alpha_i-A$ for $i\in S$. Handle constant comparisons as in the fixed-slot budget construction and include equality faces.

For each cell $F$, all group-bid comparisons and the set $\mathsf C_S(F)=\{i\in S:\alpha_i<A\text{ throughout }F\}$ are fixed. \Cref{lem:roi-pos-sparse} guarantees that some enumerated $S$ and cell contain an exact equilibrium.

\noindentparagraph{\textbf{\textup{Candidate Rankings and Payment Representation.}}}
A group-consecutive ranking reaches slot $K$ within at most $K$ groups. Enumerate distinct-group lists $(g_1,\ldots,g_p)$ with $p\leq K$, nonincreasing bids on $F$, and
\[
\sum_{j<p}q_{g_j}<K
\leq
\sum_{j\leq p}q_{g_j}.
\]

A list is admissible if every omitted group bids at most $b_{g_p}$. Expand its groups in their prescribed internal orders, including the undisplayed remainder of $g_p$ when the boundary falls inside that group. Append the omitted groups in decreasing bid order, resolving ties by a fixed group order. This gives a representative ranking valid throughout $F$.

This gives $O(M^K)$ complete candidates covering every valid group-consecutive displayed prefix. Canonical tail completion preserves displayed values and payments: all tail items follow the displayed items, and their highest eligible other-owner bid is independent of tail order. Tail items remain eligible for pricing when candidate payments are computed. For a fixed complete candidate, each pricing representative is fixed, making values constant and payments affine in $\alpha_S$.

\noindentparagraph{\textbf{\textup{Zero-Surplus Rankings and Compression.}}}
Outside $S$, every multiplier is one, so a displayed item's price is at most its value. Each such advertiser therefore has nonnegative surplus in every valid ranking. Since $1<A$, equilibrium requires zero expected surplus, and each ranking with positive probability must give zero surplus outside $S$.

This condition can be determined throughout the cell. Because displayed CTRs are positive, every displayed item of an advertiser outside $S$ must have its price equal to its bid. The cell fixes whether its other-owner pricing representative has that bid. If no representative exists, the price is zero and the condition fails. Undisplayed advertisers have value and payment zero. Thus, we can restrict the representative family to rankings satisfying the zero-surplus condition before choosing numerical multipliers.

Let $L_F$ be the number of retained rankings. Index them by $\ell\in[L_F]$ and write their affine surplus as $R_i^{(\ell)}(\alpha_S)=V_i(a^{(\ell)})-P_i(a^{(\ell)},\alpha)$. Their surplus vanishes outside $S$. Applying \Cref{lem:outcome-compression} to the $s$ coordinates on $S$ gives support at most $s+1$. For each nonempty $J\subseteq[L_F]$ with $|J|\leq s+1$, define
\[
r_i(\alpha_S,w)
=
\sum_{\ell\in J}w_\ell R_i^{(\ell)}(\alpha_S)
\qquad(i\in S).
\]

\noindentparagraph{\textbf{\textup{Equilibrium Constraints.}}}
For each selected cell and support, impose
\begin{equation}
\label{eq:roi-pos-system}
\left\{
\begin{aligned}
&\alpha_S\in F,\\
&w_\ell\geq0\quad(\ell\in J),\qquad
\sum_{\ell\in J}w_\ell=1,\\
&r_i(\alpha_S,w)\geq0
&& (i\in S),\\
&r_i(\alpha_S,w)=0
&& (i\in\mathsf C_S(F)).
\end{aligned}
\right.
\end{equation}

\noindentparagraph{\textbf{\textup{Recovering the Equilibrium.}}}
For an algebraic solution $(\alpha_S^*,w^*)$, set
\[
\alpha_i^*=
\begin{cases}
(\alpha_S^*)_i,&i\in S,\\
1,&i\notin S.
\end{cases}
\]

Expand the associated complete rankings, including their canonical tails. The zero-surplus condition gives
\[
R_i(\pi,\alpha^*)=
\begin{cases}
r_i(\alpha_S^*,w^*),&i\in S,\\
0,&i\notin S.
\end{cases}
\]

The candidate system enforces the ROI conditions on $S$, while outside $S$ the assigned multiplier is $1<A$ and surplus is zero. As in the fixed-slot ABE-B construction, recovery completes the multiplier profile and expands weighted group orders into complete rankings. Here, the sampled probabilities already determine the distribution, so no separate threshold-payment recovery or decomposition is needed. Recovery takes $O(n+(s+1)M)$ operations and leaves at most $s+1\leq\min\{n+1,K+3\}$ rankings.

\noindentparagraph{\textbf{\textup{Correctness and Running Time.}}}
\begin{proof}[Proof of \Cref{thm:roi-pos-computation}]
Each admissible representative is valid throughout its cell with zero surplus outside $S$. The system in \Cref{eq:roi-pos-system} enforces the remaining ROI conditions. Every algebraic solution thus gives an exact ABE-R.

For completeness, take the sparse equilibrium from \Cref{lem:roi-pos-sparse}. Its nonunit multipliers lie in an enumerated $S$ and its profile in an enumerated cell. Canonical prefix completions preserve values and payments and satisfy the zero-surplus condition outside $S$. Compressing surplus on $S$ gives an enumerated support of at most $s+1$ rankings. Thus, a candidate system is feasible.

There are $O(n^{K+2})$ choices of $S$, at most $O(M^{2(K+2)})$ cells each, and $O(M^K)$ candidates per cell. Supports of at most $K+3$ give at most
\[
O\left(n^{K+2}M^{2(K+2)+K(K+3)}\right)
=
O\left(M^{K^2+6K+6}\right)
\]
candidate systems, using $n\leq M$.

Each system has $s+|J|\leq2s+1\leq2K+5$ variables, degree at most two, and $O(M^2+K)$ conditions. The exact feasibility bound of \Cref{sec:framework} costs $O(M^{O(K^2)})$ operations per system and in total. Prefix construction, filtering, and output fit this bound.
\end{proof}

\begin{proof}[Proof of \Cref{thm:main}]
The first structural principle follows from \Cref{thm:adv-computation,thm:roi-computation}, and the second structural principle follows from \Cref{thm:pos-computation,thm:roi-pos-computation}. Every advertiser owns at least one item, so $n\le M$, and the fixed-item case follows from the fixed-advertiser algorithms. Each algorithm computes exact algebraic multipliers and an explicit distribution over valid rankings.
\end{proof}

\section{Conclusion}

Exact budget and ROI autobidding equilibria in GSP sponsored shopping can be computed in polynomial time with either few advertisers or few display slots. Layerwise convexification gives a polynomial representation of ranking-dependent outcomes when the advertiser dimension is fixed. At the display boundary, threshold recovery and capacity elimination handle budgets, while outcome-preserving multiplier sparsity handles ROI. Each construction returns an explicit distribution over complete valid rankings.

A natural next question fixes the number of products per advertiser while allowing both advertisers and slots to grow. Tie layers can then contain many owners, and the display boundary may involve many relevant multipliers. Resolving equilibrium computation in this regime would further explain how catalog size and ownership structure contribute to the complexity of one global ranking.

\section*{AI Usage Disclosure}

The algorithms and the core ideas underlying this paper were developed independently by the authors without AI assistance. The formulation of the research questions, the development of the algorithmic approaches, and the main mathematical insights originated from the authors' own work.

During the preparation and revision of the manuscript, we used GPT-6.1-SOL to assist with checking mathematical derivations, examining the consistency of definitions and notation, and identifying potential gaps or ambiguities in the presentation of proofs. The model was also used to assist with preparing and refining figures, improving the organization and clarity of the exposition, and drafting and revising the \LaTeX{} source.

The authors evaluated the model's suggestions and independently reviewed and verified the resulting mathematical arguments, figures, and text. The authors take full responsibility for all algorithms, results, proofs, and statements in the paper, as well as for the accuracy and integrity of the final manuscript.

\bibliographystyle{plainnat}
\bibliography{references}

@misc{HLN26,
  author       = {Hansen, Kristoffer Arnsfelt and Li, Yuhao and Nie, Xinhao},
  title        = {Complexity of Autobidding Equilibria in {GSP} Sponsored Shopping},
  year         = {2026},
  howpublished = {arXiv:2610.13716},
  note         = {Version 1},
  doi          = {10.48550/arXiv.2610.13716},
  url          = {https://arxiv.org/abs/2610.13716v1}
}

@misc{HYWL26,
  author       = {Huang, Yiyang and Yan, Yonglei and Wang, Zihe and Liu, Zhengyang},
  title        = {Pacing Equilibria in Second-Price Auctions with Few Goods},
  year         = {2026},
  howpublished = {arXiv:2605.09332},
  doi          = {10.48550/arXiv.2605.09332},
  url          = {https://arxiv.org/abs/2605.09332}
}

@misc{DLPST26,
  author       = {D{\"u}tting, Paul and Li, Yuhao and Paes Leme, Renato and Spendlove, Kelly and Teng, Yifeng},
  title        = {Autobidding Equilibria in Sponsored Shopping},
  year         = {2026},
  howpublished = {arXiv:2602.21966},
  doi          = {10.48550/arXiv.2602.21966},
  url          = {https://arxiv.org/abs/2602.21966v2}
}

@article{YWL26,
  author  = {Yan, Yonglei and Wang, Zihe and Liu, Zhengyang},
  title   = {Pacing Equilibria in Second-Price Auctions with Few Buyers},
  journal = {Proceedings of the AAAI Conference on Artificial Intelligence},
  year    = {2026},
  volume  = {40},
  number  = {20},
  pages   = {17302--17309},
  doi     = {10.1609/aaai.v40i20.38782}
}

@inproceedings{ABM19,
  author    = {Aggarwal, Gagan and Badanidiyuru, Ashwinkumar and Mehta, Aranyak},
  title     = {Autobidding with Constraints},
  booktitle = {Web and Internet Economics},
  editor    = {Caragiannis, Ioannis and Mirrokni, Vahab and Nikolova, Evdokia},
  series    = {Lecture Notes in Computer Science},
  volume    = {11920},
  year      = {2019},
  pages     = {17--30},
  publisher = {Springer},
  doi       = {10.1007/978-3-030-35389-6_2}
}

@article{CKSS22,
  author  = {Conitzer, Vincent and Kroer, Christian and Sodomka, Eric and Stier-Moses, Nicolas E.},
  title   = {Multiplicative Pacing Equilibria in Auction Markets},
  journal = {Operations Research},
  year    = {2022},
  volume  = {70},
  number  = {2},
  pages   = {963--989},
  doi     = {10.1287/opre.2021.2167}
}

@article{CKK24,
  author  = {Chen, Xi and Kroer, Christian and Kumar, Rachitesh},
  title   = {The Complexity of Pacing for Second-Price Auctions},
  journal = {Mathematics of Operations Research},
  year    = {2024},
  volume  = {49},
  number  = {4},
  pages   = {2109--2135},
  doi     = {10.1287/moor.2022.0009}
}

@article{CKPSSSW22,
  author  = {Conitzer, Vincent and Kroer, Christian and Panigrahi, Debmalya and Schrijvers, Okke and Stier-Moses, Nicolas E. and Sodomka, Eric and Wilkens, Christopher A.},
  title   = {Pacing Equilibrium in First Price Auction Markets},
  journal = {Management Science},
  year    = {2022},
  volume  = {68},
  number  = {12},
  pages   = {8515--8535},
  doi     = {10.1287/mnsc.2022.4310}
}

@article{EOS07,
  author  = {Edelman, Benjamin and Ostrovsky, Michael and Schwarz, Michael},
  title   = {Internet Advertising and the Generalized Second-Price Auction: Selling Billions of Dollars Worth of Keywords},
  journal = {American Economic Review},
  year    = {2007},
  volume  = {97},
  number  = {1},
  pages   = {242--259},
  doi     = {10.1257/aer.97.1.242}
}

@article{Varian07,
  author  = {Varian, Hal R.},
  title   = {Position Auctions},
  journal = {International Journal of Industrial Organization},
  year    = {2007},
  volume  = {25},
  number  = {6},
  pages   = {1163--1178},
  doi     = {10.1016/j.ijindorg.2006.10.002}
}

@book{BPR06,
  author    = {Basu, Saugata and Pollack, Richard and Roy, Marie-Fran{\c{c}}oise},
  title     = {Algorithms in Real Algebraic Geometry},
  edition   = {2},
  series    = {Algorithms and Computation in Mathematics},
  volume    = {10},
  year      = {2006},
  publisher = {Springer},
  doi       = {10.1007/3-540-33099-2}
}

@book{Sch03,
  author    = {Schrijver, Alexander},
  title     = {Combinatorial Optimization: Polyhedra and Efficiency},
  year      = {2003},
  publisher = {Springer}
}

@book{Ziegler95,
  author    = {Ziegler, G{\"u}nter M.},
  title     = {Lectures on Polytopes},
  series    = {Graduate Texts in Mathematics},
  volume    = {152},
  year      = {1995},
  publisher = {Springer},
  doi       = {10.1007/978-1-4613-8431-1}
}

@book{Edelsbrunner87,
  author    = {Edelsbrunner, Herbert},
  title     = {Algorithms in Combinatorial Geometry},
  series    = {Monographs in Theoretical Computer Science. An EATCS Series},
  volume    = {10},
  year      = {1987},
  publisher = {Springer},
  doi       = {10.1007/978-3-642-61568-9}
}

@inproceedings{LucierPSZ24,
  author    = {Lucier, Brendan and Pattathil, Sarath and Slivkins, Aleksandrs and Zhang, Mengxiao},
  title     = {Autobidders with Budget and {ROI} Constraints: Efficiency, Regret, and Pacing Dynamics},
  booktitle = {Proceedings of the Thirty-Seventh Conference on Learning Theory},
  series    = {Proceedings of Machine Learning Research},
  volume    = {247},
  pages     = {3642--3643},
  year      = {2024},
  publisher = {PMLR},
  url       = {https://proceedings.mlr.press/v247/lucier24a.html}
}

@article{MehtaSVV07,
  author  = {Mehta, Aranyak and Saberi, Amin and Vazirani, Umesh V. and Vazirani, Vijay V.},
  title   = {{AdWords} and Generalized Online Matching},
  journal = {Journal of the ACM},
  year    = {2007},
  volume  = {54},
  number  = {5},
  pages   = {22:1--22:19},
  doi     = {10.1145/1284320.1284321}
}

@inproceedings{FeldmanMPS07,
  author    = {Feldman, Jon and Muthukrishnan, S. and P{\'a}l, Martin and Stein, Cliff},
  title     = {Budget Optimization in Search-Based Advertising Auctions},
  booktitle = {Proceedings of the 8th ACM Conference on Electronic Commerce},
  year      = {2007},
  pages     = {40--49},
  publisher = {ACM},
  doi       = {10.1145/1250910.1250917}
}

@inproceedings{BorgsCEIJM07,
  author    = {Borgs, Christian and Chayes, Jennifer T. and Etesami, Omid and Immorlica, Nicole and Jain, Kamal and Mahdian, Mohammad},
  title     = {Dynamics of Bid Optimization in Online Advertisement Auctions},
  booktitle = {Proceedings of the 16th International Conference on World Wide Web},
  year      = {2007},
  pages     = {531--540},
  publisher = {ACM},
  doi       = {10.1145/1242572.1242644}
}

@inproceedings{EvenDarMMMN09,
  author    = {Even-Dar, Eyal and Mansour, Yishay and Mirrokni, Vahab S. and Muthukrishnan, S. and Nadav, Uri},
  title     = {Bid Optimization for Broad Match Ad Auctions},
  booktitle = {Proceedings of the 18th International Conference on World Wide Web},
  year      = {2009},
  pages     = {231--240},
  publisher = {ACM},
  doi       = {10.1145/1526709.1526741}
}

@inproceedings{DengMMZ21,
  author    = {Deng, Yuan and Mao, Jieming and Mirrokni, Vahab S. and Zuo, Song},
  title     = {Towards Efficient Auctions in an Auto-Bidding World},
  booktitle = {Proceedings of the Web Conference 2021},
  year      = {2021},
  pages     = {3965--3973},
  publisher = {ACM},
  doi       = {10.1145/3442381.3450052}
}

@inproceedings{BalseiroDMMZ21,
  author    = {Balseiro, Santiago R. and Deng, Yuan and Mao, Jieming and Mirrokni, Vahab S. and Zuo, Song},
  title     = {The Landscape of Auto-Bidding Auctions: Value versus Utility Maximization},
  booktitle = {Proceedings of the 22nd ACM Conference on Economics and Computation},
  year      = {2021},
  pages     = {132--133},
  publisher = {ACM},
  doi       = {10.1145/3465456.3467607}
}

@inproceedings{DengMMMZZ24,
  author    = {Deng, Yuan and Mahdian, Mohammad and Mao, Jieming and Mirrokni, Vahab S. and Zhang, Hanrui and Zuo, Song},
  title     = {Efficiency of the Generalized Second-Price Auction for Value Maximizers},
  booktitle = {Proceedings of the ACM Web Conference 2024},
  year      = {2024},
  pages     = {46--56},
  publisher = {ACM},
  doi       = {10.1145/3589334.3645360}
}

@inproceedings{DengGJLM24,
  author    = {Deng, Yuan and Golrezaei, Negin and Jaillet, Patrick and Liang, Jason Cheuk Nam and Mirrokni, Vahab S.},
  title     = {Individual Welfare Guarantees in the Autobidding World with Machine-Learned Advice},
  booktitle = {Proceedings of the ACM Web Conference 2024},
  year      = {2024},
  pages     = {267--275},
  publisher = {ACM},
  doi       = {10.1145/3589334.3645660}
}

@article{BalseiroKMM21,
  author  = {Balseiro, Santiago R. and Kim, Anthony and Mahdian, Mohammad and Mirrokni, Vahab S.},
  title   = {Budget-Management Strategies in Repeated Auctions},
  journal = {Operations Research},
  year    = {2021},
  volume  = {69},
  number  = {3},
  pages   = {859--876},
  doi     = {10.1287/opre.2020.2073}
}

@inproceedings{ZhuWYWYC09,
  author    = {Zhu, Yunzhang and Wang, Gang and Yang, Junli and Wang, Dakan and Yan, Jun and Chen, Zheng},
  title     = {Revenue Optimization with Relevance Constraint in Sponsored Search},
  booktitle = {Proceedings of the Third International Workshop on Data Mining and Audience Intelligence for Advertising},
  year      = {2009},
  pages     = {55--60},
  publisher = {ACM},
  doi       = {10.1145/1592748.1592756}
}

@inproceedings{ZhangWMMHW18,
  author    = {Zhang, Weiru and Wei, Chao and Meng, Xiaonan and Hu, Yi and Wang, Hao},
  title     = {The Whole-Page Optimization via Dynamic Ad Allocation},
  booktitle = {Companion Proceedings of the Web Conference 2018},
  year      = {2018},
  pages     = {1407--1411},
  publisher = {International World Wide Web Conferences Steering Committee},
  doi       = {10.1145/3184558.3191584}
}

@inproceedings{WuHHL18,
  author    = {Wu, Liang and Hu, Diane and Hong, Liangjie and Liu, Huan},
  title     = {Turning Clicks into Purchases: Revenue Optimization for Product Search in E-Commerce},
  booktitle = {Proceedings of the 41st International ACM SIGIR Conference on Research and Development in Information Retrieval},
  year      = {2018},
  pages     = {365--374},
  publisher = {ACM},
  doi       = {10.1145/3209978.3209993}
}

@inproceedings{GeZLI21,
  author    = {Ge, Zhen and Zhou, Wei and Lute, Jesse and Ilardi, Adam},
  title     = {Relevance Constrained Re-Ranking in Sponsored Listing Recommendations},
  booktitle = {Proceedings of the 27th ACM SIGKDD Workshop on Data Mining for Online Advertising},
  year      = {2021},
  pages     = {1--7},
  publisher = {ACM}
}

@inproceedings{DashGMCKG24,
  author    = {Dash, Abhisek and Ghosh, Saptarshi and Mukherjee, Animesh and Chakraborty, Abhijnan and Gummadi, Krishna P.},
  title     = {Sponsored Is the New Organic: Implications of Sponsored Results on Quality of Search Results in the Amazon Marketplace},
  booktitle = {Proceedings of the AAAI/ACM Conference on AI, Ethics, and Society},
  year      = {2024},
  volume    = {7},
  pages     = {316--327},
  doi       = {10.1609/aies.v7i1.31639}
}

@misc{LuZZ26,
  author       = {Lu, Haihao and Zhang, Luyang and Zhu, Yuting},
  title        = {The Power of Linear Programming in Sponsored Listings Ranking: Evidence from a Large-Scale Field Experiment},
  year         = {2026},
  howpublished = {arXiv:2403.14862},
  doi          = {10.48550/arXiv.2403.14862}
}

@inproceedings{BalseiroBF24,
  author    = {Balseiro, Santiago R. and Bhawalkar, Kshipra and Feng, Zhe and Lu, Haihao and Mirrokni, Vahab and Sivan, Balasubramanian and Wang, Di},
  title     = {A Field Guide for Pacing Budget and {ROS} Constraints},
  booktitle = {Proceedings of the 41st International Conference on Machine Learning},
  series    = {Proceedings of Machine Learning Research},
  volume    = {235},
  pages     = {2607--2638},
  year      = {2024},
  publisher = {PMLR},
  url       = {https://proceedings.mlr.press/v235/balseiro24a.html}
}

@inproceedings{LT24,
  author    = {Li, Juncheng and Tang, Pingzhong},
  title     = {Vulnerabilities of Single-Round Incentive Compatibility in Auto-bidding: Theory and Evidence from {ROI}-Constrained Online Advertising Markets},
  booktitle = {Proceedings of the Thirty-Third International Joint Conference on Artificial Intelligence},
  pages     = {2886--2894},
  year      = {2024},
  publisher = {International Joint Conferences on Artificial Intelligence Organization},
  doi       = {10.24963/ijcai.2024/320},
  url       = {https://www.ijcai.org/proceedings/2024/320}
}

\end{document}